\documentclass[11pt,a4paper]{article}
\usepackage[dvips]{graphicx}
\usepackage[utf8]{inputenc}
\usepackage{enumerate}
\usepackage[spanish,english]{babel}
\usepackage{amsthm,amssymb,amsfonts,amsmath,esint}
\usepackage[ruled,linesnumbered]{algorithm2e}
\usepackage{lineno}
\usepackage{hyperref}
\usepackage{caption}
\usepackage{subcaption}
\usepackage{xcolor}
\usepackage{cite}
\usepackage{pifont}
\usepackage{enumerate}
\usepackage{appendix}

\textwidth=15.5cm \textheight=22cm

\oddsidemargin=0.5cm \evensidemargin=0cm \headheight=0cm
\headsep=0cm

\newtheorem{theorem}{Theorem}[section]

\newtheorem{proposition}{Proposition}[section]

\newtheorem{lemma}{Lemma}[section]
\newtheorem{corollary}{Corollary}[section]

\newtheorem{property}{Property}[section]

\theoremstyle{remark}
\newtheorem{remark}{Remark}[section]


\graphicspath{{figures/}}


\newcommand{\T}{\ensuremath{{\mathcal{T}}_{v}}\xspace}
\newcommand{\PP}{\ensuremath{{P}_{\ell}}\xspace}
\newcommand{\NL}{\ensuremath{{G}_{\ell}}\xspace}
\newcommand{\N}{\ensuremath{{G}\xspace}}
\newcommand{\W}{W}


\title{ Continuous mean distance of a weighted graph}

\author{\normalfont\fontsize{12}{14}\selectfont
Garijo, Delia$^1$\and
M\'arquez, Alberto$^1$\and
Silveira, Rodrigo I. $^2$
}

\date{}
\begin{document}

\maketitle

\addtocounter{footnote}{1}  \footnotetext{Departamento de Matem\'atica Aplicada I, Universidad de Sevilla, Avda. Reina Mercedes s/n, 41012, Sevilla, Spain. Emails: \{dgarijo, almar\}@us.es.}
\addtocounter{footnote}{1} \footnotetext{Departament de Matem\`{a}tiques, Universitat Polit\`{e}cnica de Catalunya, Jordi Girona 1-3, 08034, Barcelona, Spain. 
Email: rodrigo.silveira@upc.edu.}

\begin{abstract}
We study the concept of the \emph{continuous mean distance} of a  weighted graph.
For connected unweighted graphs, the \emph{mean distance} can be defined as the arithmetic mean of the distances between all pairs of vertices.
 This parameter provides a natural measure of the compactness of the graph, and has been intensively studied, together with several variants, including its version for weighted graphs.
The \emph{continuous} analog of the (discrete) mean distance is the mean of the distances between all pairs of \emph{points} on the edges of the graph. Despite being a very natural generalization, to the best of our knowledge this concept has been barely studied, since 
the jump from discrete to continuous implies having to deal with an infinite number of distances, something that increases the difficulty of the parameter. 

In this paper, we show that the continuous mean distance of a weighted graph can be computed in time roughly quadratic in the number of edges, by two different methods 
that apply fundamental concepts in discrete algorithms and computational geometry. We also present structural results that allow for a faster computation of this continuous parameter for several classes of weighted graphs. Finally, we study the relation between the (discrete) mean distance and its continuous counterpart, mainly focusing on the relevant question of convergence when iteratively subdividing the edges of the weighted graph.
\end{abstract}

\section{Introduction}

Distances are one of the most essential aspects in the analysis of graphs, regardless of whether they originate in geography, transportation, sociology, or communications.
The maximum distance between any two nodes in the graph, known as the \emph{diameter},  provides a worst-case scenario in terms of distances, and gives the maximum \emph{eccentricity} in the graph.
Similarly, the \emph{average} or \emph{mean distance} is related to centrality, and provides a measure of the compactness of the graph.
In this work we will focus on the latter, the mean distance.

The \emph{mean distance} of a connected unweighted graph $\N=(V(\N),E(\N))$ was first introduced by March and Steadman \cite[Chap.14]{MS71} in the context of architecture to compare floor plans, although interest in the concept dates back to the work of Wiener in chemistry~\cite{Wiener47} (after whom the closely related \emph{Wiener index}, the sum of all pairwise distances in the graph,  is named).

The most usual way to define the mean distance $\mu(\N)$ is as the arithmetic mean of all nonzero distances between vertices, i.e., 
\begin{equation}
\label{eq:mean_classic}
   \mu(\N)=\frac{1}{{|V(\N)| \choose 2}}\sum_{\{u,v\}\subseteq V(G)} d(u,v),
\end{equation}
where $|V(G)|\geq 2$, $d(u,v)$ is the length of a shortest path connecting vertices $u$ and $v$, and the sum is taken over all unordered pairs of vertices in the graph.

In the context of graph theory, Doyle and Graver~\cite{DG77} were the first to propose the mean distance as a graph parameter.
Since them, it has been intensively studied.
For very simple graphs, the mean distance is well-understood.
For instance, it is 1 in any complete graph, and $(n+1)/3$ if the graph is an $n$-vertex path.
However, as soon as the graph becomes more complicated, the expression for its mean distance becomes much more elusive.
In addition to presenting exact expressions for a few specific graph classes~\cite{B76,BS81,DG77}, most previous work has focused on proving lower and upper bounds on the mean distance as a function of parameters such as the number of vertices~\cite{DG77,EJS76,P84}, number of vertices and edges~\cite{olts1991TransmissionIG}, and connectivity~\cite{Favaron89}.
Considerable effort was also put into understanding the relation between the mean distance and the minimum vertex degree~\cite{Kouider97}, as well as some spectral graph properties~\cite{Merris89,M91,Rodriguez99}.

The concept of mean distance has also been extended to weighted graphs, both for vertex weights~\cite{D12}, and for rational edge weights~\cite{DG82mds,DG82srmd}.
A few have also studied the concept for directed graphs~\cite{ng1966finite,P84}.
%

As mentioned above, a concept that is closely related to the mean distance is the Wiener index, defined as the sum of distances between all (unordered) pairs of vertices in the graph.
The Wiener index has been studied extensively (both for unweighted and  weighted graphs) due to its important applications in chemical graph theory~\cite{Nikolic95}, but has also received attention in other areas, such as mathematics~\cite{KnorST16} and social graph analysis\cite{Otte02}, and it is still the topic of active investigation (see, e.g.,~\cite{Singh2021}).
From the point of view of computation, there have been important efforts in computer science to understand how efficiently the Wiener index can be computed.
While it is immediate to obtain a roughly quadratic-time algorithm that computes each distance in the graph (e.g., by solving an all-pairs shortest path problem), the challenge is to understand in which situations this can be done more efficiently.
Since for arbitrary graphs it is known that this is not possible unless the strong exponential time hypothesis (SETH) fails~\cite{RodittyW13},
the focus has been on identifying classes of graphs for which the Wiener index can be computed in subquadratic time.
Some cases for which this has been shown to be possible is for graphs with bounded treewidth\cite{CabelloK09,AbboudWW16}, and most notably for planar graphs with non-negative cycles~\cite{Cabello2019}.

Going back to the mean distance, a different direction was adopted by Doyle and Graver~\cite{DG82mds,DG82srmd}, who introduced the mean distance of a \emph{shape}.
This is defined for any weighted graph embedded in the plane.
Each edge of the graph is iteratively subdivided into shorter edges, so that the edge lengths approach zero.
The mean distance of the shape is then defined as the limit of the mean distance of such a sequence of refinements.
While this is a natural definition, its computation is involved.
Doyle and Graver managed  to  compute its exact value for seven specific types of simple  graphs (i.e., a path, a Y-shape, an H-shape, a cross, and three more) and six rather specific families of graphs; the most general ones being  cycles and stars with $k$ edges of length $1/k$.
Examples of the more specific families studied are graphs consisting of one edge with two edges attached at each endpoint, and (multi)graphs consisting of three edges sharing both endpoints, in both cases for very constrained edge lengths.
A summary of these formulas is given in \cite{DG82srmd}; they are obtained as a consequence of the techniques developed in \cite{DG82mds}, mainly, for trees and the so-called \emph{geometric} shapes.

In this paper, we study the mean distance in a \emph{continuous} setting, in a spirit very similar to that of the shapes of Doyle and Graver~\cite{DG82mds,DG82srmd}. Our main motivation arises from \emph{geometric graphs}.
A geometric graph is an undirected graph where each vertex is a two-dimensional point, and each edge is a straight line segment between the corresponding two points. 
Geometric graphs appear naturally in many applications, for instance in road, river or computer graphs.
Unlike abstract graphs, in geometric graphs distances are not only defined for pairs of vertices, but they exist for any two points on the graph, including points on the interior of edges.
Therefore, the concept of mean distance generalizes naturally to (weighted) geometric graphs, defined as the average distance between all \emph{pairs of points} on edges of the graph.
While being a natural definition, the jump from discrete to continuous implies that now the mean is the sum of an infinite number of distances, something that changes the properties of this index and makes its computation difficult.
In this paper, we study this concept in depth, with the focus on the computational aspects of the continuous mean distance, and on understanding how much it differs from the vastly studied discrete mean distance. 

In particular, our main contributions are:
\begin{itemize}
    \item We show that the continuous mean distance of a weigthed graph with $m$ edges can be computed in  $O(m^2)$ time, once all pair-wise distances between vertices have been computed. To this end, we present two different methods, 
    one based on a generalization of shortest path trees to continuous distances, and one based on Voronoi diagrams for the $L_1$ (or Manhattan) metric.
    See Section~\ref{sec:computation}.
    
    \item We present several structural results that allow a faster computation of the continuous mean distance for several classes of weighted graphs.
    In particular, we give an exact expression for complete graphs where all edges have the same length, and efficient algorithms for families of graphs that have a cut vertex, which include weighted trees and weighted cactus graphs.
    See Section~\ref{sec:particular_cases}.
    
    \item We study the relation between the discrete mean distance and the continuous counterpart.
    After establishing some relations between them in Section~\ref{sec:relation}, we move to the relevant question of convergence: When does iteratively subdividing edges and computing the discrete mean distance converge to the continuous mean distance? While a definitive answer to this question does not seem possible, in Section~\ref{sec:convergence_subdivisions} we study a refining procedure that gives a guarantee on how much the discrete and continuous means can differ as the weighted graph is  iteratively refined. The bounds obtained are tight for some graphs classes, such as trees where all edges have the same length.
\end{itemize}

Next we present our problem formally.

\section{Preliminaries}

Let $\N=(V(\N),E(\N))$  be a connected graph\footnote{All graphs considered in this work are assumed to be connected.} with $n$ vertices and $m$ edges; when no confusion may arise, we indistinctly write $V$ or $V(\N)$ and $E$ or $E(\N)$. Consider a function $\omega: E\longrightarrow \mathbb{R}^+$ that assigns a positive weight $\omega(e)$ to each edge $e \in E$. The value $\omega(e)$ is called the \emph{length} of edge $e$, and is also denoted by $|e|$. In general, given a subset of edges $E'\subseteq E$, its \emph{weight} or \emph{length} is $|E'|=\sum_{e\in E'} \omega(e)$.

Graph $\N$ together with function $\omega$ is a weighted graph where every edge can be identified with a line segment of length $\omega(e)$ in the Euclidean plane.
 Thus, every point $p$ on an edge $e=uv$ can be expressed as $p=\lambda_p v+(1-\lambda_p)u$ for some $\lambda_p \in [0,1]$. Let $\NL$ be the set of all points that are on the edges on $\N$.
 Note that this definition not only includes all geometric graphs, but  also covers other graphs that are not geometric. A simple example of such a graph is a triangle where two edges have length 1 and the third one has length 2; such a graph cannot be realized with three straight line segments, since it would require the longer edge to overlap with the two shorter ones. 
 
 We point out that all graphs considered in this work are connected and weighted, although both terms will be in general omitted as it is understood from the context. We will also consider \emph{uniform} graphs: graphs where all edges have the same length. We will write \emph{$\alpha$-uniform} to refer to a uniform graph where all edge lengths are $\alpha$.

Let $p, q$ be two points on $\NL$ that are not both  on the interior of the same edge. A \emph{path} $\PP$ between $p$ and $q$, also called \emph{$pq$-path}, is a sequence $pu_1\dots u_kq$ where 
and the \emph{distance} $d(p,q)$ between $p$ and $q$ on ${\N}_{\ell}$ is the length of a shortest path connecting the two points. When the two points $p,q$ are on the interior of the same edge $u_0u_1$ and, say $\lambda_p<\lambda_q$, we have paths between $p$ and $q$ that go through vertices (whose definition is analogous to the above one) but also a path in the interior of the edge that is the segment connecting $p$ and $q$ (edges are identified with segments), and its length is $(\lambda_q-\lambda_p)\omega(u_0u_{1})$.
In this paper, we shall assume that the distance between the two endpoints of any edge $e$ is $|e|$.
The set of points $\NL$ together with this distance function is a metric space, and it will be treated indistinctly as a graph (with vertex set $V(\NL)=V(\N)$ and edge set $E(\NL)=E(\N)$) or as a point set. The distance between an edge $e=uv$ and a point $p\notin e$ is $d(p,e)={\rm min}\{d(p,u),d(p,v)\}$ (if $p \in e$, $d(p,e)=0$), and the distance between two edges $e$ and $e'=ab$ is $d(e,e')={\rm min}\{d(a,e),d(b,e)\}$. 

We begin by defining the variant of the \emph{discrete mean distance} that we will consider in the remainder of this work.
The definition below differs from the Equation~(\ref{eq:mean_classic}) of $\mu(\N)$  in two aspects: 
(i) it considers \emph{all} pairs of distances, including those that are zero, and
(ii) it considers \emph{ordered} pairs of vertices:
\begin{equation}\label{eq:discrete}
\mu_d(\N)=\frac{1}{n^2}\displaystyle\sum_{(u,v) \in V\times V} d(u,v) = \frac{2\W(\N)}{n^2},    
\end{equation}
where $\W(\N)$ denotes the Wiener index of $\N$.
Observe that $\mu_d(\N)$ is the arithmetic mean of the entries of the distance matrix of the graph.
Although this alternative form of mean distance has been considered before~\cite{mean_dist_M}, our motivation for studying it comes from the fact that it extends better to the continuous mean distance (which is the subject of this paper) in a limiting process when iteratively subdividing the edges of the graph.
In particular, it will allow us to establish a clear relation between the discrete and the continuous mean distance.

To define formally the continuous mean distance of a weighted graph, we start by defining it between a point and a set of edges.
Given a point $p \in \NL$ and a subset of edges  $E'\subseteq  E(\NL)$, the \emph{continuous mean distance} between $p$ and $E'$ is
\begin{equation}\label{eq:def0}
{\mu_c(p,E')}={\frac {1}{|E'|}}\int_{q \in E'} d(p,q)\, dq.
\end{equation}
For subsets of edges $E', E'' \subseteq E(\NL)$, the \emph{continuous mean distance} between $E'$ and $E''$ is
\begin{equation}\label{eq:def1}
\mu_c(E',E'')
  =
  \frac {1}{|E'||E''|} \iint_{p\in E', \, q \in E''} d(p,q)\,dp\,dq.
\end{equation}
With some abuse of notation, we shall write $\mu_c(p,\N')$ or $\mu_c(\N',\N'')$, where $\N'$ and $\N''$ are the graphs with edge sets $E'$ and $E''$, respectively. 

Based on the previous, the \emph{continuous mean distance} of the weighted graph $\NL$ is defined as 
\begin{equation}\label{eq:def2}
\mu_c(\NL)=\mu_c(E(\NL),E(\NL)).
\end{equation}

\begin{remark} \label{rem:homotecia}
In \cite[Equations (1) and (3)]{DG82mds}, the authors show that the discrete mean distance of an $\alpha$-uniform graph $\N$ can be easily deduced from the $1$-uniform case.
They use definition (\ref{eq:mean_classic}) and $\alpha \in \mathbb{Q}^+$: $\mu(\N)=\alpha \mu(\N_1)$ where $\N_1$ is the corresponding $1$-uniform graph. This can be naturally extended to the variant $\mu_d(\N)$ and $\alpha \in \mathbb{R}^+$,
and by elementary properties of integration, a similar formula holds for the continuous mean distance   even when the graph $\NL$ is not uniform: $\mu_c(\NL')=\beta \mu_c(\NL)$ where $\NL'$ is the graph obtained by simply multiplying all edge lengths of $\NL$ by $\beta$.
\end{remark}

The following observations, which follow directly from Equation~(\ref{eq:def1}), will be used throughout this work.


\begin{remark} \label{rem:division_edge} 
Let $ab$ and $uv$ be two edges in $E(\NL)$. 
\begin{itemize}
\item[(i)] For any point $p \in uv$ we have
\[ 
\mu_c(uv, ab) = \frac{|up|\mu_c(up, ab) + |pv|\mu_c(pv, ab)}{|uv|}.
\]
\item[(ii)] For each point $p\in uv$ and each point $q\in ab$,
if $d(p, q) = d(p, v)+d(v, q)$,
we have
\[\mu_c(uv, ab) = \mu_c(uv, v)+\mu_c(v, ab).\]
\end{itemize}
\end{remark}


\paragraph{An example: paths}
It is illustrative to see how the continuous mean distance can differ from the discrete version.
Here we illustrate this for the important case of paths.

Consider a $1$-uniform path $P$, i.e., a graph consisting of a path with $n$ vertices and all edges of length 1. The discrete mean distance of such a path is known to be $\mu_d(P)=(n^2-1)/3n$~\cite{ mean_dist_M}. By Remark \ref{rem:homotecia}, this generalizes to $\mu_d(P)=\alpha (n^2-1)/3n$ when $P$ is $\alpha$-uniform (its total length is $\alpha(n-1)$). For non-uniform paths $P$, there is no closed formula to compute $\mu_d(P)$.
In contrast, it is possible to obtain a closed formula for $\mu_c(\PP)$, for any path with arbitrary positive real edge lengths, as explained next.
 
First observe that, for the continuous mean distance, the number of interior nodes in a path does not play any role, thus we can consider the path as one single edge.
Hence a path $\PP$ of length $t \in \mathbb{R}_{>0}$ can be seen as the interval $[0,t]$. For a point $x\in [0,t]$, let $d(x,[0,t])$ denote the function that gives the distance between $x$ and any other point $x'$ in the interval $[0,t]$; the shape of this function is illustrated in Figure~\ref{fig:interval}.
The mean value of $d(x,[0,t])$ is $\frac{1}{2t} (x^2+(x-t)^2)$.\footnote{Recall that the mean value of a function $f$ over an interval $[a,b]$ is
${\frac {1}{b-a}}\int _{a}^{b}f(x)\,dx.$} Thus,
\begin{equation}
\label{eq:path_formula}
\mu_c(\PP)=\frac{1}{t} \int_{0}^{t} \frac{1}{2t} (x^2+(x-t)^2)\, dx=\frac{t}{3}.
\end{equation}
Based on a different approach, the same value was given in~\cite{DG82mds} (see also \cite{DG82srmd}) for $\alpha$-uniform paths $\PP$ with $\alpha\in \mathbb{Q^+}$.

\begin{figure}[ht]
\centering
\includegraphics{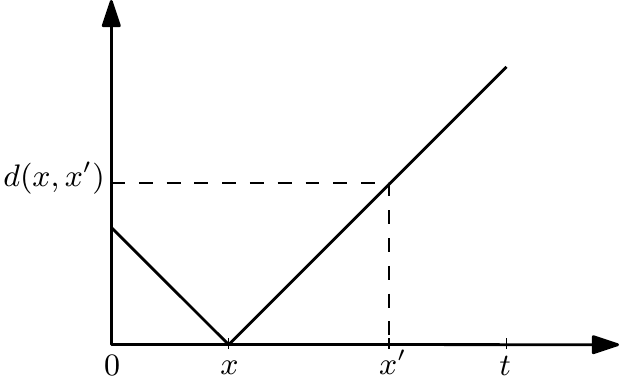}
\caption{Function $d(x,[0,t])$.}
\label{fig:interval}
\end{figure}

\section{Computation of the continuous mean distance}
\label{sec:computation}

The continuous nature of the continuous mean distance makes its computation non-trivial, as exemplified by the seemingly simple case of paths. 
In this section, we show that despite this,  $\mu_c(\NL)$  can be computed rather efficiently, in time roughly quadratic in the number of edges of $\NL$.
We will show how this can be achieved in two different ways, which apply some fundamental concepts in discrete algorithms and computational geometry: that of shortest path trees and that of Voronoi diagrams for the $L_1$ (or Manhattan) metric.
We highlight that the relation between the continuous mean distance and these two ubiquitous structures is interesting on its own. 

The main result of this section is the following.

\begin{theorem} \label{th:quadratic}
 The continuous mean distance of a  weighted graph $\NL$ with $n$ vertices and $m$ edges can be computed in $O(m^2+A(n,m))$ time, where $A(n,m)$ is the  time required to compute all vertex-to-vertex distances in \NL.
\end{theorem}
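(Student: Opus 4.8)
The plan is to exploit the additivity of the integral in Equation~(\ref{eq:def1}) and reduce the computation to independent, constant-time calculations, one per ordered pair of edges. Writing $L=|E(\NL)|$ for the total edge length and decomposing the domain of integration edge by edge, I would use
\begin{equation*}
\mu_c(\NL)=\frac{1}{L^2}\sum_{(e,e')\in E\times E} I(e,e'), \qquad I(e,e')=\iint_{p\in e,\,q\in e'} d(p,q)\,dp\,dq .
\end{equation*}
There are exactly $m^2$ ordered pairs, so once every vertex-to-vertex distance is available (at cost $A(n,m)$), it suffices to evaluate each $I(e,e')$ in $O(1)$ time.

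The key structural observation I would establish first is that, on a fixed pair of edges, $d(\cdot,\cdot)$ is the minimum of a constant number of affine functions. Fixing $e=uv$ and $e'=ab$, parametrize $p\in e$ by its arc length $s\in[0,|e|]$ from $u$ and $q\in e'$ by its arc length $t\in[0,|e'|]$ from $a$. Since $p$ and $q$ lie in the interior of their edges, any shortest $pq$-path must leave $e$ through $u$ or $v$ and enter $e'$ through $a$ or $b$; moreover, because each edge is itself a shortest path between its endpoints (the standing assumption, giving $d(a,q)=t$ and $d(b,q)=|e'|-t$), this yields
\begin{equation*}
d(p,q)=\min\bigl\{\, s+d(u,a)+t,\; s+d(u,b)+|e'|-t,\; (|e|-s)+d(v,a)+t,\; (|e|-s)+d(v,b)+|e'|-t \,\bigr\}.
\end{equation*}
All four coefficients are read off from the precomputed distances, and the formula stays valid when $e$ and $e'$ share a vertex (the relevant vertex-to-vertex distance being then $0$).

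Integrating this integrand over the rectangle $[0,|e|]\times[0,|e'|]$ is an elementary, constant-size task, which is the heart of the argument. The minimum of four affine functions is piecewise linear, and the at most $\binom{4}{2}=6$ pairwise-equality lines partition the rectangle into $O(1)$ convex cells on each of which a single affine function is minimal; since integrating an affine function over a convex polygon is immediate, $I(e,e')$ follows in $O(1)$ time. Equivalently, I would integrate first in $s$ (a minimum of two affine functions, with a single crossover) and then in $t$ (a piecewise polynomial with $O(1)$ breakpoints coming from $d(u,\cdot)$, $d(v,\cdot)$ and the crossover), obtaining a closed form; the diagonal terms $e=e'$ collapse separately to $\int_0^{|e|}\!\int_0^{|e|}|s-t|\,ds\,dt=|e|^3/3$. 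Summing the $m^2$ values and dividing by $L^2$ then gives $\mu_c(\NL)$ within the claimed $O(m^2+A(n,m))$ bound.

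The hard part will be the careful handling of this minimum-of-affine-functions integration: one must track how the cell decomposition (equivalently, the breakpoints) of $d(p,q)$ varies with the endpoint distances and verify that degenerate configurations (coincident lines, crossovers falling outside the rectangle, adjacent or parallel edges) are absorbed by the same constant-time formulas. This bookkeeping is exactly what the two concrete realizations are meant to streamline, organizing the per-point distance function either through a continuous analog of shortest path trees or through $L_1$ (Manhattan) Voronoi diagrams; but for the stated complexity bound the per-pair $O(1)$ evaluation already suffices.
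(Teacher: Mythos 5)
Your proposal is correct and takes essentially the same route as the paper's second method (Subsection~\ref{subsec:roof}): your ``minimum of four affine functions'' is exactly the lower envelope of the four planes of Lemma~\ref{lem:two_edges}, the $O(1)$ cells on which a single affine function is minimal are the additively weighted $L_1$ Voronoi regions of the rectangle's corners, and integrating an affine function over each cell is the paper's truncated-prism volume computation. The surrounding reduction --- Equation~(\ref{eq:sum}) turning $\mu_c(\NL)$ into $m^2$ constant-time pairwise terms after the $A(n,m)$ all-pairs computation, with the diagonal term $|e|/3$ --- is also the paper's.
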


To prove the preceding theorem, we use the following formula, which states that $\mu_c(\NL)$ can be obtained as a weighted sum of the continuous mean distances of all \emph{ordered} pairs of edges; this is simply a consequence of Equations~(\ref{eq:def1})--(\ref{eq:path_formula}), and elementary properties of integration. 

\begin{equation} \label{eq:sum}
\mu_c(\NL) = \frac{1}{|E|^2}\left( \sum_{(e,e')\in E \times E, e \neq e'}{\mu_c(e,e')  |e||e'| + \sum_{e \in E}{\frac{|e|}{3}  |e|^2 }} \right).
\end{equation}

This fact reflects that understanding how the continuous mean distance behaves in the case of two edges is the key tool to compute it for the whole graph. In the next subsections, we present our two different approaches for the two-edge case. Theorem  \ref{th:spt} as well as Theorem \ref{cor:floor} let us conclude that the continuous mean distance between two edges can be computed in constant time, once the distance matrix of the vertices of the graph $\N$ has been computed.
The currently best algorithm to compute all-pairs shortest paths for a graph with real weights has running time $A(n,m)=O(nm \log \alpha(m, n))$~\cite{pr-sparwug-05}, where $\alpha(m,n)$ is the extremely slowly growing inverse of the Ackermann function.
We note that for some special graph classes faster algorithms are known, such as planar graphs with non-negative edge weights (where $A(m,n)=O(n^2)$~\cite{HENZINGER1997}), or graphs with integer non-negative edge weights (for which $A=O(nm)$~\cite{T-99}).


The continuous mean distance of two equal edges reduces to the mean distance of a path, which---as we have seen in Equation~(\ref{eq:path_formula})---is equal to  $\mu_c(e,e)=|e|/3$, for any edge $e$; this is used in Equation~(\ref{eq:sum}). Therefore, in the remainder of the section, we focus on the mean distance between two distinct edges.





\subsection{Computation using shortest path trees}
\label{subsec:trees}

Shortest path trees are one of the most fundamental structures used to represent distances in graphs, and they are an essential underlying concept behind most single-source shortest path algorithms.
In this section, we introduce a continuous version of the shortest path tree rooted at a vertex of a weighted graph $\N$, and later show how it can be used to compute the continuous mean distance between any two distinct edges of $\NL$.

For $\NL$ and a vertex $v\in V(\NL)$, a \emph{continuous shortest path tree} is a pair $\T=(T_v,S_v)$, where $T_v$ is a (discrete) shortest path tree rooted at $v$, and $S_v$ is a subset of $\NL$ that contains one point $p_e^v$ for each edge $e\in E(\NL)\backslash E(T_v)$. Point $p_e^v$ is the only point on edge $e=ab$ such that its distance to $v$ is given by two different paths: one passes through $a$, and the other one passes through $b$. Thus, $p_e^v$ is the furthest point to $v$ on any cycle $C_e^v$ determined by $e$ and shortest paths connecting $v$ with $a$ and $b$ (see Figure~\ref{fig:lem1}). Note that $p_e^v$ must exist, otherwise $e\in E(T_v)$. 
Observe also that for any point on $ap_e^v$, its shortest paths to $v$ go through $a$, and analogously for points on $p_e^vb$.

\begin{figure}[ht]
\centering
\includegraphics{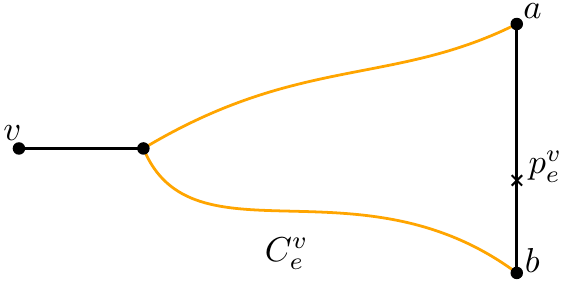}
\caption{Point $p_e^v$, for $e=ab$, is the point furthest from $v$ on cycle $C_e^v$ (composed of the orange paths and edge $ab$).}
\label{fig:lem1}
\end{figure}

As we show next, the continuous shortest path tree can be computed within the same running time needed to solve the single-source shortest path problem, denoted by $S(n,m)$.
Currently, we have  $ S(n,m) = O(m \log \alpha(m,n))$ time in general~\cite{pr-sparwug-05}, in $S(n,m)=O(n)$ time for planar graphs with non-negative edge weights~\cite{HENZINGER1997}, and in $S(n,m)=O(m)$ time for graphs with integer non-negative edge weights~\cite{T-99}.

\begin{proposition}\label{prop:cspt-construction}
Let $\N=(V,E)$ be a weighted graph with $n$ vertices and $m$ edges, and let $v\in V$. A continuous shortest path tree $\T=(T_v,S_v)$ of $\NL$ can be computed in $O(S(n,m))$ time, where $S(n,m)$ is the time required to compute a shortest path tree from $v$.
\end{proposition}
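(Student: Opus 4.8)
The plan is to build $\T=(T_v,S_v)$ in two stages. First I would compute the discrete shortest path tree $T_v$ rooted at $v$, which by hypothesis costs $O(S(n,m))$ time and, crucially, delivers as a by-product the vertex distances $d(v,u)$ for every $u\in V$. Then I would construct $S_v$ by computing each point $p_e^v$ independently, in constant time, directly from these vertex distances; since the points are defined one per non-tree edge, no interaction between them needs to be handled.

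For the second stage I would fix a non-tree edge $e=ab\in E(\NL)\setminus E(T_v)$ and parametrize it by arc length $x\in[0,|e|]$ measured from $a$, so that the point $p$ at parameter $x$ reaches $v$ with length $d(v,a)+x$ along the path through $a$, and with length $d(v,b)+(|e|-x)$ along the path through $b$. By definition $p_e^v$ is the point where these two path lengths coincide, so equating them gives
\begin{equation}\label{eq:pev}
x=\frac{|e|+d(v,b)-d(v,a)}{2}.
\end{equation}
Because $d(v,a)$ and $d(v,b)$ are already stored after computing $T_v$, evaluating~(\ref{eq:pev}) takes $O(1)$ time for each non-tree edge, and the value of $x$ is manifestly the unique parameter at which the two lengths agree, so $p_e^v$ is well defined by the formula.

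Before concluding I would verify that~(\ref{eq:pev}) returns a genuine point of $e$, i.e. that $0\le x\le|e|$. The two required inequalities are exactly $d(v,a)\le d(v,b)+|e|$ and $d(v,b)\le d(v,a)+|e|$, both of which hold because $ab$ is an edge of length $|e|$ and $d(v,\cdot)$ is a shortest-path metric; this recovers the existence of $p_e^v$ already remarked after its definition. I regard this containment check as the only genuinely delicate point, together with the bookkeeping in the final running-time estimate.

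Finally, the number of non-tree edges is $m-(n-1)=O(m)$, so the whole second stage costs $O(m)$ time. In each of the regimes quoted for $S(n,m)$ one has $m=O(S(n,m))$ (using $m=O(n)$ in the planar case), so the total running time is $O(S(n,m)+m)=O(S(n,m))$, as claimed. Everything beyond the containment verification and this absorption of the $O(m)$ term is a direct constant-time computation from the distances produced by $T_v$.
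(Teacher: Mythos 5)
Your proposal is correct and follows essentially the same route as the paper: compute the discrete tree $T_v$ in $O(S(n,m))$ time, then obtain each $p_e^v$ in $O(1)$ time from the stored distances $d(v,a)$, $d(v,b)$ via the same formula (the paper writes it as $\lambda_e=\frac{|ab|+d(v,b)-d(v,a)}{2|ab|}$, which is your $x$ normalized by $|e|$). Your extra checks — that $0\le x\le |e|$ by the triangle inequality, and that the $O(m)$ second stage is absorbed into $O(S(n,m))$ — are sound and only make explicit what the paper leaves implicit.
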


\begin{proof}
We first compute $T_v$ using a single-source shortest path algorithm in $O(S(n,m))$ time.
Now, let $e=ab \in E\backslash E(T_v)$ and $p_e^v\in S_v$. 
Since $p_e^v=\lambda_e b+(1-\lambda_e)a$ for some $\lambda_e\in[0,1]$, and $p_e^v$ is the furthest point from $v$ on any cycle $C_e^v$, we have: 
$$\lambda_e|ab|+d(v,a)=d(v, p_e^v)=\frac{|ab|+d(v,b)+d(v,a)}{2}\Rightarrow \lambda_e=\frac{|ab|+d(v,b)-d(v,a)}{2|ab|}.$$
Thus, $\lambda_e$ can be computed in constant time. Therefore, the set $S_v$ can be computed in total $O(m)$ time, once the single-source shortest path tree is available.
\end{proof}

Depending on whether an edge $e$ belongs to $E(T_v)$, Lemma \ref{lem:vertex} below provides a different expression for its continuous mean distance to vertex $v$. In the proof of this lemma, and throughout this paper, we shall use that, by the mean value theorem for integrals, the mean value of a function $f$ over an interval $[a,b]$ coincides with the height of the rectangle with base $b-a$ and area $\int _{a}^{b}f(x)\,dx$.

\begin{lemma} \label{lem:vertex}
Let $v\in V(\NL)$ and $e=ab \in E(\NL)$. Then,
$$
\mu_c(v,e)=
\begin{cases}
\displaystyle\min \{d(v,a), d(v,b)\} +\frac{|ab|}{2}
&
\text{if } e \in E(T_v),
\\ \noalign{\medskip}
\displaystyle\left(
  d(v,a) +\frac{|ap_e^v|}{2}
\right)
\lambda_e^v+
\left(
  d(v,b) +\frac{|p_e^vb|}{2}
\right)(1- \lambda_e^v)
&
\text{if } e\notin E(T_v),
\end{cases}
$$
where $\displaystyle{|ap_e^v|=\frac{|ab|+d(v,b)-d(v,a)}{2}}$ and $\displaystyle{|p_e^vb|=\frac{|ab|+d(v,a)-d(v,b)}{2}}.$
\end{lemma}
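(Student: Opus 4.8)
The plan is to reduce $\mu_c(v,e)$ to a one-dimensional integral of the distance function along $e$ and then exploit the piecewise-linear structure of that function. Parametrizing a generic point of $e=ab$ as $q_\lambda=\lambda b+(1-\lambda)a$ with $\lambda\in[0,1]$, Equation~(\ref{eq:def0}) gives
$\mu_c(v,e)=\frac{1}{|ab|}\int_{e} d(v,q)\,dq=\int_{0}^{1} d(v,q_\lambda)\,d\lambda$,
so the whole argument rests on identifying $d(v,q_\lambda)$ explicitly. The structural observation I would establish first is that, since $v$ is a vertex and $q_\lambda$ lies in the interior of $e$, every $vq_\lambda$-path must enter $e$ through $a$ or through $b$; hence $d(v,q_\lambda)=\min\{\,d(v,a)+\lambda|ab|,\ d(v,b)+(1-\lambda)|ab|\,\}$. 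This is a minimum of two affine functions of $\lambda$, so it is piecewise linear with at most one breakpoint, and the two cases of the lemma correspond exactly to whether that breakpoint falls at an endpoint or in the interior of $e$.

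For the tree case $e\in E(T_v)$, I would argue that the edge is used in a shortest path to its farther endpoint, so, taking $d(v,a)\le d(v,b)$ without loss of generality, $a$ is the parent of $b$ in $T_v$ and $d(v,b)=d(v,a)+|ab|$. Substituting this into the expression above shows that the term through $a$ never exceeds the term through $b$, so $d(v,q_\lambda)=d(v,a)+\lambda|ab|=\min\{d(v,a),d(v,b)\}+\lambda|ab|$ for every $\lambda\in[0,1]$. Integrating this linear function over $[0,1]$ yields $\min\{d(v,a),d(v,b)\}+\tfrac{|ab|}{2}$, which is the first case.

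For the non-tree case $e\notin E(T_v)$, the breakpoint is precisely the interior point $p_e^v$ of the continuous shortest path tree: by its defining property, shortest paths to points of $ap_e^v$ go through $a$ and shortest paths to points of $p_e^vb$ go through $b$. I would split the integral at $p_e^v$ and evaluate each piece with the mean value theorem for integrals, as in the paragraph preceding the lemma. On $ap_e^v$ the distance grows linearly from $d(v,a)$ to $d(v,p_e^v)$, so its mean value is $d(v,a)+\tfrac{|ap_e^v|}{2}$, contributing $\bigl(d(v,a)+\tfrac{|ap_e^v|}{2}\bigr)\lambda_e^v$ with $\lambda_e^v=\tfrac{|ap_e^v|}{|ab|}$; symmetrically, $p_e^vb$ contributes $\bigl(d(v,b)+\tfrac{|p_e^vb|}{2}\bigr)(1-\lambda_e^v)$. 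Summing the two pieces gives the second case, and the stated lengths $|ap_e^v|$ and $|p_e^vb|$ are exactly those obtained from the value of $\lambda_e$ computed in Proposition~\ref{prop:cspt-construction}.

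The main obstacle is the structural claim rather than the integration: one must justify carefully that $d(v,q_\lambda)$ has the claimed min-of-two-affine-functions form and that the location of its unique breakpoint is governed exactly by whether $e\in E(T_v)$, invoking the defining property of $p_e^v$ together with the remark that $p_e^v$ exists if and only if $e\notin E(T_v)$. Once this piecewise-linear profile is fixed, the rest is a routine evaluation of integrals of linear functions, shortened further by the mean value theorem.
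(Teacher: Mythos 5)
Your proposal is correct and follows essentially the same route as the paper: both identify $d(v,\cdot)$ restricted to $e$ as a piecewise-linear function with at most one breakpoint (at $p_e^v$ when $e\notin E(T_v)$), split the integral there, and evaluate each linear piece via the mean value theorem, deriving $|ap_e^v|$ and $|p_e^vb|$ from the furthest-point property of $p_e^v$. Your explicit min-of-two-affine-functions formulation is a slightly more formal justification of the picture the paper takes from its Figure, but the substance is identical.
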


\begin{proof}
Suppose first that $e\in E(T_v)$. Figure~\ref{fig:lemma4}a illustrates the graph of $d(x,v)$ for $x\in e$ assuming that $d(a,v)<d(b,v)$ (analogous otherwise). This is a straight-line segment with slope 1, so the height of the rectangle with base $b-a$ and area $\int _{a}^{b} d(x,v) \,dx$ is $d(a,v)+|ab|/2$. Hence, the result follows.

Assume now that $e\notin E(T_v)$. We can argue as above but considering two rectangles determined by the function $d(x, v)$, one for $x\in [a, p_e^v]$ and the other for $[p_e^v, b]$, where $p_e^v=\lambda_e b+(1-\lambda_e)a$ for some $\lambda_e\in[0,1]$; see Figure~\ref{fig:lemma4}b. The heights of these rectangles are, respectively,  $d(v,a) +\tfrac12|ap_e^v|$ and $d(v,b) +\tfrac12|p_e^vb|$. To obtain $\mu_c(v,e)$ each of these values must be multiplied by the proportion of the segment that corresponds to the base. 

The expressions for $|ap_e^v|$ and $|p_e^vb|$ come from the fact that $p_e^v$ is the farthest point from $v$ on any cycle $C_e^v$ (see Figure~\ref{fig:lem1}). Thus, 
$$d(v,a)+|ap_e^v|=\frac{|ab|+d(v,b)+d(v,a)}{2}=d(v,b)+|bp_e^v|.$$
\end{proof}

\begin{figure}[ht]
\begin{center}
\includegraphics{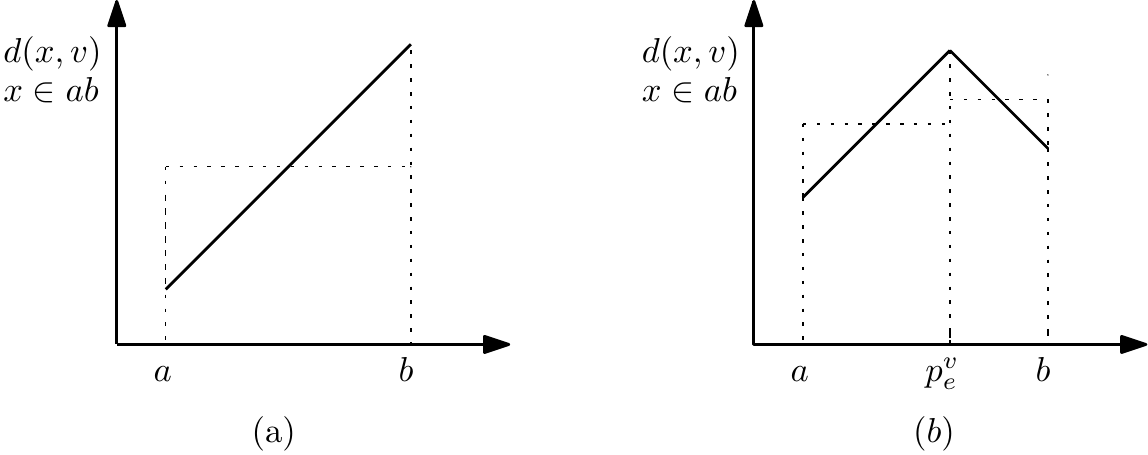}
\end{center}
\caption{Function $y=d(x,v)$ when: (a) $e\in E(T_v)$, (b) $e\notin E(T_v)$.}\label{fig:lemma4}
\end{figure}

The preceding lemma will be used to compute the continuous mean distance between two edges. To do this, we distinguish three cases (Lemmas \ref{lem:star_case}--\ref{lem:graph_case2} below), two of which depend on the following property.

\begin{property}[Same component property]\label{scp}
Let $ab\in E(\NL)$. 
 A vertex $v\in V(\NL)\backslash \{a,b\}$ satisfies the \emph{same component property} with respect to edge $ab$ and vertex $a$ if the shortest path from $a$ to $v$ goes through $b$.
\end{property}

\noindent The same component property essentially means that all  shortest paths from points on edge $ab$ to $v$ go through $b$. 

\begin{lemma}[Rectangular case]
\label{lem:star_case}
Let $ab, uv \in E(\NL)$ be two distinct edges such that:
\begin{enumerate}
\item[(i)] $|ab|=|uv|=\lambda$, 
\item[(ii)] $d(a,uv)$ is given by an $au$-path of length $\theta$, and $d(b,uv)$ is given by a $bv$-path of the same length,
\item[(iii)] the paths of (ii) do not intersect.
\end{enumerate}
Then  $\displaystyle\mu_c(ab,uv)=\theta + \frac{2\lambda}{3}$.

\end{lemma}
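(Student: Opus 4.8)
The plan is to reduce the claim to a single explicit double integral, by first determining the shortest-path distance between an arbitrary point of $ab$ and an arbitrary point of $uv$, and then integrating. I parametrize a point $p\in ab$ by its distance $s\in[0,\lambda]$ from $a$ (so $|pb|=\lambda-s$), and a point $q\in uv$ by its distance $r\in[0,\lambda]$ from $u$ (so $|qv|=\lambda-r$). Since $p$ lies in the interior of edge $ab$, every $pq$-path must leave $p$ through $a$ or through $b$; likewise it must reach $q$ through $u$ or through $v$. Hence $d(p,q)$ is the minimum of the four quantities $s+d(a,u)+r$, $s+d(a,v)+(\lambda-r)$, $(\lambda-s)+d(b,u)+r$ and $(\lambda-s)+d(b,v)+(\lambda-r)$, where by hypothesis (ii) we have $d(a,u)=d(b,v)=\theta$.

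The crucial step is to show that only the two ``parallel'' routes survive, so that $d(p,q)=\theta+\min\{s+r,\,2\lambda-s-r\}$. This is exactly where conditions (i)--(iii) enter: I will argue that the non-intersecting paths of (ii)--(iii), together with the edges $ab$ and $uv$, bound a single cycle of length $2\lambda+2\theta$, and that the two ``diagonal'' connections satisfy $d(a,v)=d(b,u)=\lambda+\theta$, so the cross-routes $s+d(a,v)+(\lambda-r)$ and $(\lambda-s)+d(b,u)+r$ are never shorter than the minimum of the two parallel ones. I expect this to be the main obstacle of the proof: one must rule out any shortcut between the two edges other than through the prescribed $au$- and $bv$-paths, and this is precisely the content of the non-intersection hypothesis (iii). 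Without it, $a$ could in principle reach $v$ strictly faster than $\lambda+\theta$, and the clean formula would fail (for instance, if $a$ were equidistant to $u$ and $v$, then $d(a,v)=\theta\neq\lambda+\theta$).

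Granting the distance formula, the remainder is a routine computation. I substitute into Equation~(\ref{eq:def1}) to obtain
$$\mu_c(ab,uv)=\frac{1}{\lambda^2}\int_0^\lambda\!\!\int_0^\lambda\Big(\theta+\min\{s+r,\,2\lambda-s-r\}\Big)\,ds\,dr=\theta+\frac{1}{\lambda^2}\int_0^\lambda\!\!\int_0^\lambda \min\{s+r,\,2\lambda-s-r\}\,ds\,dr.$$
The integrand is invariant under the change of variables $(s,r)\mapsto(\lambda-s,\lambda-r)$, which swaps the two triangular regions $s+r\le\lambda$ and $s+r\ge\lambda$; hence the integral equals $2\iint_{s+r\le\lambda}(s+r)\,ds\,dr$. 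Evaluating this over the triangle gives $2\cdot\frac{\lambda^3}{3}=\frac{2\lambda^3}{3}$, so that $\mu_c(ab,uv)=\theta+\frac{1}{\lambda^2}\cdot\frac{2\lambda^3}{3}=\theta+\frac{2\lambda}{3}$, as claimed.
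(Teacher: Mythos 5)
Your proof is correct, and it takes a genuinely different route from the paper's. The paper fixes a point $x$ on $ab$, identifies $\lambda-x$ on $uv$ as the farthest point from $x$ on the cycle formed by $ab$, $uv$ and the two length-$\theta$ paths, invokes Lemma~\ref{lem:vertex} to get $\mu_c(x,uv)$ in closed form, and then integrates the resulting quadratic in $x$. You instead write the pointwise distance explicitly as the lower envelope of the four endpoint routes, reduce it to $d(p,q)=\theta+\min\{s+r,\,2\lambda-s-r\}$, and evaluate the double integral directly via the symmetry $(s,r)\mapsto(\lambda-s,\lambda-r)$; this is really the viewpoint of Subsection~\ref{subsec:roof} specialized to this configuration, and your arithmetic ($2\lambda^3/3$, hence $\theta+2\lambda/3$) checks out. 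Your version is more self-contained (no appeal to Lemma~\ref{lem:vertex}) and makes explicit the one assumption both arguments need, namely $d(a,v)=d(b,u)=\lambda+\theta$. Be aware that hypothesis (iii) by itself only yields the upper bound $d(a,v)\le\lambda+\theta$ (go around the cycle); ruling out a shorter $av$- or $bu$-connection elsewhere in the graph is an extra hypothesis that you, like the paper (which silently asserts $d(x,v)=\lambda-x+\theta$), are reading into the statement. This is legitimate in the only place the lemma is used, the cycle case of Lemma~\ref{lem:graph_case2}, where $\mathcal{C}$ is a minimum-length cycle through both edges, so you are on equal footing with the paper on this point.
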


\begin{proof}
Edges $ab$ and $uv$ can be seen as the interval $[0,\lambda]$. By Equation (\ref{eq:def1}), we have:
$$\mu_c(ab,uv)= \frac{1}{\lambda^2}\int_{0}^{\lambda}\int_{0}^{\lambda} d(x,y) \,dx\,dy = \frac{1}{\lambda}\int_{0}^{\lambda} \mu_c(x,uv) \,dx$$
Given a point $x\in [0,\lambda]$ on edge $ab$, its furthest point on the cycle composed by $ab$, $uv$, the $au$-path of length $\theta$, and the $bv$-path of the same length is the point $\lambda-x$ on edge $uv$. This point plays the role of $p_e^v$ in Lemma \ref{lem:vertex} with $\lambda_e^v=(\lambda-x)/\lambda$. Note that  Lemma \ref{lem:vertex} is stated for vertices and edges of the graph, but we can always insert a vertex at the required point $x$ and consider, with some abuse of notation, the tree $T_x$. Since the shortest $x-u$ and $x-v$ paths do not contain the edge $uv$, we have $uv\notin E(T_x)$. Then, by Lemma \ref{lem:vertex},
$$\mu_c(x,uv)=\left(x+\theta+\frac{\lambda-x}{2}\right)\frac{\lambda-x}{\lambda}+\left(\lambda-x+\theta+\frac{x}{2}\right)\frac{x}{\lambda},$$
where $d(x,u)=x+\theta$ and $d(x,v)=\lambda-x+\theta$. Hence,
$$\mu_c(ab,uv)=\frac {1}{\lambda^2}\int _{0}^{\lambda} \left[\left(x+\theta+\frac{\lambda-x}{2}\right)(\lambda-x) +\left(\lambda-x+\theta+\frac{x}{2}\right)x\right]\,dx.$$
Thus, $$\mu_c(ab,uv)=\frac {1}{\lambda^2}\int _{0}^{\lambda} (-x^2+\lambda x+\lambda\theta+\frac{\lambda^2}{2}) \, dx=\theta + \frac{2\lambda}{3}.$$
\end{proof}

\begin{lemma} [Linear case] \label{lem:graph_case1}
Let $ab \in E(\NL)$ be an edge such that $a$ and $b$ satisfy the same component property with respect to other edge $uv$ and one of its endpoints, say $u$.
 Then,
    $$
    \mu_c(uv,ab)
    =
    \frac{|uv|}{2}+\mu_c(v,ab).
    $$
\end{lemma}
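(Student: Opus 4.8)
The plan is to reduce the statement to Remark~\ref{rem:division_edge}(ii) combined with Lemma~\ref{lem:vertex}, so that almost no explicit integration is required. The first task is to convert the hypothesis into a shortest-path decomposition. Since $a$ and $b$ both satisfy the same component property with respect to $uv$ and its endpoint $u$, the discussion following Property~\ref{scp} tells us that every shortest path from a point $x\in uv$ to $a$, and likewise every shortest path from $x$ to $b$, passes through $v$. Now any point $q$ in the interior of $ab$ is reached from $x$ only by entering the edge $ab$ through one of its endpoints, so every shortest $xq$-path is the concatenation of a shortest path from $x$ to $a$ (or to $b$) with a subsegment of $ab$; both of those initial pieces already pass through $v$. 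Hence $d(x,q)=d(x,v)+d(v,q)$ for all $x\in uv$ and all $q\in ab$.

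With this identity available for every pair of points, I would invoke Remark~\ref{rem:division_edge}(ii), which gives immediately
$$\mu_c(uv,ab)=\mu_c(uv,v)+\mu_c(v,ab).$$
It then remains only to evaluate $\mu_c(uv,v)$, the continuous mean distance between the edge $uv$ and its own endpoint $v$. Because the edge $uv$ realizes the distance $d(v,u)=|uv|$, it belongs to the shortest path tree $T_v$, so the first case of Lemma~\ref{lem:vertex} applies with $\min\{d(v,u),d(v,v)\}=0$, yielding $\mu_c(uv,v)=|uv|/2$. (Equivalently, parametrizing $uv$ as the interval $[0,|uv|]$ with $v$ at the far end, the distance function $d(x,v)=|uv|-x$ is linear and its mean over the edge is $|uv|/2$.) Substituting this into the previous display produces the claimed formula $\mu_c(uv,ab)=\tfrac{|uv|}{2}+\mu_c(v,ab)$.

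I expect the only delicate point to be the first step: arguing that the same component property, stated only for the \emph{endpoints} $a$ and $b$, forces every shortest path from $x\in uv$ to an \emph{arbitrary interior} point $q$ of $ab$ to pass through $v$. The key observation is simply that a shortest path to such a $q$ must enter the edge $ab$ through $a$ or through $b$, and both of those approaches are already constrained to go through $v$; everything after that is routine. As a self-contained alternative that avoids Remark~\ref{rem:division_edge}(ii) altogether, one can substitute $d(x,q)=(|uv|-x)+d(v,q)$ directly into Equation~(\ref{eq:def1}) and split the double integral by linearity: the term involving $|uv|-x$ integrates to $|uv|/2$, while the term involving $d(v,q)$ reproduces $\mu_c(v,ab)$ by definition.
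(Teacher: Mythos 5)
Your proposal is correct and follows essentially the same route as the paper: establish the pointwise identity $d(p,q)=d(p,v)+d(v,q)$ for all $p\in uv$, $q\in ab$ from the same component property, apply Remark~\ref{rem:division_edge}(ii) to get $\mu_c(uv,ab)=\mu_c(uv,v)+\mu_c(v,ab)$, and evaluate $\mu_c(uv,v)=|uv|/2$ from the linearity of the distance function along the edge. Your write-up is in fact more detailed than the paper's (which asserts the pointwise identity without the argument about entering $ab$ through an endpoint), but there is no substantive difference in approach.
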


\begin{proof}

In this case, there are two possible situations that may happen for $ab$ and $uv$, see Figure~\ref{fig:linear_case}.
For each $p \in uv$ and each $q \in ab$, we have $d(p, q) = d(p, v)+d(v, q)$. By  Remark \ref{rem:division_edge}(ii), it follows that $\mu_c(uv, ab) = \mu_c(uv, v)+\mu_c(v, ab)=\frac{|uv|}{2}+\mu_c(v,ab)$. 
\end{proof}

\begin{figure}[t]
\centering
\includegraphics{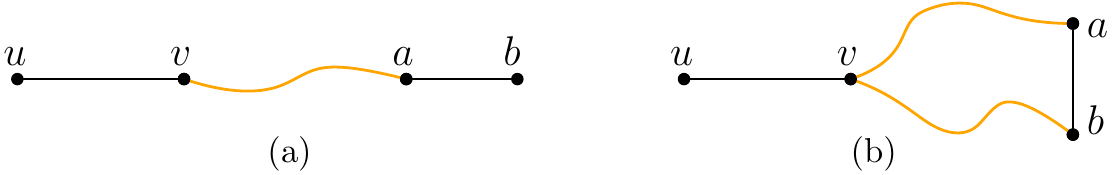}
\caption{Linear cases: (a) $ab\in E(T_u)$, (b) $ab\notin E(T_u)$.}
\label{fig:linear_case}
\end{figure}

\begin{lemma} [Cycle case] \label{lem:graph_case2}
Let $ab, uv \in E(\NL)$ be two distinct edges such that neither
$u,v$ nor
$a,b$ satisfy the same component property with respect to the other corresponding edge and one of its endpoints. Then, $\mu_c(ab,uv)$ can be computed as a weighted sum of at most four linear cases and one rectangular case.
\end{lemma}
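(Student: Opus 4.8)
The plan is to reduce the pair $(ab,uv)$ to the linear and rectangular cases already treated, by cutting each edge at the points where the combinatorial type of the shortest paths between the two edges changes, and then recombining the pieces through Remark~\ref{rem:division_edge}(i). Parametrize $ab$ by $s\in[0,|ab|]$ measured from $a$. Any shortest path from an interior point $p\in ab$ to an interior point $q\in uv$ must leave $ab$ through $a$ or $b$ and enter $uv$ through $u$ or $v$, so
\[
d(p,q)=\min\bigl(d(p,u)+|uq|,\ d(p,v)+|qv|\bigr).
\]
Viewed as functions of $s$, both $d(\cdot,u)$ and $d(\cdot,v)$ are piecewise linear of slope $\pm1$ with at most one breakpoint each, namely the point $\sigma_u$ (resp.\ $\sigma_v$) at which the shortest route from $p$ to $u$ (resp.\ to $v$) switches from going through $a$ to going through $b$. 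This is precisely the information recorded by the continuous shortest path tree and by Lemma~\ref{lem:vertex}.

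First I would use Lemma~\ref{lem:vertex} to separate $[0,|ab|]$ according to whether $uv\in E(T_p)$. For fixed $s$ the restriction of $d(p,\cdot)$ to $uv$ is a tent whose apex is the furthest point $p_{uv}^{\,p}$, located at parameter $r^{*}(s)=\tfrac12\bigl(|uv|+d(p,v)-d(p,u)\bigr)$, and $uv\in E(T_p)$ exactly when $r^{*}(s)\notin(0,|uv|)$, i.e.\ when the apex is clamped to an endpoint. Since $r^{*}$ is piecewise linear in $s$, the equations $r^{*}=0$ and $r^{*}=|uv|$ contribute two further breakpoints, and together with $\sigma_u,\sigma_v$ they partition $ab$ into finitely many sub-intervals. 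On any sub-interval with $r^{*}\ge|uv|$ the apex is clamped to $v$, so $d(p,q)=d(p,u)+|uq|$ throughout and every shortest path from $uv$ to that sub-segment funnels through $u$; likewise $r^{*}\le0$ funnels through $v$. In either situation Remark~\ref{rem:division_edge}(ii) applies and the sub-problem is a linear case in the sense of Lemma~\ref{lem:graph_case1}. A $u$-clamped sub-segment may still be cut internally by $\sigma_u$, and a $v$-clamped one by $\sigma_v$, so these linear pieces number at most four.

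On the one remaining regime, where $r^{*}(s)\in(0,|uv|)$, the apex sweeps monotonically across $uv$; as $r^{*}$ has slope $-1$ there, the corresponding region is a square of side $\min(|ab|,|uv|)$ on which the furthest-point correspondence is the reflection $r^{*}=\text{const}-s$, exactly as in Lemma~\ref{lem:star_case}. The apex conditions at the two ends of this interval then force the two connecting paths (from one corner through $a$ to $u$, from the opposite corner through $b$ to $v$) to have equal length, and being shortest paths on opposite arcs of the enclosing cycle they do not cross; hence this piece is a single rectangular case. Gluing everything with repeated applications of Remark~\ref{rem:division_edge}(i) to the cut points on $ab$ (and, when the square is clamped by the endpoints of $ab$, to the symmetric cut points on $uv$, which then carry the linear pieces) expresses $\mu_c(ab,uv)$ as a weighted sum of at most four linear cases and one rectangular case. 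I expect the main difficulty to be the bookkeeping of this case analysis: one must verify that $\sigma_u,\sigma_v$ never fall strictly inside the middle regime—so that it remains a single square handled by one rectangular case—and that clamping by the shorter edge is absorbed into the four linear pieces rather than producing a second rectangular one.
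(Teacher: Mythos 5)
Your overall plan---cut the edges at the breakpoints of the piecewise-linear distance structure and reassemble with Remark~\ref{rem:division_edge}(i)---is sound and close in spirit to the paper's proof, but your classification of the regimes is wrong at the decisive point, and the verification you defer at the end fails rather than succeeds. The function $r^*(s)=\tfrac12\bigl(|uv|+d(p,v)-d(p,u)\bigr)$ is constant on $[0,\min(\sigma_u,\sigma_v)]$ and on $[\max(\sigma_u,\sigma_v),|ab|]$ (there the routes from $p$ to $u$ and to $v$ leave $ab$ through the same endpoint, so the two unit slopes cancel), and it has slope $\pm1$ only between $\sigma_u$ and $\sigma_v$. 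Hence its range is exactly the interval between $r^*(0)$ and $r^*(|ab|)$, which are the positions of $p_{uv}^a$ and $p_{uv}^b$ on $uv$; the clamp $r^*\ge|uv|$ (resp.\ $r^*\le0$) occurs only if $a$ or $b$ already satisfies the same component property with respect to $uv$ and an endpoint. So in the typical cycle case both clamped regimes are empty, your ``middle regime'' is all of $[0,|ab|]$, and $\sigma_u,\sigma_v$ \emph{do} fall strictly inside it. Concretely, for the two disjoint unit edges $ab,uv$ of $K_4$ one has $r^*\equiv1/2$: the middle regime is everything, $r^*$ never has slope $-1$, and $\mu_c(ab,uv)$ decomposes into two linear cases (all paths funnel through $a$ for $s<1/2$ and through $b$ for $s>1/2$) with no rectangular piece at all---not into one rectangular case.

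The missing step is exactly the bookkeeping you hoped to avoid. Inside $\{0<r^*<|uv|\}$ you must cut again at $\sigma_u$ and $\sigma_v$: on the two constant pieces every shortest path to $uv$ leaves $ab$ through a single endpoint, so each is a linear case funnelling through $a$ or through $b$ (the roles of the two edges in Lemma~\ref{lem:graph_case1} are exchanged relative to your clamped pieces); on the sloped piece the apex sweeps only the sub-segment of $uv$ between $r^*(0)$ and $r^*(|ab|)$, so $uv$ must be cut there as well---the two outer strips of $uv$ are linear cases through $u$ and through $v$, and only the central square, of side $|r^*(0)-r^*(|ab|)|=|\sigma_u-\sigma_v|$ (not $\min(|ab|,|uv|)$), is a rectangular case in the sense of Lemma~\ref{lem:star_case}. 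This corrected decomposition yields the stated count of at most four linear cases plus one rectangular case, and it is precisely the transpose of the paper's argument, which first cuts $uv$ at $p_{uv}^a,p_{uv}^b$ (two linear cases through $u$ and $v$) and then cuts $ab$ at their antipodes on the cycle $\mathcal{C}$ (two linear cases through $a$ and $b$, plus the central rectangle).
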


\begin{proof}
Consider the continuous shortest path trees $(T_a, S_a)$ and $(T_b, S_b)$ rooted at $a$ and $b$, respectively. Figure~\ref{fig:cycle_case}a illustrates how $T_a$ and $T_b$ must be located with respect to edges $ab$ and $uv$, since no endpoint of these two edges satisfy the same component property. Note that $T_a$ and $T_b$ might have paths in common. Observe also that $uv\notin E(T_a)\cap E(T_b)$.

Let $p_{uv}^a\in S_a$ and $p_{uv}^b\in S_b$. Suppose first that the two points are distinct and, assume without loss of generality, that $p_{uv}^a$ is closer to $u$ than $p_{uv}^b$, see Figure~\ref{fig:cycle_case}a. 
Applying Remark~\ref{rem:division_edge}(i) twice, we have $\mu_c(ab,uv)=\frac{
|up_{uv}^a|}{|uv|}\mu_c(ab,up_{uv}^a)+
\frac{
|p_{uv}^a p_{uv}^b|}{|uv|}\mu_c(ab,p_{uv}^ap_{uv}^b)+
\frac{|p_{uv}^b v|}{|uv|}\mu_c(ab,p_{uv}^bv)$.
Further, $\mu_c(ab,up_{uv}^a)$ can be computed by Lemma \ref{lem:graph_case1} as a linear case: it suffices to consider the shortest paths in $T_a$ and $T_b$ connecting, respectively, $a$ and $b$ with $u$; vertices $a$ and $b$ satisfy the same component property with respect to $p_{uv}^au$ and point $p_{uv}^a$. The situation is analogous for $\mu_c(ab,p_{uv}^bv)$, and it remains to obtain $\mu_c(ab,p_{uv}^ap_{uv}^b)$. Note that Property \ref{scp} is stated for vertices and edges of the graph, but one can always insert vertices at the required points, such as $p_{uv}^a$, in order to deal with the situation as a linear case.  

We now consider the cycle $\mathcal{C}$ determined by edges $ab$ and $uv$, and the shortest paths in $T_a$ and $T_b$ giving, respectively, $d(a,uv)$ and $d(b,uv)$. Suppose, without loss of generality that those paths, denoted by $P_{av}$ and $P_{bu}$,  connect $a$ with $v$ and $b$ with $u$ (see Figure~\ref{fig:cycle_case}b). Observe that $\mathcal{C}$ is indeed a cycle as $p_{uv}^a\neq p_{uv}^b$; otherwise we would have a common sub-path in $P_{av}$ and $P_{bu}$, so there would be a point satisfying that its further point on $uv$ would be the same as the furthest points of $a$ and $b$ on $uv$, which would imply $p_{uv}^a= p_{uv}^b$. Note also that $C$ is a cycle of minimum length containing edges $ab$ and $uv$ since $|P_{av}|=d(a,uv)$ and $|P_{bu}|=d(b,uv)$.


Let $p_{uv}^{a*}$ be the point furthest from $p_{uv}^a$ on $\mathcal{C}$, and let $p_{uv}^{b*}$ be defined analogously for $p_{uv}^b$.
Refer to Figure~\ref{fig:cycle_case}b. By construction of the cycle, these points are on edge $ab$, since $\mathcal{C}$ contains a shortest $ap_{uv}^a$-path and a shortest $bp_{uv}^b$-path. This implies that the furthest point of $a$ on $\mathcal{C}$ is either $p_{uv}^a$ or is located in between $p_{uv}^a$ and $b$. Analogously,  the furthest point of $b$ on $\mathcal{C}$ is either $p_{uv}^b$ or is located in between $p_{uv}^b$ and $a$.  As the order of furthest points has to be preserved in every cycle, $p_{uv}^{a*}$ and $p_{uv}^{b*}$ have to be located in between $a$ and $b$.

Applying Remark~\ref{rem:division_edge}(i) twice from ``edge'' $p_{uv}^ap_{uv}^b$ to $ab$, with points $p_{uv}^{a*}$ and $p_{uv}^{b*}$ in $ab$, yields
$$\mu_c(ab,p_{uv}^ap_{uv}^b)=
\frac{|ap_{uv}^{a*}|}{|ab|}
\mu_c(ap_{uv}^{a*}, p_{uv}^ap_{uv}^b)+
\frac{|p_{uv}^{a*}p_{uv}^{b*}|}{|ab|}
\mu_c(p_{uv}^{a*}p_{uv}^{b*}, p_{uv}^ap_{uv}^b)+
\frac{|p_{uv}^{b*}b|}{|ab|}
\mu_c(p_{uv}^{b*}b, p_{uv}^ap_{uv}^b).$$ 

\begin{figure}[t]
\centering
\includegraphics{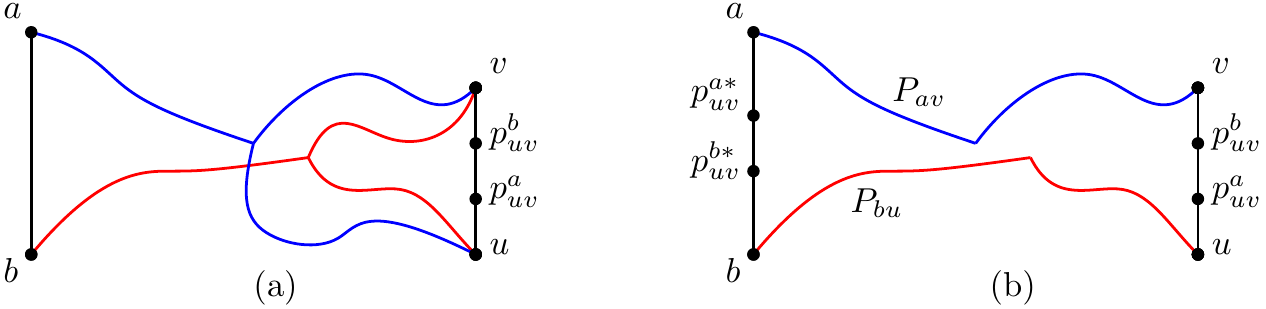}
\caption{ (a) Trees $T_a$ (in blue) and $T_b$ (in red), and points $p_{uv}^a$ and $p_{uv}^b$. (b) Cycle $\mathcal{C}$ formed by the two edges and the paths $P_{av}$ and $P_{bu}$; the points $p_{uv}^{a*}$ and $p_{uv}^{b*}$ are the furthest points of, respectively, $p_{uv}^a$ and $p_{uv}^b$ on $\mathcal{C}$.}
\label{fig:cycle_case}
\end{figure}

With an analogous argument as above, $\mu_c(ap_{uv}^{a*}, p_{uv}^ap_{uv}^b)$ and $\mu_c(p_{uv}^{b*}b, p_{uv}^ap_{uv}^b)$ can be obtained by Lemma \ref{lem:graph_case1} as linear cases (for the first case, for example, take the two shortest paths connecting, respectively, $p_{uv}^a$ and $p_{uv}^b$ with $a$, which go through $u$ and $v$; $p_{uv}^a$ and $p_{uv}^b$ satisfy the same component property with respect to $ap_{uv}^{a*}$ and endpoint $p_{uv}^{a*}$). 

The value $\mu_c(p_{uv}^{a*}p_{uv}^{b*}, p_{uv}^ap_{uv}^b)$ can be computed as a rectangular case of Lemma \ref{lem:star_case}.
 It is easy to check that $|p_{uv}^ap_{uv}^b|=|p_{uv}^{a*}p_{uv}^{b*}|$ as the distances $d(p_{uv}^a,p_{uv}^{a*})$ and $d(p_{uv}^b,p_{uv}^{b*})$ equal the semiperimeter of $\mathcal{C}$. This also implies that the paths on $\mathcal{C}$ connecting, respectively, $p_{uv}^{b*}$ with $p_{uv}^a$, and $p_{uv}^{a*}$ with $p_{uv}^b$, have the same length. 
 
In total, there are at most four linear cases and one rectangular case in order to obtain $\mu_c(ab,uv)$. 
If $p_{uv}^a$ and $p_{uv}^b$ are the same point, the number of linear cases reduces to two and there is no rectangular case since $\mu_c(ab,uv)=\frac{
|up_{uv}^a|}{|uv|}\mu_c(ab,up_{uv}^a)+
\frac{|p_{uv}^a v|}{|uv|}\mu_c(ab,p_{uv}^av)$.
 \end{proof}

Next, we observe that the conditions that need to be checked to compute the continuous mean distance between two edges can be checked in constant time. 
Notice that we do not need the explicit construction of the continuous shortest path trees from each vertex, we only need to check the conditions in Lemmas \ref{lem:condition1} and \ref{lem:condition2} below.  

\begin{lemma}\label{lem:condition1}
\label{lem:edgeInTree}
Given $v\in V(\NL)$,  $e=ab\in E(\NL)$, and the values of $d(v,b)$ and $d(v,a)$,  it can be checked in constant time whether edge $e$ belongs to $E(T_v)$.
\end{lemma}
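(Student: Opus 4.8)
The plan is to characterize membership of an edge $e=ab$ in the shortest path tree $T_v$ purely in terms of the three quantities $d(v,a)$, $d(v,b)$, and $|ab|$, and then observe that the resulting test is a comparison of constant-size arithmetic expressions. The guiding intuition comes from the construction of the continuous shortest path tree: the edge $e=ab$ fails to lie in $T_v$ precisely when there is an interior point $p_e^v$ of $e$ whose shortest path to $v$ is realized in two ways, one through $a$ and one through $b$. Equivalently, $e \notin E(T_v)$ exactly when the two endpoints pull their shortest paths to $v$ from opposite sides of the edge, so that a ``furthest point'' genuinely lies in the interior of the edge rather than at an endpoint.

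First I would make this precise using the value of $\lambda_e$ computed in Proposition~\ref{prop:cspt-construction}, namely
\[
\lambda_e=\frac{|ab|+d(v,b)-d(v,a)}{2|ab|}.
\]
The point $p_e^v=\lambda_e b+(1-\lambda_e)a$ lies strictly in the interior of $e$ if and only if $\lambda_e\in(0,1)$, which after clearing the positive denominator $2|ab|$ is equivalent to the two inequalities
\[
|d(v,a)-d(v,b)| < |ab|.
\]
When this strict triangle-type inequality holds, an interior splitting point exists and $e\notin E(T_v)$; when it fails, i.e.\ when $|d(v,a)-d(v,b)|\ge |ab|$, one endpoint's shortest path necessarily routes through the other endpoint along $e$, so $e$ can be taken to belong to $E(T_v)$. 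I would phrase the claim as: $e\in E(T_v)$ if and only if $\bigl|\,d(v,a)-d(v,b)\,\bigr|\ge |ab|$.

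For the justification of the equivalence I would argue in both directions. If $|d(v,a)-d(v,b)|\ge|ab|$, say $d(v,a)\ge d(v,b)+|ab|$, then the path to $a$ via $b$ along edge $e$ has length $d(v,b)+|ab|\le d(v,a)$, so $a$ admits a shortest path whose last edge is $e$, and we may include $e$ in the tree. Conversely, if the strict inequality $|d(v,a)-d(v,b)|<|ab|$ holds, then neither $d(v,a)=d(v,b)+|ab|$ nor $d(v,b)=d(v,a)+|ab|$ can hold, so the edge $e$ cannot be the final edge of a shortest path to either endpoint; the interior point $p_e^v$ with $\lambda_e\in(0,1)$ witnesses that distances to $v$ increase toward the interior from both ends, which is incompatible with $e$ lying on a monotone root-to-leaf path. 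Hence $e\notin E(T_v)$. Since this test is a single subtraction, an absolute value, and one comparison against $|ab|$, all of which are available from the given data $d(v,a)$, $d(v,b)$ and the stored edge length $|ab|=\omega(e)$, the check runs in $O(1)$ time, proving the lemma.

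The main subtlety I anticipate is the boundary case of equality, $|d(v,a)-d(v,b)|=|ab|$, together with the fact that shortest path trees need not be unique: when equality holds there may be a shortest path to $a$ that uses $e$ and another that does not, so ``$e\in E(T_v)$'' should be read as ``$e$ may be chosen to lie in some shortest path tree $T_v$,'' consistent with how $T_v$ was introduced. I would therefore state the criterion with a non-strict inequality on the tree side and make explicit that we are asserting the existence of a shortest path tree containing $e$, which is all that the later lemmas (in particular Lemma~\ref{lem:vertex}, whose two cases match exactly the strict/non-strict dichotomy) require.
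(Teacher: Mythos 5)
Your proposal is correct and follows essentially the same route as the paper, which simply states that $e \notin E(T_v)$ if and only if $|d(v,b)-d(v,a)|<|ab|$ and concludes the constant-time check from there. You supply more detail than the paper does (the derivation via $\lambda_e$, the two directions of the equivalence, and the remark on non-uniqueness of shortest path trees at the equality boundary), but the underlying criterion and the $O(1)$ conclusion are identical.
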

\begin{proof}
It follows from the fact that $e \notin E(T_v)$ if and only if $|d(v,b)-d(v,a)|<|ab|$.
\end{proof}

\begin{lemma}\label{lem:condition2}
\label{lem:InComponent}
For every edge $uv\in E(\NL)$, it can be checked in constant time whether $u$ and $v$ satisfy the same component property with respect to any other edge  $ab\in E(\NL)$ and one of its endpoints, assuming that $d(a,u), d(a,v), d(b,u)$, and $d(b,v)$ are known.
\end{lemma}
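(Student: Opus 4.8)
The plan is to reduce the \emph{same component property} to a single arithmetic identity between known distances, exactly in the spirit of the proof of Lemma~\ref{lem:condition1}. Recall from Property~\ref{scp} that a vertex $w$ satisfies the same component property with respect to edge $ab$ and endpoint $a$ precisely when some shortest $a$-$w$ path passes through $b$. I would first establish the characterization that, under the convention $d(a,b)=|ab|$, this happens if and only if
\[
d(a,w)=|ab|+d(b,w).
\]

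To prove this equivalence I would argue both directions. If a shortest $a$-$w$ path passes through $b$, split it at $b$ into an $a$-$b$ portion and a $b$-$w$ portion; their lengths are at least $d(a,b)=|ab|$ and $d(b,w)$ respectively, so $d(a,w)\ge |ab|+d(b,w)$, while the triangle inequality gives the reverse inequality, forcing equality. Conversely, if $d(a,w)=|ab|+d(b,w)$, then concatenating the edge $ab$ with a shortest $b$-$w$ path yields an $a$-$w$ walk of length exactly $d(a,w)$, hence a shortest $a$-$w$ path through $b$.

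With this characterization in hand, checking the statement of the lemma becomes two constant-time tests, one per endpoint of $ab$ (the quantifier ``one of its endpoints'' requires that both $u$ and $v$ use the \emph{same} endpoint). For endpoint $a$ I would test whether $d(a,u)=|ab|+d(b,u)$ and $d(a,v)=|ab|+d(b,v)$ both hold; for endpoint $b$, whether $d(b,u)=|ab|+d(a,u)$ and $d(b,v)=|ab|+d(a,v)$ both hold. Each test is a fixed number of additions and comparisons on the four given distances $d(a,u),d(a,v),d(b,u),d(b,v)$ and the stored edge length $|ab|=\omega(ab)$, so the whole procedure runs in $O(1)$ time, and the property holds exactly when at least one of the two tests succeeds.

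There is little real difficulty here; the only points needing care are the justification of the distance identity above (where the convention $d(a,b)=|ab|$ is essential) and the degenerate situation in which $ab$ and $uv$ share an endpoint, which should be dispatched directly since Property~\ref{scp} is stated only for $w\notin\{a,b\}$. I would also remark that when shortest paths are not unique the identity detects the \emph{existence} of a shortest path through $b$, which is exactly the form in which the property is invoked in Lemmas~\ref{lem:graph_case1} and~\ref{lem:graph_case2}.
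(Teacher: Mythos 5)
Your proposal is correct and follows essentially the same route as the paper: the paper's proof simply states that the same component property with respect to $ab$ and $a$ is equivalent to $d(a,u)=d(a,b)+d(b,u)$ and $d(a,v)=d(a,b)+d(b,v)$, which is exactly the characterization you prove and then test in constant time. Your additional justification of the equivalence and the remarks on degenerate cases are fine but go beyond what the paper writes down.
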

\begin{proof}
Consider an edge $ab$ and the endpoint $a$. Having the same component property with respect to edge $ab$ and vertex $a$ is equivalent to say that
(i) $d(a,u)=d(a,b) + d(b,u)$, and
(ii) $d(a,v)=d(a,b)+d(b,v)$.
\end{proof}


Since any pair of distinct edges falls into one of the three cases considered above (rectangular, linear, or cycle), we conclude the following.

\begin{theorem}\label{th:spt}
Let $G$ be a  weighted graph. Given two edges $e,e' \in E(\NL)$, the function $\mu_c(e,e')$ can be expressed as a weighted sum of $O(1)$ distances between pairs of points on $e$ and $e'$.
\end{theorem}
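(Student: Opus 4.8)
The plan is to treat the theorem as a case analysis that assembles the three lemmas already established (the rectangular, linear, and cycle cases), using the constant-time tests of Lemmas \ref{lem:condition1} and \ref{lem:condition2} to decide which case applies. First I would record the unit building block: for a vertex (or a point inserted as a vertex) $w$ and an edge $f=ab$, Lemma \ref{lem:vertex} writes $\mu_c(w,f)$ explicitly in terms of $d(w,a)$, $d(w,b)$, the length $|f|$, and the breakpoint parameter $\lambda_f^w$, which is itself a constant-time function of those same distances. Hence $\mu_c(w,f)$ is already a weighted sum of $O(1)$ distances between $w$ and points of $f$. This is the atom from which everything else is built.

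Next I would split on the same component property. Using Lemma \ref{lem:condition2}, I check whether some pair of endpoints---say $a,b$ with respect to $uv$, or $u,v$ with respect to $ab$---satisfies the property. If it does, we are in the linear case, and Lemma \ref{lem:graph_case1} gives $\mu_c(uv,ab)=\frac{|uv|}{2}+\mu_c(v,ab)$; the first term is a length (a distance between two points of $uv$) and the second is one instance of the atom above, so the whole expression is a weighted sum of $O(1)$ distances. If no pair satisfies the property, we are in the cycle case, and Lemma \ref{lem:graph_case2} decomposes $\mu_c(ab,uv)$ as a weighted combination of at most four linear cases and one rectangular case. Each linear case reduces, as above, to $O(1)$ distances, and the rectangular case contributes $\theta+\frac{2\lambda}{3}$ by Lemma \ref{lem:star_case}, where $\theta$ is a single vertex-to-vertex distance and $\lambda$ a length; again $O(1)$ distances.

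Combining these observations finishes the argument: the classification of the edge pair is $O(1)$ by Lemmas \ref{lem:condition1} and \ref{lem:condition2}, the number of subcases is bounded by a constant, and each subcase is a weighted sum of $O(1)$ distances between points on $e$ and $e'$, so the total is still a weighted sum of $O(1)$ such distances. The weights arising in the decomposition are ratios of subedge lengths determined by the auxiliary breakpoints $p_{uv}^a,p_{uv}^b,p_{uv}^{a*},p_{uv}^{b*}$, and by Proposition \ref{prop:cspt-construction} (via the explicit formula for $\lambda_e$ therein) each of these points is located in constant time from the distance matrix.

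The main obstacle I expect is bookkeeping rather than conceptual: one must verify that every breakpoint introduced in the cycle decomposition (the furthest points $p_{uv}^{a*},p_{uv}^{b*}$ and the two $S$-points) lies on the intended edge and is computable in constant time from the known pairwise distances, and that the distances appearing in the rectangular subcase are genuinely distances between a point on $e$ and a point on $e'$ rather than some other pair. Since this is precisely what the statements of Lemmas \ref{lem:star_case}--\ref{lem:graph_case2} already guarantee, the theorem follows as a direct assembly of them.
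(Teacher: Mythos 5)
Your proposal is correct and follows essentially the same route as the paper: the paper's own proof is precisely the observation that every pair of distinct edges falls into the rectangular, linear, or cycle case of Lemmas \ref{lem:star_case}--\ref{lem:graph_case2}, with the case determined in constant time via Lemmas \ref{lem:condition1} and \ref{lem:condition2}, and with each case already expressed as a weighted sum of $O(1)$ distances. Your additional bookkeeping about the atom $\mu_c(w,f)$ from Lemma \ref{lem:vertex} and the constant-time location of the breakpoints is consistent with what the paper establishes in the preceding lemmas.
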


\subsection{A geometric view based on lower envelopes and Voronoi diagrams} \label{subsec:roof}

In this subsection, we present an alternative approach based on well-known geometric tools, which shows that the continuous mean distance between two edges and, therefore, of the whole graph can be computed completely using simple geometric arguments.
Recall that the \emph{lower envelope} of a set of functions is the function resulting from taking the point-wise minimum of all functions in the set.

\begin{lemma}
\label{lem:two_edges}
Given two distinct edges $e,e'\in E(\NL)$, the function $d(p,q)$, where $p \in e$ and $q \in e'$ can be seen as the lower envelope of at most four planes in 3D.
\end{lemma}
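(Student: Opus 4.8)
The plan is to parametrize the two edges as intervals and show that, once the graph distances between their four endpoints are fixed, the bivariate distance function is piecewise linear with at most four linear pieces, each of which is the restriction of a plane. Write $e=ab$ and $e'=uv$, and identify a point $p\in e$ with its arclength $x\in[0,|e|]$ measured from $a$, and a point $q\in e'$ with its arclength $y\in[0,|e'|]$ measured from $u$. The key structural observation is that, since $e$ and $e'$ are distinct edges, any $pq$-path must leave the interior of $e$ through one of its endpoints $a$ or $b$, and must enter the interior of $e'$ through one of its endpoints $u$ or $v$. This yields four combinatorial \emph{types} of paths, according to the chosen exit and entry vertices.

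For each type I would write down the length of the shortest path of that type as an affine function of $(x,y)$. Using that $d(p,a)=x$, $d(p,b)=|e|-x$, $d(q,u)=y$, and $d(q,v)=|e'|-y$ (which follow from the convention that the two endpoints of an edge are at distance equal to its length), the four candidate lengths are
\begin{align*}
f_1(x,y)&=x+y+d(a,u),\\
f_2(x,y)&=x+(|e'|-y)+d(a,v),\\
f_3(x,y)&=(|e|-x)+y+d(b,u),\\
f_4(x,y)&=(|e|-x)+(|e'|-y)+d(b,v),
\end{align*}
where $d(a,u),d(a,v),d(b,u),d(b,v)$ are the (already computed) vertex-to-vertex graph distances. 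Each $f_i$ is affine in $(x,y)$, so its graph $z=f_i(x,y)$ is a plane in $3$D, and the claim becomes $d(p,q)=\min\{f_1,f_2,f_3,f_4\}$; that is, the distance surface is exactly the lower envelope of these (at most) four planes.

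To establish this equality I would argue both inequalities. For ``$\le$'', each $f_i$ is realized by a concrete walk (go from $p$ to an endpoint of $e$ along $e$, follow a shortest vertex-to-vertex path, then go from an endpoint of $e'$ to $q$ along $e'$), so the triangle inequality gives $d(p,q)\le f_i(x,y)$ for every $i$, hence $d(p,q)\le\min_i f_i$. For ``$\ge$'', I would use that a shortest $pq$-path exits $e$ through $a$ or $b$ and then follows a shortest path to $q$, giving $d(p,q)=\min\{x+d(a,q),\,(|e|-x)+d(b,q)\}$; expanding $d(a,q)$ and $d(b,q)$ by the same entry argument on $e'$ (enter through $u$ or $v$) reproduces exactly the four $f_i$, so $d(p,q)\ge\min_i f_i$. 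The count ``at most four'' then follows, with fewer pieces in degenerate situations, for instance when $e$ and $e'$ share an endpoint, so that some of the endpoint distances vanish and the corresponding planes coincide or become redundant.

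The main obstacle I anticipate is the ``$\ge$'' direction, and specifically ruling out that a shortest path could do something not captured by these four types. The subtle point is that each $f_i$ is only an \emph{upper} bound and need not correspond to a non-backtracking path: if, say, the shortest $a$--$u$ path happens to traverse edge $e$ through $b$, then the ``exit through $a$'' walk of type $f_1$ would retrace part of $e$ and is not a genuine shortest path. I would handle this by observing that such a degenerate type is always dominated by the complementary type (the one exiting through the other endpoint of $e$), so it can never attain the minimum; consequently, taking the pointwise minimum over all four planes still yields the true distance, and the presence of non-optimal planes in the envelope does no harm. This reduces the lemma to the clean endpoint-decomposition of shortest paths, which is precisely the structure already exploited in Lemma~\ref{lem:vertex} and Remark~\ref{rem:division_edge}.
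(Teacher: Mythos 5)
Your proposal is correct and follows essentially the same approach as the paper: both decompose any $pq$-path by the endpoint of $e$ it exits through and the endpoint of $e'$ it enters through, obtaining the same four affine candidate functions whose pointwise minimum is $d(p,q)$, with degeneration to fewer planes when the edges share an endpoint. Your write-up merely makes explicit the two inequalities and the domination of ``backtracking'' types, which the paper leaves implicit.
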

\begin{proof}
Any path connecting points $p$ and $q$ must go through an endpoint of $e=uv$ and an endpoint of $e'=u'v'$. Assume first that the four endpoints are distinct.  Parametrizing the points on $e$ and $e'$ as $p=xv+(1-x)u$ and $q=yv'+(1-y)u'$ for $x,y\in [0,1]$, 
we obtain four planes. For each of the four possible pairs of endpoints of $e$ and $e'$, the corresponding plane gives the length of a shortest $pq$-path among the $pq$-paths that go through those endpoints; their equations are:  
\begin{equation} \label{planes}
    \begin{array}{cl}
   P(u,u'): &     z=|e|x+|e'|y+d(u,u') \\
    P(u,v'): &     z=|e|x+|e'|(1-y)+d(u,v')  \\
     P(v,u'): &    z=|e|(1-x)+|e'|y+d(v,u')\\
    P(v,v'): &     z=|e|(1-x)+|e'|(1-y)+d(v,v')
    \end{array}
\end{equation}
where, for instance, $P(u,u')$ indicates that the paths considered to connect the points on $e$ with those on $e'$ go through endpoints $u$ and $u'$. Hence, for any two points $p\in e$ and $q\in e'$, the function $d(p,q)$ is the minimum among the four values obtained. 

The above argument can be adapted naturally when $e$ and $e'$ have a common endpoint, in which case there are only two planes.
\end{proof}

The previous result implies that the continuous mean distance between any two edges can be computed in constant time if the distances between their endpoints are known.
However, we give next a direct way to compute it that avoids the computation of lower envelopes.
In particular, we show that it is also possible to compute $\mu_c(e,e')$ in constant time by considering the volume of a three-dimensional body with a rectangular base 
(one side with the length of $e$ and the other with the length of $e'$), four vertical faces from each of the four base edges, and a \emph{roof} that is the lower envelope defined in Lemma \ref{lem:two_edges}. Next, we describe how this lower envelope or \emph{roof} can be viewed.

We consider a rectangle whose corners are labeled with the possible combinations of endpoints of $e=uv$ and $e'=u'v'$, as done to define the four planes in the proof of Lemma \ref{lem:two_edges}. The labels also include a weight equal to the distance between the corresponding endpoints; when no confusion may arise, we shall only indicate in the figures the weights of the corners. In addition, the rectangle is split (into at most four regions) by the orthogonal projection onto the $(x,y)$-plane of the (at most five) intersections of the planes defined by the equations in (\ref{planes}). Refer to Figure~\ref{fig:roof}.

Thus, for instance, a pair $(p,q)$, with $p\in e$ and $q\in e'$, is located in the region associated to $(u,u')$ if $d(p,q)$ is given by a path that goes through $u$ and $u'$, that is, by the plane  $z=|e|x+|e'|y+d(u,u')$ of Lemma \ref{lem:two_edges}. Hence, for $p=xv+(1-x)u$ and $q=yv'+(1-y)u'$ with $x,y\in [0,1]$, the value $d(p,q)$ is just the distance in the $L_1$ metric\footnote{The $L_1$-distance between two points $p=(x_p, y_p)$ and $q=(x_q, y_q)$ is given by $|x_p-x_q|+|y_p-y_q|$.} from $(p,q)$ to the corner $(u,u')$ plus the weight of that corner, which is $d(u,u')$. This is analogous for the remaining corners of the rectangle. Thus, the projections of the intersections between the planes of equations in (\ref{planes}) can be viewed as the bisectors of the additively weighted Voronoi diagram for the $L_1$ metric~\cite{okabe} of the corners of the rectangle. Therefore, the first step to compute $d(p,q)$ is to determine the region in which $(p,q)$ lies, as it determines the plane that defines the lower envelope over $(p,q)$.





The mean distance of the points in each of the (at most) four Voronoi regions is the volume of a truncated prism (with the corresponding Voronoi region as base and the corresponding plane of  Lemma \ref{lem:two_edges} as roof) divided by the area of the base. From a practical point of view, since there is no formula for a direct computation of that volume, it is better to subdivide the original Voronoi diagram into sub-rectangles and triangles, as Figure~\ref{fig:roof_subdiv} shows, since, in those cases, the volume of the truncated prism is given by the average height of the corners, see \cite{Klamkin1968}. In the proof of Proposition \ref{Kn}, we use this technique to compute the continuous mean distance of the $1$-uniform complete graph, and it is also referred in the proof of Proposition \ref{prop:bounds}.

\begin{figure}
     \centering
     \begin{subfigure}[b]{0.45\textwidth}
         \centering
         \includegraphics[page=1]{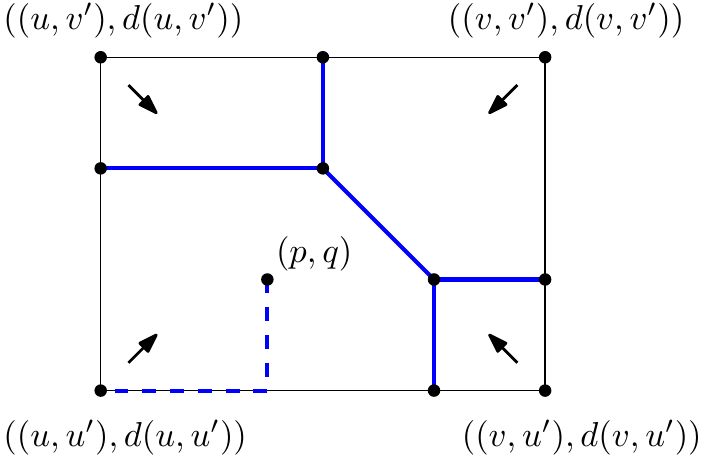}
         \caption{The length of the dashed path allows to compute the distance $d(p,q)$.}
         \label{fig:roof}
     \end{subfigure}
     \hfill
     \begin{subfigure}[b]{0.45\textwidth}
         \centering
         \includegraphics[page=2]{fig-roof23-1.pdf}
          \caption{Subdivision into simpler shapes for simpler computation.}
        \label{fig:roof_subdiv}
     \end{subfigure}
        \caption{The \emph{roof} determined by the distances between points on $e=uv$ and $e'=u'v'$, and a pair of points $(p,q)$ located in the region associated to $(u,u')$. Each corner is labeled with the height of the vertical edge of that corner, and the arrows show the direction of maximum slope of each roof.}
        \label{fig:roofs}
\end{figure}

\

\begin{remark}
Although we have already seen in Equation~(\ref{eq:path_formula}) that $\mu_c(e,e)=|e|/3$, we can give an interpretation of this value in terms of a roof-diagram (see Figure~\ref{fig:roof4}).
In this case, the rectangle becomes a square, and the roof is formed by two planes:
\begin{equation*} \label{one-edge}
    \begin{array}{cl}
    P(u,v):  &  z=|e|(y-x) \\
    P(v,u):  & z=|e|(x-y), \\
    \end{array}
\end{equation*}
\noindent where $P(u,v)$ indicates that point $p=xv+(1-x)u$ is closer to $u$ than $q=yv+(1-y)u$ (analogous for $P(v,u)$). 
\end{remark}

\begin{figure}[t]
\centering
\includegraphics{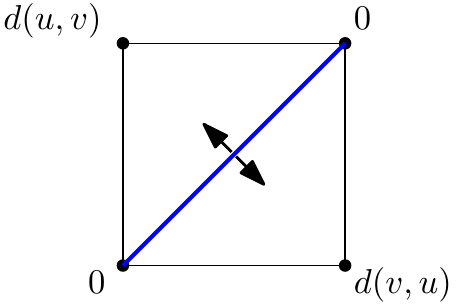}
\caption{The case $e=e'$.}
\label{fig:roof4}
\end{figure}

Summarizing the above discussion, we have the following theorem.

\begin{theorem}  \label{cor:floor}
Let $G$ be a connected weighted graph. Given two edges $e,e'\in E(\NL)$, the function $\mu_c(e,e')$ can be expressed as a weighted volume of at most eight truncated rectangular prisms. 
\end{theorem}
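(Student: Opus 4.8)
The plan is to reduce the claim about $\mu_c(e,e')$ to the geometric picture already set up in Lemma~\ref{lem:two_edges} and the surrounding discussion, and then turn each Voronoi region of the roof into a constant number of truncated rectangular prisms whose volumes are computable by the average-height formula of Klamkin~\cite{Klamkin1968}. Recall from Equation~(\ref{eq:def1}) that
\begin{equation*}
\mu_c(e,e') = \frac{1}{|e||e'|}\iint_{p\in e,\, q\in e'} d(p,q)\, dp\, dq,
\end{equation*}
so after the parametrization $p=xv+(1-x)u$, $q=yv'+(1-y)u'$ with $x,y\in[0,1]$, the double integral of $d(p,q)$ is exactly the volume of the solid whose base is the rectangle $[0,|e|]\times[0,|e'|]$ and whose roof is the lower envelope of the (at most four) planes in~(\ref{planes}). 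Thus $\mu_c(e,e')$ equals this volume divided by $|e||e'|$, and proving the theorem amounts to showing the volume decomposes into at most eight truncated rectangular prisms.

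First I would invoke Lemma~\ref{lem:two_edges} to guarantee that the roof is the lower envelope of at most four planes, so the base rectangle is partitioned by the projections of the pairwise plane intersections into at most four regions, one per active plane; as noted in the text these projected bisectors are precisely the additively weighted $L_1$-Voronoi bisectors of the four weighted corners. The key structural observation is that each such $L_1$-Voronoi region is bounded by axis-parallel segments and segments of slope $\pm 1$, hence is a polygon that can be cut by horizontal and vertical lines into a constant number of axis-aligned rectangles and right triangles (legs parallel to the axes). Over any such rectangle or axis-aligned right triangle, the roof is a single affine function, so the solid above it is a genuine truncated rectangular prism (the triangle being viewed as a degenerate rectangle with two coincident corners), whose volume is the base area times the average of the corner heights, by~\cite{Klamkin1968}. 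This is exactly the subdivision illustrated in Figure~\ref{fig:roof_subdiv}.

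The counting step is then to certify that the total number of such pieces is at most eight. With four Voronoi regions and the requirement that each be split into rectangles and triangles, a naive bound would exceed eight, so the main obstacle is the bookkeeping: I would argue that, because the bisectors are $L_1$-bisectors of only four sites sharing a common rectangle, the arrangement inside the unit rectangle has a very restricted combinatorial type (at most five bisector arcs meeting in a bounded pattern), and a direct case analysis of the possible Voronoi diagrams shows each region needs at most two sub-pieces, giving at most eight in total. I would also separately treat the degenerate configurations: when $e$ and $e'$ share an endpoint there are only two planes (as in Lemma~\ref{lem:two_edges}), and when ties occur the Voronoi diagram degenerates, in both cases only reducing the piece count. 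Summing the per-piece volumes and dividing by $|e||e'|$ yields $\mu_c(e,e')$ as a weighted sum of at most eight truncated-prism volumes, which is the statement. The delicate part is precisely pinning down the ``at most eight'' bound tightly rather than the volume computation itself, which is routine once the affine-over-each-piece property is in hand.
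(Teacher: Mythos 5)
Your overall route coincides with the paper's: reduce $\mu_c(e,e')$ to the volume, over the base rectangle, of the solid whose roof is the lower envelope of the planes from Lemma~\ref{lem:two_edges}; identify the projected plane intersections with the bisectors of the additively weighted $L_1$-Voronoi diagram of the four weighted corners; and cut each Voronoi cell into axis-aligned rectangles and right triangles so that the average-of-corner-heights volume formula of \cite{Klamkin1968} applies to each piece. All of that matches the discussion preceding the theorem.

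The step that does not work as you propose it is the count of eight. Your claim that ``each region needs at most two sub-pieces'' is false: a cell whose boundary contains a portion of a diagonal bisector is, generically, an axis-parallel rectangle with one corner truncated by a line of slope $\pm 1$, i.e.\ a pentagon with three right angles, and such a pentagon cannot be partitioned into fewer than three admissible pieces (removing the unique axis-aligned right triangle having the slanted edge as hypotenuse leaves an L-shaped hexagon, which needs two more rectangles; any two-piece attempt leaves a right trapezoid, for which the average-height formula is not valid). The bound of eight is nevertheless correct, but the tally is different and hinges on a structural fact you do not use: in a planar subdivision of the rectangle into four connected cells, each containing one corner, the two ``diagonal'' adjacencies $(u,u')$--$(v,v')$ and $(u,v')$--$(v,u')$ cannot both contribute a bisector edge (a separating path through one shared edge would have to be crossed by a path realizing the other adjacency, yet those paths lie in disjoint pairs of cells). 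Hence at most one of the two slope-$\pm 1$ bisectors appears --- consistent with the ``at most five'' bisector arcs --- so two of the four cells are axis-parallel rectangles (one prism each) and the other two have at most one truncated corner (at most three prisms each), giving $1+1+3+3=8$. Your treatment of the degenerate configurations (shared endpoint, ties) is fine, as these only reduce the count.
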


\section{Specific cases: trees, cactus, and complete graphs}
\label{sec:particular_cases}

The study developed in the previous section reflects the difficulties of computing the continuous mean distance, even for specific weighted graphs. 
As mentioned in the Introduction, the value of this parameter is only known for seven other simple graphs and six very specific graph families~\cite{DG82srmd,DG82mds}. In this section we deal with complete graphs and graphs that have cut vertices. For graphs that have this structural property, the continuous mean distance can be computed faster than using Theorem \ref{th:quadratic}, by studying each block independently.  

\begin{lemma} \label{lem:cutpoint}
Let $\NL$ be a weighted connected graph with a cut-vertex $v$, i.e., $\NL= \NL^1 \cup \NL^2$ and $\NL^1 \cap \NL^2 =\{v\}$. Then,
$\displaystyle\mu_c(\NL^1,\NL^2)=\mu_c(v,\NL^1)+\mu_c(v,\NL^2)$ and
$$\mu_c(\NL)=\left( \displaystyle{\frac{|\NL^1|}{|\NL|}} \right)^2 \mu_c(\NL^1) + \left( \displaystyle{\frac{|\NL^2|}{|\NL|}} \right)^2\mu_c(\NL^2)+2\left( \displaystyle{\frac{|\NL^1||\NL^2|}{|\NL|^2}} \right) \left(\mu_c(v,\NL^1)+\mu_c(v,\NL^2)\right).$$
\end{lemma}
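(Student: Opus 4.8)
The plan is to reduce both identities to a single structural fact—the cut-vertex property $d(p,q) = d(p,v) + d(v,q)$, valid for every $p \in \NL^1$ and $q \in \NL^2$—together with the additivity of integration over the (essentially) disjoint pieces $\NL^1$ and $\NL^2$.

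First I would prove the identity for $\mu_c(\NL^1,\NL^2)$. Since $v$ is a cut-vertex and $\NL^1 \cap \NL^2 = \{v\}$, every $pq$-path with $p \in \NL^1$ and $q \in \NL^2$ must pass through $v$; hence $d(p,q) = d(p,v) + d(v,q)$. Substituting this into the definition (Equation~(\ref{eq:def1})) and splitting the double integral into two pieces, the term carrying $d(p,v)$ is independent of $q$, so integrating it over $q \in \NL^2$ produces a factor $|\NL^2|$ that cancels the corresponding factor in the denominator and leaves exactly $\mu_c(v,\NL^1)$ as defined in Equation~(\ref{eq:def0}). The symmetric manipulation applied to the $d(v,q)$ term yields $\mu_c(v,\NL^2)$, so $\mu_c(\NL^1,\NL^2) = \mu_c(v,\NL^1) + \mu_c(v,\NL^2)$.

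Next I would expand $\mu_c(\NL) = \mu_c(E(\NL),E(\NL))$ by partitioning the domain of integration in Equation~(\ref{eq:def2}). Because $\NL^1$ and $\NL^2$ share only the single point $v$, their edge sets are disjoint and $|\NL| = |\NL^1| + |\NL^2|$. Writing the integral over $\NL \times \NL$ as the sum of the four contributions from $\NL^1 \times \NL^1$, $\NL^2 \times \NL^2$, $\NL^1 \times \NL^2$, and $\NL^2 \times \NL^1$, the first two give $|\NL^1|^2\,\mu_c(\NL^1)$ and $|\NL^2|^2\,\mu_c(\NL^2)$ directly from the definition. By symmetry of the distance function $d$, the two mixed terms are equal, each equal to $|\NL^1||\NL^2|\,\mu_c(\NL^1,\NL^2)$. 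Dividing through by $|\NL|^2$ and substituting the first identity for $\mu_c(\NL^1,\NL^2)$ produces the stated formula.

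The argument is essentially bookkeeping once the cut-vertex property is established; the only point demanding a moment of care is that the overlap at $v$ is negligible for integration. Since $v$ is a single point, it has measure zero on the edges, so the domain genuinely splits additively and no edge contributes to both $\NL^1$ and $\NL^2$. I do not anticipate a substantive obstacle beyond verifying this measure-zero technicality and keeping the symmetric mixed terms correctly accounted for.
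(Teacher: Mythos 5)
Your proposal is correct and follows essentially the same route as the paper: both use the cut-vertex identity $d(p,q)=d(p,v)+d(v,q)$ to split the mixed double integral into the two one-point means, and both decompose the full integral over $\NL\times\NL$ into the diagonal and mixed blocks before substituting the first identity. The measure-zero remark about the overlap at $v$ is a fine extra touch, though the paper leaves it implicit.
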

\begin{proof} If $p \in \NL^1$ and $q \in \NL^2$, then $d(p,q)=d(p,v)+d(v,q)$. By Equations (\ref{eq:def1}) and (\ref{eq:def0}), we have: 

\begin{align*}
    \mu_c(\NL^1,\NL^2) &\stackrel{(\ref{eq:def1})}{=} \frac{1}{|\NL^1||\NL^2|}\iint_{p\in \NL^1, \, q \in \NL^2} d(p,q) \,dp\,dq \\
    &= \frac{1}{|\NL^1||\NL^2|}\left(\iint_{p\in \NL^1, \, q \in \NL^2} d(p,v)\,dp\,dq + \iint_{p\in \NL^1, \, q \in \NL^2} d(v,q)\,dp\,dq \right) \\
    &\stackrel{(\ref{eq:def0})}{=} \frac{1}{|\NL^1||\NL^2|}\left(|\NL^1|\int_{q \in \NL^2} \mu_c(v,\NL^1)dq + |\NL^2|\int_{p\in \NL^1} \mu_c(v,\NL^2) dp \right).
    \end{align*} 
Since in the last equation the integrated functions are constant with respect to the corresponding differentials, we obtain:
\begin{equation*}
    \mu_c(\NL^1,\NL^2) = \frac{1}{|\NL^1||\NL^2|}\left(\mu_c(v,\NL^1)|\NL^1||\NL^2|+\mu_c(v,\NL^2)|\NL^1||\NL^2| \right) 
    = \mu_c(v,\NL^1)+\mu_c(v,\NL^2).
\end{equation*} 
The formula for $\mu_c(\NL)$ is then obtained as follows:
\begin{align*}
    \mu_c(\NL) &\stackrel{(\ref{eq:def1})}{=} \frac{1}{|\NL|^2}\iint_{p,q\in \NL} d(p,q) \,dp\,dq \\
    &= \frac{1}{|\NL|^2}\left(\iint_{p,q\in \NL^1} d(p,q)\,dp\,dq + \iint_{p,q\in \NL^2} d(p,q)\,dp\,dq +2 \iint_{p\in \NL^1, \, q \in \NL^2} d(p,q)\,dp\,dq \right) \\
    & = \frac{1}{|\NL|^2}\left(|\NL^1|^2 \mu_c(\NL^1) + |\NL^2|^2 \mu_c(\NL^2)+2|\NL^1||\NL^2|  \mu_c(\NL^1,\NL^2) \right)  \\
    & = \frac{1}{|\NL|^2}\left(|\NL^1|^2 \mu_c(\NL^1) + |\NL^2|^2 \mu_c(\NL^2)+2|\NL^1||\NL^2|  (\mu_c(v,\NL^1)+\mu_c(v,\NL^2)) \right).
    \end{align*} 
\end{proof}

Thus, if we know a direct formula to obtain the continuous mean distance of each block of $\NL$,  then $\mu_c(\NL)$ can be computed in linear time.
For instance, this is the case for trees.

\begin{proposition}\label{prop:trees}
The continuous mean distance of a weighted tree $T_{\ell}$ with $n$ vertices can be computed in $O(n)$ time.
\end{proposition}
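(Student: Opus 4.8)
The plan is to combine the cut-vertex decomposition of Lemma~\ref{lem:cutpoint} with the fact that, for a tree, every edge is a block (a bridge), so the recursion bottoms out at single edges whose continuous mean distance is already known. The key observation is that a tree $T_\ell$ with $n$ vertices has exactly $n-1$ edges, and each internal vertex can serve as a cut-vertex splitting the tree into smaller subtrees. Rather than applying Lemma~\ref{lem:cutpoint} naively (which would require knowing $\mu_c(v,\N_\ell^1)$ and $\mu_c(v,\N_\ell^2)$ for arbitrary subtrees), I would root the tree at an arbitrary vertex $r$ and process vertices in post-order, computing for each subtree three aggregate quantities in a single bottom-up pass.

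Concretely, for each vertex $v$ let $T_v$ denote the subtree rooted at $v$. I would maintain, for each $v$: the total edge length $L_v=|T_v|$, the weighted continuous mean distance $\mu_c(T_v)$ (equivalently the \emph{unnormalized} double integral $\iint_{p,q\in T_v} d(p,q)\,dp\,dq$, which avoids repeated division), and the \emph{transmission-type} quantity $\sigma_v=\int_{q\in T_v} d(v,q)\,dq = L_v\,\mu_c(v,T_v)$ giving the total distance from the root of the subtree to all points below it. The crucial point is that all three quantities satisfy simple recurrences when a parent $v$ absorbs a child subtree across an edge $vw$ of length $|vw|$: the new $\sigma$-value gains a contribution from the child's points (each at distance $|vw|$ further, plus the edge itself contributing its own $\sigma$ as computed in Equation~\eqref{eq:path_formula}), and the cross-term in the integral is handled exactly by Lemma~\ref{lem:cutpoint} applied at $v$, where $\mu_c(v,\N_\ell^2)$ for the child side is recovered from the child's $\sigma$-value shifted by $|vw|$.

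I would carry out the steps in this order. First, establish the single-child merge formula: if $v$ currently aggregates some already-merged subtree $A$ and we attach via edge $vw$ a child subtree $B$ rooted at $w$, then $d(v,q)=|vw|+d(w,q)$ for $q\in B$, so $\sigma$ updates additively and the new double integral is $\iint_A + \iint_B + \iint_{vw} + 2\,(\text{cross terms})$, where each cross term is an instance of Lemma~\ref{lem:cutpoint} with cut-vertex $v$. Second, verify that each merge costs $O(1)$ arithmetic operations given the stored aggregates, and that processing all children of all vertices performs exactly $n-1$ merges (one per edge). Third, conclude that the full post-order traversal runs in $O(n)$ time and yields $\mu_c(T_\ell)=\iint_{p,q\in T_\ell} d(p,q)\,dp\,dq \,/\, L_r^2$ at the root.

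The main obstacle I anticipate is the careful bookkeeping of the $\mu_c(v,\cdot)$ terms across edges, i.e.\ correctly propagating the ``distance-to-root'' quantity $\sigma_v$ through an edge of positive length so that the cross-term in Lemma~\ref{lem:cutpoint} is evaluated with respect to the right vertex. One must check that attaching edge $vw$ contributes not only the child's interior distances but also the half-edge offsets (the $|e|/2$ and $|e|/3$ terms from Lemma~\ref{lem:vertex} and Equation~\eqref{eq:path_formula}), and that these are folded in without double counting. Once this merge identity is stated precisely and shown to be $O(1)$ per edge, linearity of the overall algorithm is immediate from the $n-1$ edges of the tree.
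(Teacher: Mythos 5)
Your proposal is correct and follows essentially the same route as the paper: a cut-vertex decomposition via Lemma~\ref{lem:cutpoint}, recursing over the tree with the base case $\mu_c(e,e)=|e|/3$ from Equation~(\ref{eq:path_formula}), and an $O(1)$ combination per edge. The only difference is presentational — you make explicit the auxiliary quantities (total length, unnormalized double integral, and $\sigma_v=\int_{q\in T_v}d(v,q)\,dq$) that must be propagated for each merge to cost $O(1)$, whereas the paper's inductive proof leaves this bookkeeping implicit.
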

\begin{proof}
We apply induction on $n$. The continuous mean distance of an edge $e$ is, by Equation~(\ref{eq:path_formula}), $|e|/3$. For $n\geq 3$, take a non-leaf vertex $v$ of a tree $T_{\ell}$, which is a cut-vertex, and consider the two sub-trees connected by $v$. By Lemma \ref{lem:cutpoint},  $\mu_c(T_{\ell})$ is obtained by computing the total length of each sub-tree, its continuous mean distance, and its continuous mean distance from $v$. By induction, these values can be computed in linear time and combined in constant time to obtain $\mu_c(T_{\ell})$.
\end{proof}

 Another interesting application of Lemma \ref{lem:cutpoint} is for the well-known  \emph{cactus} graphs, see for instance \cite{FLLW-00, ZZ05} for studies in the context of location on graphs. This type of graphs has cut vertices, and each block is either an edge or a cycle. 
 Since the continuous mean distance of a cycle $C_{\ell}$ is $|C_{\ell}|/4$ (see \cite{DG82srmd}), and that of an edge is given by Equation~(\ref{eq:path_formula}), we obtain (again, by induction) the following result.
 
\begin{proposition}\label{prop:cactus}
The continuous mean distance of a weighted cactus graph with $n$ vertices can be computed in $O(n)$ time.
\end{proposition}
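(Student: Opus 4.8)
The plan is to proceed by induction on the number of blocks of the cactus graph, mirroring the structure of the proof of Proposition~\ref{prop:trees} but handling two types of blocks instead of one. First I would set up the induction so that each inductive step peels off one leaf block of the cactus at a cut vertex $v$, writing $\NL = \NL^1 \cup \NL^2$ with $\NL^1 \cap \NL^2 = \{v\}$, where $\NL^2$ is a single leaf block (either an edge or a cycle) and $\NL^1$ is the remaining cactus. By Lemma~\ref{lem:cutpoint}, computing $\mu_c(\NL)$ from $\mu_c(\NL^1)$, $\mu_c(\NL^2)$, $\mu_c(v,\NL^1)$, and $\mu_c(v,\NL^2)$ requires only the block lengths and these four mean-distance quantities, combined in constant time.

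The key observation that makes the running time linear is that for each of the two possible block types we have a closed-form value available in constant time: the continuous mean distance of an edge is $|e|/3$ by Equation~(\ref{eq:path_formula}), and that of a cycle $C_{\ell}$ is $|C_{\ell}|/4$ as cited from \cite{DG82srmd}. Thus $\mu_c(\NL^2)$ is known directly. I would then argue that the point-to-block quantity $\mu_c(v,\NL^2)$ is likewise computable in constant time for each block type: for an edge $e=va$ incident to $v$, $\mu_c(v,e) = |e|/2$ by Lemma~\ref{lem:vertex} (with $d(v,v)=0$ and $e \in E(T_v)$); for a cycle of length $L$ through $v$, the furthest point from $v$ is the antipodal point at distance $L/2$, and integrating the distance function (which rises linearly to $L/2$ and back) gives $\mu_c(v,C_{\ell}) = L/4$ by an elementary computation analogous to Lemma~\ref{lem:star_case}.

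The remaining quantity is $\mu_c(v,\NL^1)$, the continuous mean distance from the cut vertex into the already-processed part of the cactus. The plan is to maintain this value as auxiliary data during the induction: whenever a block is attached at $v$, the contribution of that block to $\mu_c(v,\NL^1)$ can be updated in constant time, since $\mu_c(v,\NL^1)$ is a length-weighted average of the per-block values $\mu_c(v, B)$ over the blocks $B$ of $\NL^1$, each of which decomposes along the tree of blocks. Concretely, one performs a single traversal of the block-cut tree of the cactus, computing for each cut vertex the aggregate distance to all points of its pendant subgraph; this is the continuous analogue of the standard linear-time all-subtree-sum computation used for the discrete Wiener index on trees, and each edge or cycle block contributes $O(1)$ work.

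The main obstacle I anticipate is bookkeeping rather than any deep mathematical difficulty: one must verify carefully that the distance additivity $d(p,q)=d(p,v)+d(v,q)$ used in Lemma~\ref{lem:cutpoint} indeed holds across the cut vertex for \emph{cycle} blocks, i.e.\ that a shortest path leaving $\NL^2$ must pass through $v$ even though $\NL^2$ contains a cycle. This is immediate from the definition of a cut vertex (every $\NL^1$--$\NL^2$ path uses $v$), but it deserves an explicit sentence. Granting this, the induction closes: the base case is a single edge or single cycle, and the number of blocks and cut vertices in a cactus with $n$ vertices is $O(n)$, so the total time is $O(n)$.
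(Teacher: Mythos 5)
Your proposal is correct and takes essentially the same approach as the paper: an induction over the blocks of the cactus via Lemma~\ref{lem:cutpoint}, using the closed forms $|e|/3$ for edge blocks and $|C_{\ell}|/4$ for cycle blocks. If anything, you are more explicit than the paper's (very terse) proof about the point-to-block quantities $\mu_c(v,\cdot)$ and the block-cut-tree bookkeeping needed to keep the overall running time linear.
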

 
 When the graphs have no cut vertices, the method described in Subsection \ref{subsec:roof} is a useful tool to compute the continuous mean distance. Next, we apply this method to the $\alpha$-uniform complete graph $K_n^{\alpha}$. 
 While the value $\mu_d(K_n)=(n-1)/n$  is trivial to compute, the continuous version is much harder.

\begin{proposition}\label{Kn}
The continuous mean distance of the $\alpha$-uniform complete graph $K_n^{\alpha}$ is given by the following formula:
$$\mu_c ((K_n^{\alpha})_{\ell})=\frac{\alpha (9 \, n^{2} - 22 \, n + 12)}{6 \, {\left(n^{2} - n\right)}}.$$
\end{proposition}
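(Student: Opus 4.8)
The plan is to reduce to the $1$-uniform case and then combine Equation~(\ref{eq:sum}) with the roof/Voronoi method of Subsection~\ref{subsec:roof}. First I would invoke Remark~\ref{rem:homotecia}: since $K_n^{\alpha}$ is obtained from the $1$-uniform complete graph by scaling every edge length by $\alpha$, we have $\mu_c((K_n^{\alpha})_{\ell})=\alpha\,\mu_c((K_n)_{\ell})$, so it suffices to prove the formula for $\alpha=1$. The crucial simplification in the $1$-uniform case is that every vertex-to-vertex distance equals $1$, so all the corner weights $d(\cdot,\cdot)$ that appear in the four planes of Lemma~\ref{lem:two_edges} are equal to $1$.

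Next I would compute $\mu_c(e,e')$ for the only two types of ordered pairs of distinct edges of $K_n$. For two \emph{disjoint} edges $e=uv$, $e'=u'v'$ all four planes are present, but with unit corner weights the additively weighted $L_1$ Voronoi diagram over $[0,1]^2$ degenerates to the four quadrants cut out by $x=\tfrac12$ and $y=\tfrac12$; on each quadrant the roof is a single plane, and the ``average of the corner heights'' rule (\cite{Klamkin1968}) gives a contribution of $3/8$ per quadrant, so $\mu_c(e,e')=3/2$. For two \emph{adjacent} edges sharing a vertex $u$, say $e=uv$ and $e'=uw$, the lower envelope collapses to only two planes (the paths through the common vertex and those through the pair $v,w$), so $d(p,q)=\min(x+y,\,3-x-y)$ on $[0,1]^2$; integrating this lower envelope yields $\mu_c(e,e')=23/24$.

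Then I would count the ordered pairs. With $m=\binom{n}{2}$ there are $m(m-1)$ ordered pairs of distinct edges; exactly $n(n-1)(n-2)$ of them are adjacent (for each of the $n$ vertices there are $(n-1)(n-2)$ ordered pairs of distinct incident edges, and two distinct edges of $K_n$ meet in at most one vertex), whence $\tfrac14 n(n-1)(n-2)(n-3)$ are disjoint. Substituting these counts, together with the diagonal term $\sum_{e}|e|^3/3=m/3$, into Equation~(\ref{eq:sum}) and dividing by $m^2=\tfrac14 n^2(n-1)^2$, I would factor out $n(n-1)$ and simplify the bracket $\tfrac{23}{24}+\tfrac{3(n-3)}{8}=\tfrac{9n-4}{24}$ to reach $\mu_c((K_n)_{\ell})=\frac{9n^2-22n+12}{6(n^2-n)}$; restoring the scale factor $\alpha$ gives the claimed formula.

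The main obstacle is the careful evaluation of the two edge-to-edge mean distances through the roof picture—recognizing that the four-plane envelope degenerates to four quadrants in the disjoint case and to two planes in the adjacent case—and keeping the combinatorial count of adjacent versus disjoint ordered edge pairs perfectly consistent with the weighting in Equation~(\ref{eq:sum}); the remaining algebra is routine.
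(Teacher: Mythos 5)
Your proposal is correct and follows essentially the same route as the paper's proof: reduce to $\alpha=1$ via Remark~\ref{rem:homotecia}, evaluate $\mu_c(e,e')$ for adjacent and disjoint edge pairs through the roof/Voronoi picture of Subsection~\ref{subsec:roof} (obtaining $23/24$ and $3/2$, with the same counts $n(n-1)(n-2)$ and $\tfrac14 n(n-1)(n-2)(n-3)$ of ordered pairs), and substitute into Equation~(\ref{eq:sum}). The only cosmetic difference is that you integrate the two-plane envelope directly in the adjacent case, whereas the paper partitions the diagram into sub-rectangles and triangles and averages corner heights; both yield the same values.
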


\begin{proof}
By Remark \ref{rem:homotecia}, it suffices to prove the result for $\alpha=1$. We use Equation~(\ref{eq:sum}) and the technique described in Subsection \ref{subsec:roof} to compute the continuous mean distance between two distinct edges. There are two types of pairs of distinct edges $e$ and $e'$, incident and non-incident:

\,

\noindent \emph{Case 1.} If $e$ and $e'$ are incident at a vertex $u$, there is another edge connecting their non-common endpoints, say $v$ and $v'$. With respect to the description in Subsection \ref{subsec:roof}, we only have two planes:  $z=x+y$ (where $d(u,u')=0$) and  $z=2-x-y+d(v,v')$, and the corresponding roof--diagram is illustrated in Figure~\ref{fig:kn1}a. The partition of the diagram into one rectangle, one square and two triangles is shown in Figure~\ref{fig:kn1}b, where the number inside each region indicates the value of the volume of the truncated prism with that base; this number is given by the average height of the corners. Taking into account the corresponding areas of the base, we have $\mu_c(e,e')=\frac{1}{2} \cdot \frac{3}{4}+\frac{1}{4} \cdot \frac{4}{3}+\frac{1}{4}=\frac{23}{24}$, and the total number of this type of pairs of edges is $\displaystyle n(n-1)(n-2)$.

\,

\noindent \emph{Case 2.} If $e$ and $e'$ are non-incident, the roof--diagram looks as that of Figure~\ref{fig:kn2}. Now, $\mu_c(e,e')=3/2$, and the total number of these pairs of edges is 
$$\displaystyle {n\choose 2}^2-n(n-1)(n-2)-{n\choose 2},$$
where the last term is subtracted to take into account all pairs where $e=e'$, which are already considered in Equation~(\ref{eq:sum}). For $\alpha=1$, this equation then gives
$$\mu_c((K_n^{1})_{\ell})= \frac{4}{n^2(n-1)^2}\left( \sum_{(e,e')\in E \times E, e \neq e'}\mu_c(e,e')+\frac{n(n-1)}{6}\right)$$
where
$$\sum_{\substack{(e,e')\in E \times E \\ e \neq e}}\mu_c(e,e')=
\frac{23n(n-1)(n-2)}{24}+\frac{3}{2}\left(\frac{n^2(n-1)^2}{4}-n(n-1)(n-2)-\frac{n(n-1)}{2}\right).$$
Hence, 
$$\mu_c((K_n^{1})_{\ell})=\frac{92(n-2)}{24n(n-1)}+\frac{3}{2}-\frac{6(n-2)}{n(n-1)}-\frac{3}{n(n-1)}+\frac{2}{3n(n-1)}=\frac{9n^2-22n+12}{6n(n-1)}.$$
\end{proof}


\begin{figure}[ht]
\centering
\begin{tabular}{ccccccc}
\includegraphics{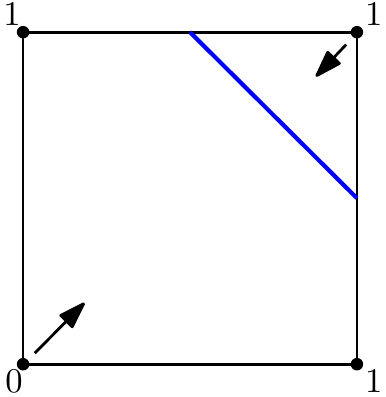}
&  &  &  & & &
\includegraphics{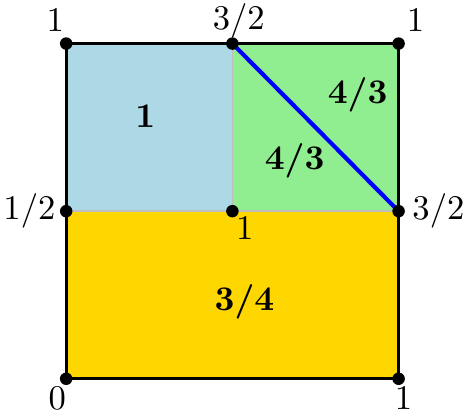}\\
(a)& & & & & & (b)
\end{tabular}
\caption{(a) Roof--diagram for two incident edges in $K_n^1$. 
 (b) Partition into truncated prisms; the value of the mean distance is indicated in each region in boldface, as well as the weights of the corners.}
\label{fig:kn1}
\end{figure}

\begin{figure}[ht]
\centering
\includegraphics{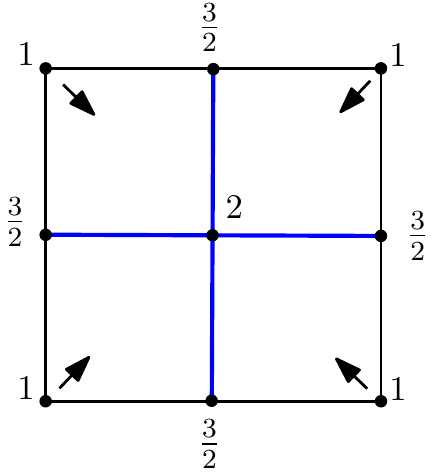}
\caption{The roof--diagram for two non-incident edges in $K_n^1$.}
\label{fig:kn2}
\end{figure}

Although this section is focused on the computation of the continuous mean distance of some specific weighted graphs, we conclude it with a result on the range of values of $\mu_c(T_{\ell})$, as it extends a similar result for the discrete case, which we believe is of interest. Indeed, in \cite{G-97}, the author proves that the Wiener index of any tree $T$ on $n$ vertices is lower-bounded by the Wiener index of the star $S$ on $n$ vertices, and upper-bounded by the Wiener index of the path $P$ with the same number of vertices (where the three graphs are unweighted). Therefore, by definition, $\mu_d(S)\leq \mu_d(T)\leq \mu_d(P)$. 
Next, we prove that, when the graph is uniform, these bounds also hold for the continuous case.    

\begin{proposition}
Let $S_{\ell}$ and $P_{\ell}$ be an $\alpha$-uniform star and $\alpha$-uniform path, respectively, on $n$ vertices. Then,
$$\mu_c(S_{\ell})\leq \mu_c(T_{\ell})\leq \mu_c(P_{\ell})$$
for every $\alpha$-uniform tree $T_{\ell}$ with $n$ vertices.
\end{proposition}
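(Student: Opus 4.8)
The plan is to show that, for a $1$-uniform tree, the continuous mean distance is an \emph{increasing affine function} of the classical (discrete) Wiener index $W(T)$; the extremal inequalities for $W$ among trees on $n$ vertices, proved in \cite{G-97}, then transfer immediately. By Remark~\ref{rem:homotecia} it suffices to treat the case $\alpha=1$, so that every edge has length $1$ and each tree has total length $L=n-1$; the three graphs $S_{\ell}$, $T_{\ell}$, $P_{\ell}$ all share this value of $L$, so after the reduction only $W$ can distinguish them.

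The key step is a \emph{continuous cut formula}. Starting from $\mu_c(T_{\ell})=\frac{1}{L^2}\iint_{p,q} d(p,q)\,dp\,dq$ and using that in a tree the $pq$-path is unique, so that $d(p,q)=\int_{r\in T_{\ell}}\mathbf{1}[r\in\mathrm{path}(p,q)]\,dr$, I would apply Fubini to obtain $\iint d(p,q)\,dp\,dq=\int_{r} 2\,L_1(r)L_2(r)\,dr$, where cutting $T_{\ell}$ at a (non-vertex) point $r$ splits it into two subtrees of total lengths $L_1(r)$ and $L_2(r)=L-L_1(r)$: indeed $r$ lies on the $pq$-path exactly when $p$ and $q$ fall on opposite sides of $r$, which for ordered pairs has measure $2L_1(r)L_2(r)$. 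This is the exact continuous analog of the classical vertex-cut identity $W(T)=\sum_e n_1(e)n_2(e)$.

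I would then integrate edge by edge. Fix an edge $e$ and let removing $e$ leave subtrees of total length $a_e$ and $b_e$ (not counting $e$ itself), so $a_e+b_e=n-2$; parametrizing $r$ along $e$ by $x\in[0,1]$ gives $L_1=a_e+x$ and $L_2=b_e+1-x$, and an elementary integral yields $\int_0^1 (a_e+x)(b_e+1-x)\,dx=a_eb_e+\tfrac{n-2}{2}+\tfrac16$. Summing over the $n-1$ edges, $\iint d(p,q)\,dp\,dq=2\sum_e a_eb_e+(n-1)(n-2)+\tfrac{n-1}{3}$. Since the tree is $1$-uniform, the two sides of $e$ have $n_1(e)$ and $n_2(e)$ vertices and hence lengths $a_e=n_1(e)-1$ and $b_e=n_2(e)-1$; this gives $\sum_e a_eb_e=\sum_e n_1(e)n_2(e)-(n-1)^2=W(T)-(n-1)^2$. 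Substituting, $\mu_c(T_{\ell})=\frac{2W(T)}{(n-1)^2}+c_n$ for a constant $c_n$ depending only on $n$, which is strictly increasing in $W(T)$. Applying $W(S)\le W(T)\le W(P)$ from \cite{G-97} and undoing the $\alpha$-scaling finishes the proof; as a check, this formula returns $\mu_c(P_{\ell})=(n-1)/3$, matching Equation~(\ref{eq:path_formula}).

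The part requiring the most care is the continuous cut formula itself: one must justify the Fubini exchange, argue that the set of points $r$ lying at vertices (where the cut may produce more than two components) has measure zero and is therefore negligible, and confirm that the measure of the ``opposite-side'' ordered pairs is exactly $2L_1(r)L_2(r)$. Once this identity is in place, the remaining computations are routine, and the entire difficulty of the continuous problem collapses onto the already-settled discrete extremal result.
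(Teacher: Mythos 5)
Your argument is correct, and it takes a genuinely different route from the paper. The paper proceeds by induction on $n$: it removes a leaf, applies the cut-vertex formula of Lemma~\ref{lem:cutpoint} to express $\mu_c$ of each of $T_{\ell}$, $S_{\ell}$, $P_{\ell}$ in terms of the corresponding $(n-1)$-vertex tree, and reduces the claim to comparing the quantities $\mu_c(v,T'_{\ell})$, which it handles by sorting the vector of vertex-to-edge distances $(d(v,e_1),\dots,d(v,e_{m}))$ and observing that the all-zeros vector (star) and $(0,1,\dots,n-3)$ (path) are extremal. Your proof instead establishes an exact identity, namely the continuous analog of the edge-cut formula $W(T)=\sum_e n_1(e)n_2(e)$: by Tonelli (the integrand is nonnegative and measurable, and the vertices form a null set, exactly the points you flag), $\iint d(p,q)\,dp\,dq=\int_r 2L_1(r)L_2(r)\,dr$, and the edge-by-edge computation yields $\mu_c(T_{\ell})=\tfrac{2W(T)}{(n-1)^2}+c_n$ with $c_n=-\tfrac{3n-1}{3(n-1)}$; I have checked the algebra, including $\sum_e a_eb_e=W(T)-(n-1)^2$, and the formula reproduces both $\mu_c(P_{\ell})=(n-1)/3$ and $\mu_c(S_{\ell})=\tfrac{3n-5}{3(n-1)}$ (the latter agrees with a direct computation via Equation~(\ref{eq:sum})). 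The inequality then follows from the discrete extremal result of \cite{G-97}, which the paper itself quotes. Your approach buys more than the paper's: it gives a closed-form affine reduction of $\mu_c$ to the classical Wiener index for uniform trees (hence also the characterization of equality, via uniqueness of the discrete extremizers), and the cut identity extends verbatim to arbitrary edge lengths, giving an alternative linear-time formula for weighted trees in the spirit of Proposition~\ref{prop:trees}. The paper's induction, by contrast, stays entirely within its own machinery (Lemma~\ref{lem:cutpoint} and Lemma~\ref{lem:vertex}) and avoids invoking any measure-theoretic exchange, but produces only the inequality.
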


\begin{proof}
It suffices to prove the result for $\alpha=1$ (see Remark \ref{rem:homotecia}). We apply induction on $n$. Let $T_{\ell}':= T_{\ell}\setminus{\{u_1\}}$, $S_{\ell}':= S_{\ell}\setminus{\{u_2\}}$, and $P_{\ell}':= P_{\ell}\setminus{\{u_3\}}$ where $u_i$ is, in each case, a leaf adjacent to a vertex $v_i$, $1\leq i\leq 3$, of the corresponding graph. 
Lemma \ref{lem:cutpoint} gives expressions for $\mu_c(T_{\ell})$, $\mu_c(S_{\ell})$, and $\mu_c(P_{\ell})$ in terms of the continuous mean distances of the corresponding edge $u_iv_i$ and, respectively, $T_{\ell}'$, $S_{\ell}'$, and  $P_{\ell}'$ (simply set $\N_{\ell}^1$ as the  graph on $n-1$ vertices and $\N_{\ell}^2$ as the edge $u_iv_i$). For $T_{\ell}$ we obtain:
$$\mu_c(T_{\ell})=\left( \displaystyle{\frac{n-2}{n-1}} \right)^2 \mu_c(T_{\ell}') + \displaystyle{\frac{1}{3(n-1)^2}}+2\left( \displaystyle{\frac{n-2}{(n-1)^2}} \right) \left( \mu_c(v_1,T_{\ell}')+\frac{1}{2}\right),$$
and analogous expressions are obtained for $\mu_c(S_{\ell})$ and $\mu_c(P_{\ell})$ (by simply replacing $T_{\ell}'$ and $v_1$ by either $S'_{\ell}, v_2$ or $P'_{\ell}, v_3$, respectively). Hence, by induction, it suffices to prove that $\mu_c(v_2,S_{\ell}')\leq\mu_c(v_1,T_{\ell}')\leq\mu_c(v_3,P_{\ell}')$ where $v_2 $ is the central vertex of $S'_{\ell}$, and $v_3$ is an endpoint of $P'_{\ell}$. 

Given a vertex $v$ of any $1$-uniform tree $T$ with $m$ edges, by Equation (\ref{eq:def0}) and Lemma \ref{lem:vertex}, 
$$\mu_c(v,T)=\frac{\sum_{e \in T}\mu_c(v,e)}{m}=\frac{\frac{m}{2}+\sum_{e \in T}d(v,e)}{m},$$ since all edges in the tree belong to $E(T_v)$ and have weight $1$. Therefore, $\mu_c(v,T)$ is determined by the vector $V(v,T)=(d(v,e_1), d(v,e_2), \ldots , d(v,e_{m}))$, where $E(T)=\{e_1, \ldots, e_m\}$ and $d(v,e_i)$ is sorted in increasing order. The first coordinate of the vector is always 0 ($v$ belongs to at least one edge), and the difference between two consecutive coefficients of the vector is at most $1$ (the length of any edge in the tree). Thus, the smallest possible vector in a tree with $n-2$ edges (ordered by the sum of its coordinates) is $(0,0,\ldots,0)$, which corresponds to the case of the star, and the largest possible vector is $(0,1,2,\ldots,n-3)$, which corresponds to the path. This implies that $\mu_c(v_2,S_{\ell}')\leq\mu_c(v_1,T_{\ell}')\leq\mu_c(v_3,P_{\ell}')$.
\end{proof}


\section{Discrete versus continuous mean distances}
\label{sec:relation}

There is no obvious relation between the discrete and the continuous mean distances, in the sense that for different graphs, any of them can be larger.
From the result for $K_n^{\alpha}$ in the previous section (Proposition \ref{Kn}) it follows that the continuous mean distance can be larger than the discrete counterpart. This also happens for cycles\footnote{The continuous mean distance of an 1-uniform cycle $C_n$ of $n$ vertices is $n/4$  \cite{DG82srmd}, as well as $\mu_d(C_n)$ for $n$ even; otherwise $\mu_d(C_n)=\frac{n}{4}-\frac{1}{4n}$\cite{ mean_dist_M}.} and, on the other hand, we have seen (in the Introduction) that the opposite occurs for paths. 

This and the following section are devoted to better understanding the relationship between the two parameters. We first present bounds on the continuous mean distance of two edges in terms of discrete distances, which lead to bounds for the whole graph (in Corollary \ref{cor:bounds_same length} below) whenever it is uniform. 

\begin{proposition} \label{prop:bounds}
Let $e$ and $e'$ be two distinct edges in a weighted  graph $\N$. Then,
$$d(e, e')+\frac{|e|+|e'|}{4} \leq \mu_c(e, e') \leq d(e, e')+\frac{|e|+|e'|}{2},$$
and both bounds are tight.
\end{proposition}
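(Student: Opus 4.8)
The plan is to reduce the two–dimensional integral defining $\mu_c(e,e')$ to two successive one–dimensional integrals, and to control each of them with a single elementary fact about the average of a ``tent'' function. Write $e=uv$, $e'=u'v'$, and set $\ell_1=|e|$, $\ell_2=|e'|$. As in the proof of Lemma~\ref{lem:two_edges}, every shortest $pq$-path with $p\in e$, $q\in e'$ leaves $e$ through $u$ or $v$ and enters $e'$ through $u'$ or $v'$. Using that the endpoints of an edge are at distance equal to its length, a point $p\in e$ at arc-distance $t\in[0,\ell_1]$ from $u$ has $d(p,u)=t$, $d(p,v)=\ell_1-t$ (and similarly on $e'$); substituting into the four path lengths and regrouping yields the two key identities
\begin{equation*}
d(p,q)=\min\bigl(r+d(p,u'),\,(\ell_2-r)+d(p,v')\bigr),\qquad d(p,e')=\min\bigl(t+d(u,e'),\,(\ell_1-t)+d(v,e')\bigr),
\end{equation*}
where $r\in[0,\ell_2]$ is the arc-position of $q$ on $e'$. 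Each is the lower envelope of one increasing and one decreasing unit-slope line, i.e.\ a concave tent; these are precisely the one-dimensional slices of the roof of Subsection~\ref{subsec:roof}.

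The single technical ingredient I would isolate is the following tent average. For $\ell>0$ and $D_1,D_2\ge 0$ with $|D_1-D_2|\le\ell$, the tent $h(s)=\min\bigl(s+D_1,(\ell-s)+D_2\bigr)$ has its crossover inside $[0,\ell]$, and a direct integration gives
\begin{equation*}
\frac{1}{\ell}\int_0^\ell h(s)\,ds=\min(D_1,D_2)+\frac{\ell}{4}+\frac{\delta(2\ell-\delta)}{4\ell},\qquad \delta:=|D_1-D_2|\in[0,\ell].
\end{equation*}
Since $\delta(2\ell-\delta)=\ell^2-(\ell-\delta)^2$, the correction term ranges over $[0,\ell/4]$ (it is $0$ at $\delta=0$ and $\ell/4$ at $\delta=\ell$), so $\min(D_1,D_2)+\tfrac{\ell}{4}\le\frac1\ell\int_0^\ell h\le\min(D_1,D_2)+\tfrac{\ell}{2}$.

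I would then apply this lemma twice. First, with $(D_1,D_2)=(d(p,u'),d(p,v'))$ and $\ell=\ell_2$ — noting $|d(p,u')-d(p,v')|\le d(u',v')=\ell_2$ and $\min(d(p,u'),d(p,v'))=d(p,e')$ — the first identity and $\mu_c(p,e')=\frac1{\ell_2}\int_{e'}d(p,q)\,dq$ give $d(p,e')+\tfrac{\ell_2}{4}\le\mu_c(p,e')\le d(p,e')+\tfrac{\ell_2}{2}$ for every $p\in e$. Integrating over $p$ and using $\mu_c(e,e')=\frac1{\ell_1}\int_e\mu_c(p,e')\,dp$ (Fubini on~(\ref{eq:def1}), via~(\ref{eq:def0})) reduces matters to bounding $\frac1{\ell_1}\int_e d(p,e')\,dp$. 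Applying the lemma a second time, now with $(D_1,D_2)=(d(u,e'),d(v,e'))$ and $\ell=\ell_1$ — using $|d(u,e')-d(v,e')|\le d(u,v)=\ell_1$ and $\min(d(u,e'),d(v,e'))=d(e,e')$ — yields $d(e,e')+\tfrac{\ell_1}{4}\le\frac1{\ell_1}\int_e d(p,e')\,dp\le d(e,e')+\tfrac{\ell_1}{2}$. Combining the two steps gives exactly $d(e,e')+\frac{\ell_1+\ell_2}{4}\le\mu_c(e,e')\le d(e,e')+\frac{\ell_1+\ell_2}{2}$.

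For tightness I would track the equality cases: each inequality is tight precisely when the invoked correction terms all vanish (lower bound, $\delta=0$) or are all maximal (upper bound, $\delta=\ell$). The lower bound is attained when the four endpoint distances $d(u,u'),d(u,v'),d(v,u'),d(v,v')$ coincide — concretely for two non-adjacent edges of the unit-length $K_4$, where $\mu_c(e,e')=3/2=d(e,e')+\tfrac{\ell_1+\ell_2}{4}$ (cf.\ Case~2 of Proposition~\ref{Kn}). The upper bound is attained when a single endpoint pair dominates, e.g.\ the two edges of a path on three vertices (sharing one vertex, with no shortcut), where $\mu_c(e,e')=\tfrac{\ell_1+\ell_2}{2}=d(e,e')+\tfrac{\ell_1+\ell_2}{2}$. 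I expect the only delicate step to be justifying the tent form of $d(p,q)$ and $d(p,e')$, namely that the minimizing endpoint can be chosen independently on each edge; this is exactly where the triangle inequality and the assumption that an edge's endpoints realize its length are used, and it also covers the degenerate case in which $e$ and $e'$ share an endpoint.
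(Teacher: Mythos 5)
Your proof is correct, and it takes a genuinely different route from the paper's. The paper proves the two bounds by separate arguments: the upper bound comes from relaxing $d(p,q)\le d(p,a)+d(a,u)+d(u,q)$ along the single closest pair of endpoints and integrating, while the lower bound comes from writing $d(p,q)\ge \Theta(p,q)+d(e,e')$, where $\Theta$ is the four-term minimum of endpoint offsets, and then computing $\iint\Theta$ explicitly as the volume under a symmetric roof--diagram (the $L_1$-Voronoi picture of Subsection~\ref{subsec:roof}). You instead isolate a single one-dimensional lemma --- the exact average of a unit-slope tent $\min(s+D_1,(\ell-s)+D_2)$, namely $\min(D_1,D_2)+\ell/4+\delta(2\ell-\delta)/(4\ell)$ --- and apply it twice through Fubini, first in $q$ and then in $p$. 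I checked the tent formula and both applications (the identities $d(p,q)=\min(r+d(p,u'),(\ell_2-r)+d(p,v'))$ and $d(p,e')=\min(t+d(u,e'),(\ell_1-t)+d(v,e'))$ do follow from the fact that shortest paths leave an edge through one of its endpoints and from the paper's assumption $d(u,v)=|uv|$, which gives $|D_1-D_2|\le\ell$ in each case); the bounds then fall out since the correction term ranges over $[0,\ell/4]$. What your approach buys: both bounds are obtained simultaneously from one computation; you get an exact expression for $\mu_c(e,e')$ up to two explicit nonnegative correction terms; and tightness comes with a clean characterization of the equality cases ($\delta\equiv 0$ versus $\delta\equiv\ell$), which makes your two examples (non-adjacent edges of the $1$-uniform $K_4$ for the lower bound, two consecutive path edges for the upper bound) immediate to verify --- the paper only gestures at such examples. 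The paper's roof-diagram computation, on the other hand, reuses machinery already set up in Subsection~\ref{subsec:roof} and avoids introducing a new lemma. Both arguments are sound; yours is arguably the more self-contained and more informative of the two.
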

\begin{proof}
For the upper bound, let $p\in e=ab$ and $q\in e'=uv$ and, without loss of generality, let $d(e,e')=d(a,u)$. We have $d(p,q)\leq d(p,a)+d(a,u)+d(u,q)= d(p,a)+d(e,e')+d(u,q)$. Hence, 
\begin{align*}
\mu_c(e,e') &\stackrel{(\ref{eq:def1})}{=}\frac{1}{|e||e'|}\iint_{p\in e, \, q \in e'} d(p,q) \,dp\,dq \\
            &\leq \frac{1}{|e||e'|}\left(\iint_{p\in e, \, q \in e'} d(p,a)\,dp\,dq + \iint_{p\in e, \, q \in e'} d(e,e')\,dp\,dq +\iint_{p\in e, \, q \in e'} d(u,q)\,dp\,dq \right) \\
   & = \frac{1}{|e||e'|}\left(\int_{q \in e'} \frac{|e|^2}{2} \,dq + d(e,e')|e||e'| +\int_{p\in e} \frac{|e'|^2}{2} \,dp \right) = \frac{|e|}{2}+ d(e, e')+\frac{|e'|}{2}.
    \end{align*}
Note that  $\int_{p\in e} d(p,a) \,dp=\frac{|e|^2}{2}$ since $a$ is an endpoint of the edge $e$, and so the integral is the area of a triangle with base and height equal to $|e|$. Analogously, $\int_{q\in e'} d(u,q) \,dq=\frac{|e'|^2}{2}$.

For the lower bound we have:
$$d(p,q)=\min\left\{ \begin{array}{c}
d(p,a)+d(q,u)+d(a,u) \\
d(p,b)+d(q,u)+d(b,u)\\
d(p,a)+d(q,v)+d(a,v)\\
d(p,b)+d(q,v)+d(b,v)
\end{array}
\right\} \geq \min\left\{ \begin{array}{c}
d(p,a)+d(q,u) \\
d(p,b)+d(q,u)\\
d(p,a)+d(q,v)\\
d(p,b)+d(q,v)
\end{array}
\right\} +d(e,e').$$
If the last minimum is denoted by $\Theta(p,q)$, then $d(p,q)\geq \Theta(p,q)+d(e,e')$. Therefore,
\begin{align*}
\mu_c(e,e')&\stackrel{(\ref{eq:def1})}{=}\frac{1}{|e||e'|} \iint_{p\in e, \, q \in e'} d(p,q) \, dp \, dq \\ & \geq \frac{1}{|e||e'|} \left( \iint_{p\in e, \, q \in e'} \Theta(p,q) \, dp \, dq+ \iint_{p\in e, \, q \in e'} d(e,e') \, dp \, dq \right) \\
& = \frac{1}{|e||e'|}\left(|e||e'|\frac{|e|+|e'|}{4}+|e||e'|d(e,e')\right) =\frac{|e|+|e'|}{4}+d(e,e'),
\end{align*}

\noindent where $\iint_{p\in e, \, q \in e'} \Theta(p,q) \, dp \, dq$ is the volume determined by the roof--diagram depicted in   Figure~\ref{fig:bound1}, which can be computed as follows:
\begin{align*}
\iint_{p\in e, \, q \in e'} \Theta(p,q) \, dp \, dq & =4 \iint_{p\in [a, \frac{a+b}{2}], \, q\in [u, \frac{u+v}{2}]} d(p,a)+d(q,u) \, dp \, dq \\ & = 4\left(\frac{|e|^2}{8}\cdot\frac{|e'|}{2}+\frac{|e'|^2}{8}\cdot\frac{|e|}{2}\right) =|e||e'|\left(\frac{|e|+|e'|}{4}\right).
\end{align*}
 
 \begin{figure}[ht]
\centering
\includegraphics{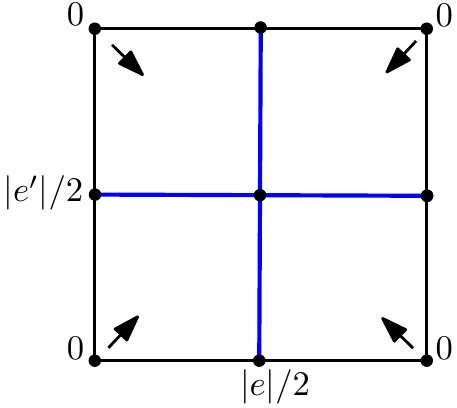}
\caption{The roof--diagram for $\Theta(p,q)$.}
\label{fig:bound1}
\end{figure}

Next, we observe that both bounds are tight. For instance, the mean distance of two edges that are connected by a unique path gives the upper bound, and the lower bound is attained by two edges whose endpoints are at the same distance so that $d(e,e')$ is given by any of the four possible combinations of endpoints.
\end{proof}

As a consequence of the preceding proposition we obtain, for $\alpha$-uniform graphs $\N$, bounds on $\mu_c(\NL)$  in terms of the discrete mean distance of a weighted version of its line graph. Recall that the \emph{line graph} $L(\N)$ of an unweighted graph $\N$ has a vertex associated with each edge in $\N$, and two vertices are adjacent if the corresponding edges of $\N$ have a vertex in common. When $\N$ is $\alpha$-uniform, we consider  the \emph{$\alpha$-uniform} line graph $L_{\alpha}(\N)$ that is defined analogously but, in addition, every edge has length $\alpha$.

The Wiener index of $L(\N)$ is known as the \emph{edge-Wiener index} of $\N$ (see, for instance, \cite{dgms-09,z-19} and the references therein).
It is defined as $\W_e(\N)=\sum_{\{e,e'\} \subseteq E} d_{L(\N)}(e,e')$,  where $d_{L(\N)}(e,e')$ is the distance of the corresponding vertices (to $e$ and $e'$) in $L(\N)$. This  can be naturally extended to $L_{\alpha}(\N)$ for $\alpha$-uniform graphs $\N$ with $m$ edges, and thus we may consider the discrete mean distance $\mu_d(L_{\alpha}(\N))=2\W(L_{\alpha}(\N))/m^2$.



\begin{corollary}\label{cor:bounds_same length}
 Let $\N$ be an $\alpha$-uniform graph with $m$ edges. Then,
$$\mu_d(L_{\alpha}(\N))+\frac{\alpha}{3m}-\frac{(m-1)\alpha}{2m} \leq \mu_c(\NL) \leq \mu_d(L_{\alpha}(\N))+\frac{\alpha}{3m}$$
where $L_{\alpha}(\N)$ is the $\alpha$-uniform line graph of $\N$.
\end{corollary}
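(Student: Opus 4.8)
The plan is to express $\mu_c(\NL)$ using Equation~(\ref{eq:sum}), apply Proposition~\ref{prop:bounds} to each summand $\mu_c(e,e')$ with $e\neq e'$, and then recognize the resulting sum of $d(e,e')$ terms as (twice) the edge-Wiener index, which converts directly into $\mu_d(L_\alpha(\N))$. Since $\N$ is $\alpha$-uniform, every edge has length $\alpha$, so $|e|=|e'|=\alpha$ and $|E|=m$, giving $|E|^2=m^2$. The term $\frac{|e|+|e'|}{4}$ in the lower bound becomes $\frac{\alpha}{2}$, and $\frac{|e|+|e'|}{2}$ in the upper bound becomes $\alpha$; moreover $d(e,e')$ in Proposition~\ref{prop:bounds} is exactly the line-graph distance scaled by $\alpha$, i.e.\ $d(e,e')=\alpha\, d_{L(\N)}(e,e')$ under the $\alpha$-uniform identification.

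First I would write
\[
\mu_c(\NL)=\frac{1}{m^2}\left(\sum_{\substack{(e,e')\in E\times E\\ e\neq e'}}\mu_c(e,e')\,\alpha^2+\sum_{e\in E}\frac{\alpha^3}{3}\right),
\]
using $|e||e'|=\alpha^2$ and $\frac{|e|}{3}|e|^2=\frac{\alpha^3}{3}$ in the diagonal terms. The second sum has $m$ terms, contributing $\frac{m\alpha^3/3}{m^2}=\frac{\alpha^3}{3m}$; after factoring out the overall $\alpha$ that appears in the definition $\mu_d(L_\alpha(\N))=2\W(L_\alpha(\N))/m^2$, this becomes the additive $\frac{\alpha}{3m}$ common to both bounds. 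For the off-diagonal sum, I substitute the two inequalities of Proposition~\ref{prop:bounds}. Using the upper bound $\mu_c(e,e')\le d(e,e')+\alpha$ and noting the ordered sum double-counts unordered pairs, one gets
\[
\sum_{\substack{(e,e')\in E\times E\\ e\neq e'}}\mu_c(e,e')\,\alpha^2\le \alpha^2\Big(2\W_e(\N)\alpha+m(m-1)\alpha\Big),
\]
and dividing by $m^2$ and extracting $\alpha$ yields $\mu_d(L_\alpha(\N))+\frac{(m-1)\alpha}{m}$. The analogous computation with the lower bound $\mu_c(e,e')\ge d(e,e')+\frac{\alpha}{2}$ replaces $m(m-1)\alpha$ by $\frac{m(m-1)\alpha}{2}$.

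The main bookkeeping obstacle is reconciling the constant terms so that the stated bounds come out exactly. Observe that the claimed inequalities are $\mu_d(L_\alpha(\N))+\frac{\alpha}{3m}-\frac{(m-1)\alpha}{2m}\le\mu_c(\NL)\le\mu_d(L_\alpha(\N))+\frac{\alpha}{3m}$, so I must verify that the upper bound's extra constant $\frac{(m-1)\alpha}{m}$ from the naive estimate actually collapses: here the key is that $\mu_d(L_\alpha(\N))$ as defined uses $m^2$ in its denominator rather than $\binom{m}{2}$, which means the diagonal (zero-distance) entries of the line-graph distance matrix are already absorbed, and the factor $\frac{\alpha}{3m}$ at the top is the sole residual constant once the edge-Wiener sum is matched against $\mu_d(L_\alpha(\N))$. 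The careful step is thus to match $\frac{2\alpha\W_e(\N)}{m^2}$ with $\mu_d(L_\alpha(\N))=\frac{2\W(L_\alpha(\N))}{m^2}$ and check that $\W(L_\alpha(\N))=\alpha\,\W_e(\N)$ under the $\alpha$-uniform convention, after which both constants drop into place. Finally I would remark that tightness of Proposition~\ref{prop:bounds} propagates to tightness here for graph classes realizing the extremal edge configurations, though the corollary statement itself does not assert tightness.
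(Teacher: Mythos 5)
Your overall strategy --- expand $\mu_c(\NL)$ via Equation~(\ref{eq:sum}), bound each off-diagonal term with Proposition~\ref{prop:bounds}, and recognize the resulting sum of edge distances as the edge-Wiener index --- is the same as the paper's, and your treatment of the diagonal contribution $\frac{\alpha}{3m}$ is correct. However, there is a genuine error in your key identification: you claim $d(e,e')=\alpha\, d_{L(\N)}(e,e')$, but this is off by $\alpha$. The paper's distance between two edges is the distance between their nearest endpoints, so two \emph{adjacent} edges satisfy $d(e,e')=0$ while the corresponding vertices of $L_{\alpha}(\N)$ are at distance $\alpha$; the correct relation is $d(e,e')=d_{L_{\alpha}(\N)}(e,e')-\alpha$. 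Substituting this into Proposition~\ref{prop:bounds} gives $d_{L_{\alpha}(\N)}(e,e')-\frac{\alpha}{2}\leq \mu_c(e,e')\leq d_{L_{\alpha}(\N)}(e,e')$, and summing over the $m(m-1)$ ordered pairs and dividing by $m^2$ yields exactly the stated bounds: no leftover constant in the upper bound, and the $-\frac{(m-1)\alpha}{2m}$ in the lower bound.

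Because of the missing offset, your upper bound carries a spurious $\frac{(m-1)\alpha}{m}$, and your attempted reconciliation --- that the $m^2$ (rather than $\binom{m}{2}$) denominator in $\mu_d(L_{\alpha}(\N))$ somehow absorbs it --- does not work: the diagonal entries of the line-graph distance matrix are zero in either normalization and cannot cancel a positive constant. The same error is even more visible in your lower bound, which would read $\mu_c(\NL)\geq \mu_d(L_{\alpha}(\N))+\frac{(m-1)\alpha}{2m}+\frac{\alpha}{3m}$ and thus \emph{exceed} the claimed upper bound, an immediate contradiction. Both discrepancies disappear once $d(e,e')=d_{L_{\alpha}(\N)}(e,e')-\alpha$ is used: the $-\alpha$ per ordered pair contributes $-\frac{m(m-1)\alpha}{m^2}=-\frac{(m-1)\alpha}{m}$, which is precisely the term you were trying to explain away.
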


\begin{proof}
It can be easily checked that, by construction, $d(e,e')=d_{L(\N)}(e,e')-1$ for distinct edges $e$ and $e'$ in an unweighted graph $\N$; this extends to $d(e,e')=d_{L_{\alpha}(\N)}(e,e')-\alpha$ when the graph $\N$ is $\alpha$ uniform. Proposition \ref{prop:bounds} then gives $d_{L_{\alpha}(\N)}(e,e')-\frac{\alpha}{2} \leq \mu_c(e, e') \leq d_{L_{\alpha}(\N)}(e,e')$ for distinct edges $e$ and $e'$ of $\NL$.
Now, by Equation (\ref{eq:sum}), we obtain: $$\mu_c(\NL) = \frac{1}{\alpha^2m^2}\left( \sum_{(e,e')\in E \times E, e \neq e'}{\mu_c(e,e')  \alpha^2 + \sum_{e \in E}{\frac{\alpha^3}{3}}} \right)=\frac{1}{m^2} \sum_{(e,e')\in E \times E, e \neq e'}\mu_c(e,e') +\frac{\alpha}{3m}.$$
Therefore,
$$\mu_c(\NL)\geq \frac{1}{m^2}\left(\sum_{(e,e')\in E \times E, e \neq e'}d_{L_{\alpha}(\N)}(e,e')\right) +\frac{\alpha}{3m}-\frac{(m-1)\alpha}{2m}$$
and
$$
\mu_c(\NL)\leq \frac{1}{m^2} \left( \sum_{(e,e')\in E \times E, e \neq e'} d_{L_{\alpha}(\N)}(e,e') \right) +\frac{\alpha}{3m}.
$$
The result then follows by definition of $\mu_d(L_{\alpha}(\N))$.
\end{proof}

In their seminal work on the mean distance for shapes~\cite{DG82mds}, Doyle and Graver already studied the continuous mean distance by using an iterative edge refinement process.
Following this direction, next we present an edge subdivision approach for trees. One may think that if the subdivision points are chosen  arbitrarily, the discrete mean distance of the refined tree may increase significantly  with respect to the original one. 
However, this is not the case, as we can see in the next theorem.




\begin{theorem}\label{th:convergence trees}
Let $T$ be a weighted tree with $n$ vertices, and let $T^{(k)}$ be the tree resulting from subdividing each edge of $T$ by adding $k$ new vertices on it.
Then $\mu_d(T^{(k)}) <\displaystyle \frac{n}{n-\frac{2k}{k+1}}\mu_d(T)$.
\end{theorem}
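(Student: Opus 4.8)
The plan is to avoid handling the combinatorially many pairwise distances directly and instead exploit the standard cut decomposition of the Wiener index of a tree: for any weighted tree $H$ one has $W(H)=\sum_{e}\omega(e)\,n_1(e)\,n_2(e)$, where deleting edge $e$ splits $H$ into two components whose vertex counts are $n_1(e)$ and $n_2(e)$ (so $n_1(e)+n_2(e)=|V(H)|$). Since $\mu_d(H)=2W(H)/|V(H)|^2$ by Equation~(\ref{eq:discrete}), and $T^{(k)}$ has $N:=n+(n-1)k=n(k+1)-k$ vertices, the statement reduces to controlling $W(T^{(k)})$ in terms of $W(T)$ followed by a short normalization. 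Concretely, I would aim for the clean bound $W(T^{(k)})\le (k+1)^2\,W(T)$ and then close with an elementary inequality.

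For the main estimate I would fix an original edge $e=uv$ of $T$ with weight $\omega(e)$, and let $A$ (resp.\ $B$) be the component containing $u$ (resp.\ $v$) after removing $e$ from $T$, with $n_A$ and $n_B$ original vertices, $n_A+n_B=n$. In $T^{(k)}$ the edge $e$ is split into $k+1$ sub-edges $e_0,\dots,e_k$ of lengths $\ell_0,\dots,\ell_k$ with $\sum_i \ell_i=\omega(e)$, via points $u=w_0,w_1,\dots,w_{k+1}=v$. The crucial bookkeeping is to count, for each sub-edge $e_i$, the vertices of $T^{(k)}$ on the $u$-side: subtree $A$ contributes its $n_A$ original vertices plus $(n_A-1)k$ subdivision points, and edge $e$ contributes $w_1,\dots,w_i$, for a total of $\alpha+i$ with $\alpha:=n_A(k+1)-k$; symmetrically the $v$-side has $\beta+(k-i)$ vertices with $\beta:=n_B(k+1)-k$, and a sanity check gives $(\alpha+i)+(\beta+k-i)=N$. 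Thus $e$ contributes $\sum_{i=0}^{k}\ell_i(\alpha+i)(\beta+k-i)$ to $W(T^{(k)})$, and since $\alpha+i\le \alpha+k=n_A(k+1)$ and $\beta+k-i\le \beta+k=n_B(k+1)$ for every $i\in\{0,\dots,k\}$, this is at most $(k+1)^2 n_A n_B\sum_i\ell_i=(k+1)^2\,\omega(e)\,n_A n_B$. Summing over all original edges yields $W(T^{(k)})\le (k+1)^2 W(T)$; note this uses only $\ell_i\ge 0$, so it holds for \emph{arbitrary} (not necessarily uniform) placement of the subdivision vertices, which is exactly the point the theorem is making.

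It then remains to combine the pieces: $\mu_d(T^{(k)})=2W(T^{(k)})/N^2\le (k+1)^2(n^2/N^2)\,\mu_d(T)$, so the theorem follows once I verify the strict inequality $(k+1)^2 n^2/N^2<n/(n-\tfrac{2k}{k+1})$. Writing $x:=n(k+1)$, this reads $\big(x/(x-k)\big)^2< x/(x-2k)$, which after clearing denominators (all positive, since $x=n(k+1)>2k$ for $n\ge 2$) is equivalent to $x(x-2k)<(x-k)^2$, i.e.\ to $0<k^2$; being strict, it delivers the strict inequality in the statement. I expect the only delicate point to be the vertex-counting for the sub-edges---pinning down $\alpha,\beta$ and deciding on which side each $w_j$ falls; once that bookkeeping is correct, both the summation giving $W(T^{(k)})\le (k+1)^2W(T)$ and the final one-line algebraic verification are routine.
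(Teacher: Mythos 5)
Your proof is correct, but it reaches the key estimate $W(T^{(k)})\le (k+1)^2\,W(T)$ by a genuinely different route than the paper. The paper argues distance by distance: it classifies pairs in $T^{(k)}$ as old--old, old--new, or new--new, extends each path involving a new vertex outward until it terminates at old vertices $u,v\in T$, bounds that distance by $d_T(u,v)$, and then counts that each ordered old pair can absorb at most $k^2+2k$ such extended pairs, giving $\sum d_{T^{(k)}}\le (k+1)^2\sum d_T$. You instead invoke the edge-cut decomposition $W(H)=\sum_e \omega(e)\,n_1(e)\,n_2(e)$ valid for trees, track exactly how many vertices of $T^{(k)}$ lie on each side of each sub-edge $e_i$ (your counts $\alpha+i$ and $\beta+k-i$, summing to $N$, are correct), and bound each factor by $n_A(k+1)$ and $n_B(k+1)$ respectively. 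Your version is arguably tighter in presentation: it computes the contribution of each subdivided edge exactly before bounding it termwise, it makes explicit that the bound holds for arbitrary placement of the $k$ subdivision points (which the paper's charging argument also covers, but less transparently), and it avoids the slightly informal multiplicity count "up to $k^2+2k$ distances" in the paper. The trade-off is that your argument leans on a tree-specific identity for the Wiener index, whereas the paper's path-extension argument is self-contained. The final normalization step is essentially identical in both proofs: your reduction to $x(x-2k)<(x-k)^2$, i.e.\ $0<k^2$, is exactly the paper's identity $(n(k+1)-k)^2=n^2(k+1)^2\bigl(1-\tfrac{2k}{n(k+1)}\bigr)+k^2$, and both require $k\ge 1$ (and $n\ge 2$) for strictness, an implicit assumption shared with the statement itself.
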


\begin{proof}
We start by noting that $T^{(k)}$ has $n+k(n-1)$ vertices: $n$ \emph{old} vertices and $k(n-1)$ \emph{new} vertices. The value
$\mu_d(T^{(k)})$ is the average of $(n+k(n-1))^2$ distances, of three types: 
(i) between two old vertices,
(ii) between an old and a new vertex,
(iii) between two new vertices.
To avoid any confusion, we shall use $d_T$ and $d_{T^{(k)}}$ to indicate distances in, respectively, the trees $T$ and $T^{(k)}$. Further, with some abuse of notation, we shall write $u\in T$ instead of $u\in V(T)$.

Consider a distance of type (ii), between an old vertex $u \in T$ and a new vertex $a \in T^{(k)} \setminus T$.
Since $T^{(k)}$ is a tree, there is a unique path from $u$ to $a$.
Moreover, since $a$ is interior to an edge of $T$, the path can be extended in direction away from $u$ until the first old vertex $v \in T$.
Clearly, $d_{T^{(k)}}(u,a) \leq d_{T^{(k)}}(u,v)=d_T(u,v)$.

Similarly, associated to a distance of type (iii), between two new vertices $a,b \in T^{(k)} \setminus T$, there is a unique path in $T^{(k)}$ that can be extended in both directions until starting and ending, respectively, at vertices $u,v \in T$.
Again, $d_{T^{(k)}}(a,b)\leq d_{T^{(k)}}(u,v)=d_T(u,v)$.

In this way, each distance involving a new vertex (types ii or iii) can be upper-bounded by a distance between two old vertices.
Moreover, the distance $d(u,v)$ of a pair of vertices $(u,v) \in T^2$ can only be an upper bound for up to $(k^2+2k)$ distances: $2k$ of type (ii), and $k^2$ of type (iii). Observe that the same happens for the distance $d(v,u)$.
This leads to an upper-bound on $\mu_d(T^{(k)})$ as follows:

\small{\[
\mu_d(T^{(k)})  =  \frac{\displaystyle
\sum_{(u,v) \in T^2} d_{T^{(k)}}(u,v) + \!\!\!\!\!\!\!\!
\sum_{(u,a) \in T\times (T^{(k)} \setminus T)} \!\!\!\!\!\!\!\! d_{T^{(k)}}(u,a)  +\!\!\!\!\!\!\!\! \sum_{(a,u) \in (T^{(k)}  \setminus T)\times T} \!\!\!\!\!\!\!\! d_{T^{(k)}}(a,u)+ \!\!\!\!\!\!\!\!
\sum_{(a,b) \in (T^{(k)} \setminus T)^2} \!\!\!\!\!\!\!\! d_{T^{(k)}}(a,b)}
{(n(k+1)-k)^2}\]}

\[\leq \frac{\displaystyle
\sum_{(u,v) \in T^2} d_{T}(u,v) +
(k^2+2k)\sum_{(u,v) \in T} d_{T}(u,v) 
}
{
(n(k+1)-k)^2
}
= \frac{\displaystyle (k+1)^2 \sum_{(u,v) \in T^2} d_{T}(u,v)}{(n(k+1)-k)^2}.\]

\,

Since $(n(k+1)-k)^2=n^2(k+1)^2(1-\frac{2k}{n(k+1)})+k^2>n^2(k+1)^2(1-\frac{2k}{n(k+1)})$, we obtain:
\[
\mu_d(T^{(k)})  < \frac{(k+1)^2 \displaystyle \sum_{(u,v) \in T^2} d_{T}(u,v)}{n^2(k+1)^2(1-\frac{2k}{n(k+1)})} =  \frac{\mu_d(T)}{1-\frac{2k}{n(k+1)}}=\frac{n}{n-\frac{2k}{k+1}}\mu_d(T).
\] 
\end{proof}

Theorem \ref{th:convergence trees} gives an upper bound of $\frac{n}{n-2}\mu_d(T)$ for the discrete mean distance of any subdivision of a tree $T$ but, if the subdivision points  are chosen more carefully, one can expect more precise results. Thus, Theorem \ref{th:convergence trees} is the initial motivation of the following section where, in particular, we explore the convergence of the discrete mean distance to its continuous counterpart when subdividing the edges of a tree (see Corollary \ref{cor:trees}).





\section{Convergence: graph subdivision}\label{sec:convergence_subdivisions}

A natural question is whether the discrete mean distance is convergent to its continuous counterpart when iteratively subdividing the edges of the graph. One may propose different subdivision schemes but, as we shall see later in this section, not all of them guarantee convergence. By definition of continuous mean distance, the convergence happens for uniform graphs by simply adding, at each step, a new vertex on each edge. 
Further, it is not hard to devise a subdivision scheme with guaranteed convergence if the ratio between the lengths of the longest and the shortest edges approaches one as the subdivision progresses.
However, such a scheme completely depends on the original structure of the graph. 

In this section we present an edge subdivision scheme that does not depend on the graph structure, and allows us to obtain bounds on the discrete mean distance of its $k$-th edge subdivision, and on its limit when $k$ tends to infinity.  We begin by introducing some notation. 

For a graph $\N=(V,E)$ with $n$ vertices and $m$ edges, let $\N^1=(V^1, E^1)$ be the graph that results from subdividing each edge of $\N$ by inserting a new vertex at its midpoint. Then, for a given $k\geq 2$, we subdivide each edge of $\N^1$ into $2^{k-1}$ new edges of the same length by inserting $2^{k-1}-1$ vertices.
The resulting graph $\N^k=(V^k,E^k)$ is called the \emph{$k$-th subdivision of $\N$} (note that the graph $G^1$ could be viewed as a subdivision of $G$ but for our purpose it will be distinguished). Refer to Figure \ref{fig:subdivision(1)}. The vertices of the original graph $\N$ are called \emph{black vertices}; the set of vertices inserted into $\N$ to obtain $\N^1$ is denoted by $\mathcal{B}$, and they are called \emph{blue vertices}. Thus, $V^1=V\cup \mathcal{B}$ and $|\mathcal{B}|=m$. We use $\mathcal{R}^k$ to refer to the set of new vertices inserted into $\N^1$, which are called the \emph{red vertices}; clearly, $|\mathcal{R}^k|=2m(2^{k-1}-1)=m(2^k-2)$. Hence, $V^k=V\cup \mathcal{B} \cup \mathcal{R}^k$ and $|V^k|=n+m(2^k-1)$. Further, the edges of the original graph $\N$ can be identified within $\N^k$: we write $e^k$ to indicate the $k$-th subdivision of an edge $e\in E$, which is a path in $G^k$ with $2^k+1$ vertices (there are 2 black, 1 blue, and $2^k-2$ red vertices).
 Equation (\ref{eq:discrete}) then gives:
\begin{equation}\label{discrete-wiener}
\mu_d(\N^k)=\frac{2\W(\N^k)}{(n+m(2^k-1))^2},
\end{equation}
where $\W(\N^k)=\sum_{\{u,v\}\subset V^k} d(u,v)$ is the Wiener index of $G^k$. With some abuse of notation, we write, for sets $A,B\subseteq V^k$, $\W(A;B)=\sum_{u\in A\setminus B, \, v\in B\setminus A} d(u,v)+\sum_{\{u,v\}\subseteq A\cap B} d(u,v)$ and $\W(A)=\W(A;A)$. Further, we shall indistinctly use sets of vertices or graphs in this notation, for instance, $\W(A;e^k)$ is simply a sum of distances between vertices in the set $A$ and vertices in the path $e^k$. 
With this notation, for $k\geq 2$, we have:
\begin{equation}\label{eq:red}
\W(\N^k)=\W(V^1)+\W(\mathcal{R}^k)+\W(\mathcal{R}^k; V)+\W(\mathcal{R}^k; \mathcal{B})
\end{equation}

\begin{figure}[t]
\centering
\includegraphics[width=0.95\textwidth]{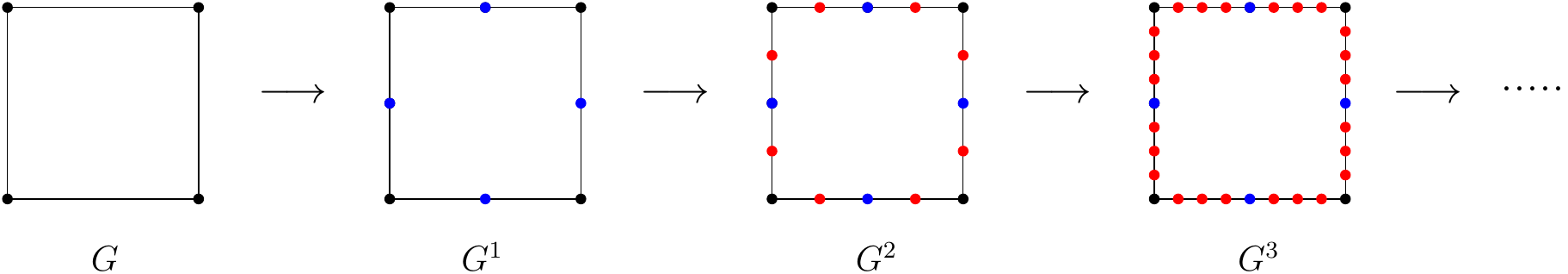}\caption{Subdividing the edges of a graph.}
\label{fig:subdivision(1)}
\end{figure}

The following two subsections are devoted to proving our main result in this section (which, in addition, will allow us to gain a deeper understanding of the limit of $\mu_d(\N^k)$ when $k$ tends to infinity):

\begin{theorem}\label{th:upper_lower_bounds} Let $\N=(V,E)$ be a weigthed graph with $n$ vertices and $m\geq 2$ edges, and let $\N^k=(V^k, E^k)$ be its $k$-th subdivision,  where $k\geq 2$. Let $\mathcal{B}$ be the set of vertices inserted into $\N$ to obtain the graph $\N^1$. Then,
$$\frac{2 \, \left[\Omega_k(\N, \N^1)-\rho \left(3{m\choose 2} +m(n-2)\right)\left(2^{k-2}-\frac{1}{2}\right)\right]}{(n+m(2^k-1))^2} < \mu_d(\N^k)\leq \frac{2 \, \Omega_k(\N, \N^1)}{(n+m(2^k-1))^2}$$
where $\rho={\rm max}\{|e| \, : \, e\in E \}$, and $$\Omega_k(\N, \N^1)=\W(V^1)+(2^k-2)\left(2^k\W(\mathcal{B})+\W(\mathcal{B};V) \right)+|E|\left(\frac{2^{2k-1}}{3}- 2^{k-1}+\frac{1}{3}\right).$$ Moreover, the upper bound is tight.
\end{theorem}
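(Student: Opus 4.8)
The plan is to work entirely at the level of the Wiener index: by Equation~(\ref{discrete-wiener}) it suffices to prove the two-sided estimate $\Omega_k-\rho\left(3\binom{m}{2}+m(n-2)\right)\!\left(2^{k-2}-\tfrac12\right) < \W(\N^k)\le \Omega_k$, and then divide by $(n+m(2^k-1))^2$. The starting point is the decomposition (\ref{eq:red}), $\W(\N^k)=\W(V^1)+\W(\mathcal{R}^k)+\W(\mathcal{R}^k;V)+\W(\mathcal{R}^k;\mathcal{B})$. Since $\W(V^1)$ already appears verbatim in $\Omega_k$, the whole problem reduces to estimating the three terms that involve the red vertices. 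The single structural fact I would exploit throughout is that the $2^k-2$ red vertices sitting on an original edge $e$ are placed \emph{symmetrically} about its blue midpoint $b_e$, namely at signed offsets $\pm\tfrac{j|e|}{2^k}$ for $1\le j\le 2^{k-1}-1$. This symmetry, together with Lemma~\ref{lem:vertex}, is the engine for both bounds.

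For the upper bound I would first prove the following \emph{symmetric-pairing inequality}: if $r_j^{+},r_j^{-}$ is the red pair at offsets $\pm\tfrac{j|e|}{2^k}$ and $w$ is any point not on $e$, then $d(r_j^{+},w)+d(r_j^{-},w)\le 2\,d(b_e,w)$. Indeed, if the geodesic from $b_e$ to $w$ leaves $e$ through endpoint $a$, the red vertex on the $a$-side realizes $d(b_e,w)-\tfrac{j|e|}{2^k}$, while the other reaches $w$ through $a$ with length at most $d(b_e,w)+\tfrac{j|e|}{2^k}$; adding cancels the offsets. Summing over $j$ gives $\sum_{r\in e}d(r,w)\le(2^k-2)\,d(b_e,w)$. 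I would apply this once to bound red--external-black by $(2^k-2)\W(\mathcal{B};V)$ (the own-endpoint terms are exact), once to bound red--blue cross terms by $2(2^k-2)\W(\mathcal{B})$, and twice (once per edge) to bound red--red cross terms by $(2^k-2)^2\W(\mathcal{B})$; these combine into the coefficient $(2^k-2)^2+2(2^k-2)=(2^k-2)2^k$ of $\W(\mathcal{B})$. The remaining within-edge contributions — red--red on the same edge plus red-to-own-blue — are just arithmetic sums along a path of $2^k+1$ equally spaced points, and a direct computation shows they total $|e|\bigl(\tfrac{2^{2k-1}}{3}-2^{k-1}+\tfrac13\bigr)$ per edge. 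Assembling everything reproduces $\Omega_k$ exactly, giving $\W(\N^k)\le\Omega_k$.

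For the lower bound I would quantify the slack in each use of the pairing inequality. By the triangle inequality, moving a distance $\tfrac{j|e|}{2^k}$ away from $b_e$ changes the distance to any $w$ by at most that amount, so $d(r_j^{+},w)+d(r_j^{-},w)\ge 2\,d(b_e,w)-\tfrac{j|e|}{2^{k-1}}$; summing the losses over $j$ bounds the slack of a single edge-to-point application by $\sum_{j=1}^{2^{k-1}-1}\tfrac{j\rho}{2^{k-1}}=\rho\bigl(2^{k-2}-\tfrac12\bigr)$. It then remains to count how many applications can incur slack: $m(n-2)$ for red--external-black (one per edge and per external black vertex), $2\binom{m}{2}$ for red--blue cross (ordered edge pairs), and $\binom{m}{2}$ for red--red cross, for a total of $m(n-2)+3\binom{m}{2}$. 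Multiplying by the per-application slack yields exactly the error term subtracted in the statement, and strictness of ``$<$'' follows because the denominators make at least one inequality strict.

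The delicate point — and where I expect the real work to lie — is the red--red cross term in the lower bound, because there the pairing is invoked on \emph{both} edges and a careless estimate lets the per-edge slack multiply, producing a bound that is far too weak. The resolution I would pursue is to show that the first-order offsets cancel by the symmetry $S_e=-S_e$, so that the only genuine contributions come from the ``boundary'' red vertices at which the shortest-path routing between the two edges switches endpoints — precisely the discrete analogues of the furthest points $p_e^{v}$ of Section~\ref{subsec:trees} — and to argue that their total displacement stays within the budget $\rho\bigl(2^{k-2}-\tfrac12\bigr)$ per edge pair. Finally, tightness of the upper bound is clean for trees: a tree has a unique path between any two points, so the routing from a red vertex to any off-edge target never switches endpoints, the pairing inequality holds with equality, all slacks vanish, and $\W(\N^k)=\Omega_k$ for every $\alpha$-uniform tree, which is exactly the equality case to record.
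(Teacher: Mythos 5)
Your treatment of the upper bound is, in substance, the paper's own proof: the same decomposition $\W(\N^k)=\W(V^1)+\W(\mathcal{R}^k)+\W(\mathcal{R}^k;V)+\W(\mathcal{R}^k;\mathcal{B})$, the same exact evaluation of the within-edge contributions (your per-edge total $|e|\bigl(\tfrac{2^{2k-1}}{3}-2^{k-1}+\tfrac13\bigr)$ agrees with Lemmas~\ref{lem:same_edge} and Equation~(\ref{eq:red-blue})), and the same cancellation of symmetric offsets about the blue midpoints. Your pairing inequality $d(r_j^{+},w)+d(r_j^{-},w)\le 2d(b_e,w)$ is actually a cleaner, fully rigorous packaging of what the paper only argues informally (``the maximum is obtained when all shortest paths use the same endpoints'', cf.\ Lemma~\ref{lem:distinct_edges}), and your equality case for trees is the paper's tightness claim. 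That half is fine.

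The genuine gap is in the lower bound, at exactly the spot you flagged: the red--red cross term. The repair you sketch --- first-order offsets cancel by symmetry, so the deficit of $\sum_{i,j}d(r_i,s_j)$ below $(2^k-2)^2d(b_1,b_2)$ fits in a budget of $\rho\bigl(2^{k-2}-\tfrac12\bigr)$ per unordered edge pair --- cannot be carried out. Take two edges of the $1$-uniform triangle meeting at $u$, and parametrize $r$ at distance $x$ and $s$ at distance $y$ from $u$; then $d(b_1,b_2)=1$ while $d(r,s)=\min(x+y,\,3-x-y)$, and the average of $d(r,s)$ over the $(2^k-2)^2$ red pairs tends to $\int_0^1\!\!\int_0^1\min(x+y,3-x-y)\,dx\,dy=\tfrac{23}{24}$. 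The deficit is therefore $\sim\tfrac{1}{24}(2^k-2)^2=\Theta(2^{2k})$, not $O(2^k)$: a constant fraction of red--red pairs are each closer by a constant amount than the midpoints are, so no per-pair budget of order $2^k$ can absorb this. You should also be aware that the paper's own proof stumbles at the same step: in Lemma~\ref{lem_lower:distinct_edges} the double sum $\sum_{i,j}\bigl(d(b_1,b_2)-d(r_i,b_1)-d(s_j,b_2)\bigr)$ is evaluated with deficit $\W(e_1^k\cap\mathcal{R}^k;e_1^k\cap\mathcal{B})+\W(e_2^k\cap\mathcal{R}^k;e_2^k\cap\mathcal{B})$, but each single-edge term occurs once for every index on the other edge, so the correct deficit is $(2^k-2)$ times that quantity. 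Consistently, the stated lower bound would force $\lim_k\mu_d(\N^k)=\tfrac{2(\W(\mathcal{B})+|E|/6)}{m^2}$ for every graph, which for the uniform triangle equals $\tfrac79$, whereas $\lim_k\mu_d(K_3^k)=\mu_c((K_3^{1})_{\ell})=\tfrac34<\tfrac79$. So the obstruction you identified is real and not fixable by sharper bookkeeping: any correct lower bound must carry an error term of order $2^{2k}$ coming from the red--red contribution, and only the upper bound (with its tightness for trees) survives both your argument and the paper's.
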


The limit of the upper bound in Theorem \ref{th:upper_lower_bounds}, when $k$ tends to infinity, is given by the coefficients of the term $2^{2k}$. Thus,
$$\lim_{k\to\infty} \mu_d(\N^k)\leq \lim_{k\to\infty} \frac{2 \, \Omega_k(\N, \N^1)}{(n+m(2^k-1))^2} = \frac{2(\W(\mathcal{B})+|E|/6)}{m^2}.$$
Hence, by Equation (\ref{eq:discrete}), we obtain the following bounds.

\begin{corollary}
Let $\N=(V,E)$ be a weighted graph with $m\geq 2$ edges, and let $\N^k=(V^k, E^k)$ be its $k$-th subdivision, where $k\geq 2$. Let $\mathcal{B}$ be the set of vertices inserted into $\N$ to obtain the graph $\N^1$. Then,
$$\lim_{k\to\infty} \mu_d(\N^k) \leq \mu_d(\mathcal{B})+\frac{|E|}{3m^2}.$$
Moreover, if $\N$ is $\alpha$-uniform then,  
$$
\lim_{k\to\infty} \mu_d(\N^k) = \mu_c(\NL) \leq\mu_d(\mathcal{B})+\displaystyle\frac{\alpha}{3m}.$$
\end{corollary}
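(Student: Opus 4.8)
The plan is to derive both inequalities directly from the upper bound in Theorem~\ref{th:upper_lower_bounds} by an asymptotic analysis as $k\to\infty$, and then, in the uniform case, to identify the resulting limit with $\mu_c(\NL)$ using the convergence of the discrete mean distance under uniform edge refinement noted at the start of Section~\ref{sec:convergence_subdivisions}.

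First I would start from the upper bound
$$\mu_d(\N^k)\leq \frac{2\,\Omega_k(\N,\N^1)}{(n+m(2^k-1))^2}$$
and take the limit $k\to\infty$. The key observation is that both numerator and denominator grow like $2^{2k}$, so the limit equals the ratio of the coefficients of $2^{2k}$. Expanding $\Omega_k(\N,\N^1)$, the constant term $\W(V^1)$ is negligible; the term $(2^k-2)\bigl(2^k\W(\mathcal{B})+\W(\mathcal{B};V)\bigr)$ contributes $\W(\mathcal{B})$ to the coefficient of $2^{2k}$; and $|E|\left(\frac{2^{2k-1}}{3}-2^{k-1}+\frac{1}{3}\right)$ contributes $|E|/6$, since $2^{2k-1}/3=2^{2k}/6$. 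In the denominator, $(n+m(2^k-1))^2=(m\,2^k+n-m)^2$ has leading coefficient $m^2$. Hence
$$\lim_{k\to\infty}\frac{2\,\Omega_k(\N,\N^1)}{(n+m(2^k-1))^2}=\frac{2\bigl(\W(\mathcal{B})+|E|/6\bigr)}{m^2}=\frac{2\W(\mathcal{B})}{m^2}+\frac{|E|}{3m^2}.$$
Since $|\mathcal{B}|=m$, Equation~(\ref{eq:discrete}) applied to $\mathcal{B}$ gives $\mu_d(\mathcal{B})=2\W(\mathcal{B})/m^2$, which yields the first bound $\lim_{k\to\infty}\mu_d(\N^k)\leq\mu_d(\mathcal{B})+|E|/(3m^2)$.

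For the uniform case I would specialize $|E|=m\alpha$, so that $|E|/(3m^2)=\alpha/(3m)$ and the right-hand side becomes $\mu_d(\mathcal{B})+\alpha/(3m)$. It then remains to replace the bound by an exact limit and identify it with $\mu_c(\NL)$. Here I would use that for an $\alpha$-uniform graph every $\N^k$ is $(\alpha/2^k)$-uniform, so the subdivision scheme refines all edges uniformly with mesh size tending to zero; by the defining property of the continuous mean distance for uniform graphs, $\mu_d(\N^k)$ converges to $\mu_c(\NL)$. Combining this with the upper bound gives $\mu_c(\NL)=\lim_{k\to\infty}\mu_d(\N^k)\leq\mu_d(\mathcal{B})+\alpha/(3m)$.

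The only genuinely delicate point is the justification that $\lim_{k\to\infty}\mu_d(\N^k)=\mu_c(\NL)$ in the uniform case; everything else is a routine extraction of dominant terms. I note that one could even dispense with an external convergence statement by observing that the \emph{lower} bound of Theorem~\ref{th:upper_lower_bounds} converges to the same value $\mu_d(\mathcal{B})+|E|/(3m^2)$, because its correction term $\rho\bigl(3{m\choose 2}+m(n-2)\bigr)\left(2^{k-2}-\frac12\right)$ grows only like $2^k$ and is thus swamped by the $2^{2k}$ denominator; a squeeze then shows the limit exists and equals the common value, which for uniform graphs must be $\mu_c(\NL)$.
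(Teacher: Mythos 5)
Your main derivation is correct and is essentially the paper's own argument: the paper obtains the first bound in exactly the same way, by extracting the coefficient of $2^{2k}$ in $\Omega_k(\N,\N^1)$ (namely $\W(\mathcal{B})+|E|/6$) and in $(n+m(2^k-1))^2$ (namely $m^2$), rewriting $2\W(\mathcal{B})/m^2$ as $\mu_d(\mathcal{B})$ via Equation~(\ref{eq:discrete}), and in the uniform case substituting $|E|=m\alpha$ and invoking the convergence of $\mu_d(\N^k)$ to $\mu_c(\NL)$ under uniform refinement, which the paper asserts at the start of Section~\ref{sec:convergence_subdivisions} at the same level of detail you give.

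You should, however, drop the closing remark about squeezing with the lower bound. Taken at face value it would prove \emph{equality} $\lim_{k\to\infty}\mu_d(\N^k)=\mu_d(\mathcal{B})+|E|/(3m^2)$ for every graph, hence $\mu_c(\NL)=\mu_d(\mathcal{B})+\alpha/(3m)$ for every $\alpha$-uniform graph. That conclusion is false: for the $1$-uniform triangle $K_3$ one has $\mu_c=3/4$ (a cycle of total length $3$), and indeed $\mu_d(K_3^k)=3/4$ for every $k$ because $K_3^k$ is a uniform even cycle, whereas $\mu_d(\mathcal{B})+\alpha/(3m)=2/3+1/9=7/9$. Your asymptotics of the correction term are right (it is $O(2^k)$ against a $\Theta(2^{2k})$ denominator), so what your observation actually exposes is that the stated lower bound of Theorem~\ref{th:upper_lower_bounds} cannot hold for all graphs and all $k$ (for the triangle it fails from $k=5$ on); recall that it is derived in the paper from an idealized configuration that admittedly ``cannot exist''. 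The corollary's inequality is genuinely strict in general, and its proof must rely only on the upper bound, as yours otherwise does.
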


We want to highlight that all trees attain the preceding upper bounds (this is a consequence of the study developed in Section \ref{sec:upper} below). 

\begin{corollary}\label{cor:trees}
Let $T=(V,E)$ be a weighted tree with $n\geq 3$ vertices, and let $T^k$ be its $k$-th subdivision. Let $\mathcal{B}$ be the set of vertices inserted into $T$ to obtain $T^1$. Then,
$$\lim_{k\to\infty} \mu_d(T^k) = \mu_d(\mathcal{B})+\frac{|E|}{3(n-1)^2}.$$
Moreover, if $T$ is $\alpha$-uniform then,
$$\mu_c(T_{\ell})=\lim_{k\to\infty} \mu_d(T^k)=\mu_d(\mathcal{B})+\displaystyle\frac{\alpha}{3(n-1)}.$$
\end{corollary}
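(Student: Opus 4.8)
The plan is to obtain Corollary~\ref{cor:trees} as a specialization of the general bounds in Corollary~\ref{cor:trees}'s parent (the unnamed corollary), by exploiting the fact that a tree has exactly $m = n-1$ edges and, more importantly, that the upper bound is attained with equality for trees. First I would substitute $m = n-1$ into the general limit $\lim_{k\to\infty}\mu_d(\N^k) \leq \mu_d(\mathcal{B}) + \frac{|E|}{3m^2}$, which immediately turns the denominator $3m^2$ into $3(n-1)^2$ and yields the claimed right-hand side. The non-uniform statement then reads as an inequality, so the real content is upgrading the $\leq$ to an equality.

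The key step is therefore to argue that for trees the upper bound in Theorem~\ref{th:upper_lower_bounds} is achieved exactly, i.e. the strict lower bound's gap vanishes in the limit. I expect this to follow from the structure of distances in a tree: because between any two vertices there is a \emph{unique} path, every distance $d(u,v)$ in $\N^k$ decomposes additively through the black/blue vertices with no shortcuts available. In particular, for a pair of red vertices $a,b$ lying on subdivided edges $e^k$ and $f^k$, the shortest path is forced to exit each edge through its nearest endpoint, so the additive splitting used to derive $\Omega_k(\N,\N^1)$ holds with equality rather than as an upper bound. This is precisely the situation flagged in the sentence ``all trees attain the preceding upper bounds,'' which the text attributes to the analysis of Section~\ref{sec:upper}; I would invoke that analysis to certify equality.

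Once equality is established, I would take the limit term-by-term as in the discussion preceding the corollary, retaining only the coefficients of $2^{2k}$ in both numerator and denominator: the numerator contributes $2(\W(\mathcal{B}) + |E|/6)$ and the denominator contributes $m^2 = (n-1)^2$, giving $\lim_{k\to\infty}\mu_d(T^k) = \frac{2\W(\mathcal{B})}{(n-1)^2} + \frac{|E|}{3(n-1)^2}$. Recognizing $\frac{2\W(\mathcal{B})}{(n-1)^2} = \frac{2\W(\mathcal{B})}{|\mathcal{B}|^2} = \mu_d(\mathcal{B})$ via Equation~(\ref{eq:discrete}) (since $|\mathcal{B}| = m = n-1$) completes the non-uniform case.

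For the uniform case I would combine two facts. By definition of the continuous mean distance, iteratively adding a vertex on each edge of an $\alpha$-uniform graph produces a sequence of discrete means converging to $\mu_c(T_{\ell})$, so $\lim_{k\to\infty}\mu_d(T^k) = \mu_c(T_{\ell})$; this identifies the limit with the continuous parameter. Then substituting $|E| = n-1$ and $m = n-1$ into the just-derived formula collapses $\frac{|E|}{3(n-1)^2}$ to $\frac{1}{3(n-1)}$, and after rescaling by $\alpha$ (Remark~\ref{rem:homotecia}) yields $\mu_c(T_{\ell}) = \mu_d(\mathcal{B}) + \frac{\alpha}{3(n-1)}$. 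The main obstacle I anticipate is justifying the equality claim cleanly: I must verify that \emph{no} pair of vertices in $T^k$ admits a shorter alternate route, so that the inequality-producing overcounting in Theorem~\ref{th:upper_lower_bounds} contributes nothing for trees. Since trees have no cycles, this should reduce to the uniqueness of paths, but I would want to confirm that the specific sum $\W(\mathcal{R}^k; V) + \W(\mathcal{R}^k; \mathcal{B}) + \W(\mathcal{R}^k)$ is computed exactly rather than bounded in the tree setting, which is exactly what Section~\ref{sec:upper} is expected to supply.
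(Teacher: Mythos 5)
Your proposal is correct and follows essentially the same route as the paper: the paper also obtains this corollary by observing that trees satisfy (via uniqueness of paths) the shortest-path conditions under which the bounds of Section~\ref{sec:upper} hold with equality, so $\W(T^k)=\Omega_k(T,T^1)$ exactly, and then extracting the $2^{2k}$ coefficients and using $m=|\mathcal{B}|=n-1$, with the uniform case identified with $\mu_c(T_\ell)$ exactly as you describe. The only caution is notational: in this paper $|E|$ denotes total edge length (so $|E|=\alpha(n-1)$ in the $\alpha$-uniform case), which gives $\frac{\alpha}{3(n-1)}$ directly without needing the separate rescaling step you sketch, though your version via Remark~\ref{rem:homotecia} also works.
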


Corollary~\ref{cor:trees} gives simple examples where the discrete mean distance, when subdividing the edges of the graph, does not converge to the continuous counterpart.  Consider, for example, a path $P$ with 4 vertices and edge lengths $2,1,1$; we have $\mu_c(\PP)=4/3 \approx 1.33 $, whilst from Corollary~\ref{cor:trees} we obtain $\lim_{k\to\infty} \mu_d(P^k)=10/9+4/27=34/27 \approx 1.26$.

\subsection{Proof of the upper bound in Theorem \ref{th:upper_lower_bounds}}\label{sec:upper}

First, we upper bound the Wiener index of $\N^k$. By Equation (\ref{eq:red}), it suffices to compute upper bounds on $\W(\mathcal{R}^k), \W(\mathcal{R}^k; V)$, and $\W(\mathcal{R}^k; \mathcal{B})$.

To upper-bound $\W(\mathcal{R}^k)$, we begin by distinguishing the distances between red vertices, depending on whether they are on the $k$-th subdivision of the same edge or of distinct edges. We use the notation $e^k\cap \mathcal{R}^k$ to indicate the set of red points that are on the $k$-th subdivision of an edge $e$; analogously, the notation $e^k\cap \mathcal{B}$ will refer to the blue point that is on $e^k$.
\begin{equation}\label{eq:same_distinct_edges}
\W(\mathcal{R}^k)=\sum_{\substack{\{e_1,e_2\} \subseteq E \\ e_1\neq e_2}}\W(e_1^k\cap \mathcal{R}^k; e_2^k\cap \mathcal{R}^k) + \sum_{e\in E} \W(e^k\cap \mathcal{R}^k).
\end{equation}
For red vertices that are on the same edge, we can compute the sum of distances exactly.

\begin{lemma} \label{lem:same_edge} The following formula holds for $k\geq 2$:
$$\sum_{e\in E} \W(e^k\cap \mathcal{R}^k)=|E|\left(\frac{2^{2k-1}}{3}-3\cdot 2^{k-2}+\frac{5}{6}\right).$$
\end{lemma}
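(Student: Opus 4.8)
The plan is to reduce the whole sum to a single-edge computation and then exploit scale invariance. Fix an edge $e\in E$ and set $N=2^k$. In $\N^k$ the path $e^k$ is a chain of $2^k+1$ equally spaced vertices, which I would place at the integer positions $0,1,\dots,N$, consecutive ones at distance $|e|/N$. The two black endpoints occupy positions $0$ and $N$, the blue vertex (the midpoint inserted to build $\N^1$) occupies position $N/2=2^{k-1}$, and the remaining $2^k-2$ vertices are precisely the red vertices of $e^k\cap\mathcal{R}^k$, filling the index set $R=\{1,\dots,N-1\}\setminus\{N/2\}$. Since the distance between the vertices at positions $i$ and $j$ equals $\frac{|e|}{N}|i-j|$, we get $\W(e^k\cap\mathcal{R}^k)=\frac{|e|}{N}D_k$ with $D_k=\sum_{\{i,j\}\subseteq R}|i-j|$ independent of $e$. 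Summing over edges and recalling that $|E|=\sum_{e\in E}|e|$ then gives $\sum_{e\in E}\W(e^k\cap\mathcal{R}^k)=|E|\,\frac{D_k}{N}$, so the whole problem collapses to evaluating $D_k$.

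To compute $D_k$ I would not sum over the red positions directly, but rather start from the full chain and remove the special vertices. With $S=\{0,1,\dots,N\}$, the sum of index differences over all unordered pairs is the standard closed form $\sum_{\{i,j\}\subseteq S}|i-j|=\binom{N+2}{3}$. Letting $X=\{0,\,N/2,\,N\}$ be the three non-red positions, inclusion--exclusion yields
$$D_k=\binom{N+2}{3}-\sum_{x\in X}\sum_{y\in S}|x-y|+\sum_{\{x,x'\}\subseteq X}|x-x'|,$$
the last term restoring the pairs internal to $X$ that the middle term subtracts twice. Each inner sum is elementary, $\sum_{y\in S}|x-y|=\tfrac12\big(x(x+1)+(N-x)(N-x+1)\big)$, equal to $\tfrac{N(N+1)}{2}$ for $x\in\{0,N\}$ and to $\tfrac{N(N+2)}{4}$ for $x=N/2$; and the mutual distances among the elements of $X$ sum to $N+\tfrac N2+\tfrac N2=2N$.

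Substituting and simplifying leaves $D_k=\frac{N(2N^2-9N+10)}{12}$, hence $\frac{D_k}{N}=\frac{2N^2-9N+10}{12}$; writing $N=2^k$ and distributing the three monomials over the denominator $12$ produces exactly $\frac{2^{2k-1}}{3}-3\cdot 2^{k-2}+\frac56$, so multiplying by $|E|$ gives the asserted identity. I expect the only genuinely delicate point to be the bookkeeping in the inclusion--exclusion step: the two black endpoints and the single blue midpoint must be treated uniformly as members of $X$, and the three mutual distances among them must be added back so that internal pairs are counted neither too often nor too seldom; everything else is routine algebra that can be verified on small values such as $k=2$ (where $R=\{1,3\}$) and $k=3$. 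The hypothesis $k\geq 2$ is precisely what guarantees that $N/2=2^{k-1}$ is an interior index distinct from $0$ and $N$, so that $X$ genuinely has three elements and the above count is valid.
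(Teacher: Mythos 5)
Your proof is correct, and it follows essentially the same route as the paper: reduce to a single edge by homogeneity in $|e|$, use the closed form $\binom{2^k+2}{3}$ for the full subdivided path, and subtract the contributions of the three non-red vertices. The only difference is cosmetic bookkeeping --- you run a symmetric inclusion--exclusion over $X=\{0,2^{k-1},2^k\}$, whereas the paper first removes all pairs meeting the two black endpoints and then the red--blue pairs; the resulting algebra is identical.
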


\begin{proof}
Given an edge $e=wz\in E$, we have:
\begin{equation}\label{eq:total}
 \W(e^k\cap \mathcal{R}^k) = \W(e^k)-\W(\{w,z\};e^k)-\W(e^k\cap \mathcal{R}^k;e^k\cap \mathcal{B}).  
\end{equation}
The value $\W(e^k)$ is the Wiener index of an $\frac{|e|}{2^k}$-uniform path on $2^k+1$ vertices, which can be deduced from $\mu_d(e^k)$ by Equation (\ref{eq:discrete}) and Remark \ref{rem:homotecia}.
It is also known that the discrete mean distance of a 1-uniform path on $n$ vertices is $(n+1)(n-1)/3n$ \cite{mean_dist_M}. Thus, we have:
\begin{equation}\label{eq:path}
 \W(e^k)=\frac{|e|}{2^k}\left(\frac{\mu_d(e^k)(2^k+1)^2}{2}\right)=\frac{|e|}{2^k}\left(\frac{(2^k+2)(2^k+1)2^k}{6}\right)=\frac{|e|}{2^k}{2^k+2 \choose 3}.   
\end{equation}
Further, 
\begin{equation}\label{eq:black}
\W(\{w,z\};e^k)=d(w,z) + \mkern-18mu \sum_{\substack{u\in \{w,z\}\\v\in e^k-\{w,z\}}} \mkern-18mu d(u,v)  =|e|+2\left(\frac{|e|}{2^k}(1+2+3+\ldots+2^k-1)\right)=2^k|e|.
\end{equation}
\noindent Finally, 
\begin{equation}\label{eq:red-blue}
\W(e^k\cap \mathcal{R}^k;e^k\cap \mathcal{B})=2\left(\frac{|e|}{2^k}(1+2+3+\ldots +(2^{k-1}-1))\right)=|e|\left(2^{k-2}-\frac{1}{2}\right).\end{equation}
By Equations (\ref{eq:total}) -- (\ref{eq:red-blue}) we obtain:
$$\sum_{e\in E} \W(e^k\cap \mathcal{R}^k)=|E|\left(\frac{1}{2^k}{2^k+2 \choose 3}-2^k-2^{k-2}+\frac{1}{2}\right)=|E|\left(\frac{2^{2k-1}}{3}-3\cdot 2^{k-2}+\frac{5}{6}\right).$$
\end{proof}

It remains to upper-bound the sum of distances, for every pair of distinct edges, of red vertices that are on the subdivision of the edges; see Equation (\ref{eq:same_distinct_edges}). Roughly speaking, the maximum value of this sum is obtained when, for every pair of distinct edges, all  shortest path between any two of their vertices use the same endpoints of those edges; see Figure~\ref{fig:subdivisions}a, which considers the subdivisions $e_1^3$ and $e_2^3$ of two distinct edges $e_1$ and $e_2$. 
This is because when we can enter and get out of the edges using different endpoints, the distance between any two vertices (black, red, or blue) on the subdivision of the edges, in the best case, decreases. For example, in Figure~\ref{fig:subdivisions}a, $d(r_6, s_5)$ would be smaller if there would be another shortest path (not only the orange one) connecting the two edges via the endpoints $z_1$ and $z_2$. The following lemma gives the maximum value of that sum of distances.

\begin{lemma}\label{lem:distinct_edges}
 If for every two distinct edges $e_1$ and $e_2$ all shortest paths connecting any two vertices located on $e_1^k$ and $e_2^k$, respectively, go through the same endpoints of $e_1$ and $e_2$, then:
$$\sum_{\{e_1,e_2\}\in E, e_1\neq e_2}\W(e_1^k\cap \mathcal{R}^k, e_2^k\cap \mathcal{R}^k)=(2^k-2)^2\W(\mathcal{B}).$$
\end{lemma}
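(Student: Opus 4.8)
The plan is to reduce the double sum over pairs of distinct edges to a statement about the two blue midpoints, exploiting the assumed structure that every shortest path between vertices on $e_1^k$ and $e_2^k$ enters and leaves through a fixed pair of endpoints. Fix two distinct edges $e_1$ and $e_2$, and let $z_1$ and $z_2$ be the endpoints through which all such shortest paths pass. Then for any red vertex $r$ on $e_1^k$ and any red vertex $s$ on $e_2^k$, the distance decomposes as $d(r,s) = d(r,z_1) + d(z_1,z_2) + d(z_2,s)$. This is the key structural consequence of the hypothesis, and it is exactly the situation that Remark \ref{rem:division_edge}(ii) captures at the discrete level; it is what lets the two-edge interaction separate into contributions along $e_1$, the fixed bridge $d(z_1,z_2)$, and along $e_2$.

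Next I would count the red vertices on each subdivided edge and sum the decomposed distances. Each $e_i^k$ carries $2^k-2$ red vertices (the total $2^k+1$ vertices minus $2$ black and $1$ blue), so there are $(2^k-2)^2$ ordered—or rather, the sum ranges over all pairs of one red vertex from each edge, giving $(2^k-2)^2$ terms per edge pair. Summing $d(r,z_1)+d(z_1,z_2)+d(z_2,s)$ over all these pairs produces three groups: the cross term $d(z_1,z_2)$ appears $(2^k-2)^2$ times, while the $d(r,z_1)$ terms factor as $(2^k-2)$ copies of $\sum_r d(r,z_1)$ and symmetrically for the $s$-side. The crucial simplification the lemma is aiming at is that the entire contribution collapses to $(2^k-2)^2$ times the blue-blue Wiener index $\W(\mathcal{B})$; for this to hold, the per-edge internal sums $\sum_r d(r,z_1)$ must cancel or be absorbed, which happens precisely because the red vertices on each edge are symmetric about the blue midpoint and the blue point sits halfway. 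I would verify that the symmetric placement forces $\sum_{r\in e_1^k\cap\mathcal{R}^k} (d(r,z_1)-d(\text{midpoint},z_1))$ to vanish when paired against the symmetric red vertices on $e_2$, so that averaging each red vertex against the opposite blue midpoint replaces $d(r,z_1)$ by $d(b_1,z_1)$, where $b_1$ is the blue point of $e_1$.

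Concretely, I would argue that under the hypothesis the distance $d(r,s)$, summed over the $(2^k-2)^2$ red pairs, equals $(2^k-2)^2\,d(b_1,b_2)$, where $b_1,b_2$ are the blue midpoints of $e_1,e_2$. The reason is that the red vertices on $e_1^k$ come in symmetric pairs equidistant from $b_1$ along the edge, and similarly on $e_2^k$; when we pair a red vertex at signed offset $+\delta$ from $b_1$ with the symmetric partner at $-\delta$, the two contributions $d(b_1,b_2)+\delta$ and $d(b_1,b_2)-\delta$ (measuring along the common shortest-path geometry) average exactly to $d(b_1,b_2)$. Because the blue point is the true midpoint, every red vertex has such a mirror partner, so the full double sum over red-red pairs is $(2^k-2)^2 d(b_1,b_2)$. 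Summing this identity over all unordered pairs $\{e_1,e_2\}$ of distinct edges yields $(2^k-2)^2\sum_{\{e_1,e_2\}} d(b_1,b_2) = (2^k-2)^2\,\W(\mathcal{B})$, which is the claimed formula, since $\W(\mathcal{B})$ is by definition the sum of pairwise distances between the blue midpoints.

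The main obstacle I anticipate is making the symmetric-cancellation argument fully rigorous rather than merely plausible: I need to confirm that the offsets $\delta$ for the red vertices are genuinely symmetric about the blue point (true because the subdivision inserts $2^{k-1}-1$ vertices uniformly on each half), and that under the fixed-endpoint hypothesis the distance $d(r,s)$ is an \emph{affine} function of these offsets with no absolute-value kinks — which is exactly what the hypothesis guarantees, since a single fixed pair of traversal endpoints means no shortest path ever switches sides. Once the affine-decomposition is pinned down, the pairing cancellation is a routine rearrangement, but stating it carefully (and handling the bookkeeping of which offsets are positive versus negative relative to the chosen endpoints $z_1,z_2$) is where the real care is needed.
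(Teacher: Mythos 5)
Your proposal is correct and follows essentially the same route as the paper: both write $d(r,s)$ as $d(b_1,b_2)$ plus affine offset terms (valid because the fixed-endpoint hypothesis removes any kinks), and both observe that the offsets cancel because the $2^k-2$ red vertices on each subdivided edge are placed symmetrically about the blue midpoint. The paper merely makes the cancellation explicit by writing out the four sign cases for $d(r_i,s_j)$ and pairing them, which is exactly your mirror-partner argument.
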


\begin{proof}
Consider two distinct edges $e_1=w_1z_1$ and $e_2=w_2z_2$, and assume that all shortest paths connecting any two vertices on $e_1^k$ and $e_2^k$ go through $w_1$ and $w_2$ (the argument is analogous for the other combinations of endpoints); it might happen that $w_1=w_2$. Let $\{w_1, r_1, \ldots r_{2^{k-1}-1}, b_1,  r_{2^{k-1}}, \ldots r_{2^{k}-2}, z_1\}$ be the sequence of vertices in $e_1^k$ ordered from the leftmost to the rightmost vertex, where $b_1\in \mathcal{B}$ and $r_i\in \mathcal{R}^k$, and let $\{w_2, s_1, \ldots s_{2^{k-1}-1}, b_2, s_{2^{k-1}}, \linebreak \ldots s_{2^{k}-2}, z_2\}$ be the analogous sequence of vertices in $e_2^k$. Refer to Figure~\ref{fig:subdivisions}a. We have:
{\small 
$$d(r_i,s_j)=\left\{\begin{tabular}{lcl}
 $\displaystyle d(b_1,b_2)-(2^{k-1}-i)\frac{|e_1|}{2^k}-(2^{k-1}-j)\frac{|e_2|}{2^k}$  &  if & $1\leq i,j \leq 2^{k-1}-1$\\ \noalign{\medskip}
  $\displaystyle d(b_1,b_2)+(i-2^{k-1}+1)\frac{|e_1|}{2^k}+(j-2^{k-1}+1)\frac{|e_2|}{2^k}$  &  if & $2^{k-1}\leq i,j \leq 2^{k}-2$\\ \noalign{\medskip}
    $\displaystyle d(b_1,b_2)-(2^{k-1}-i)\frac{|e_1|}{2^k}+(j-2^{k-1}+1)\frac{|e_2|}{2^k}$  &  if &  $1\leq i \leq 2^{k-1}-1$, $2^{k-1}\leq j \leq 2^{k}-2$\\ \noalign{\medskip}
     $\displaystyle d(b_1,b_2)+(i-2^{k-1}+1)\frac{|e_1|}{2^k}-(2^{k-1}-j)\frac{|e_2|}{2^k}$  &  if &  $1\leq j \leq 2^{k-1}-1$, $2^{k-1}\leq i \leq 2^{k}-2$\\ 
\end{tabular}\right.$$}

\noindent Hence, $\W(e_1^k\cap \mathcal{R}^k, e_2^k\cap \mathcal{R}^k)=\sum_{r_i\in e_1^k \cap \mathcal{R}^k, s_j\in e_2^k \cap \mathcal{R}^k} d(r_i,s_j)=(2^{k}-2)^2d(b_1,b_2)$, since all the expressions depending on $\frac{|e_1|}{2^k}$ and $\frac{|e_2|}{2^k}$ cancel each other out (they cancel out in pairs, for example, the expressions in $d(r_1,s_1)$ cancel out with the ones in $d(r_{2^k-2},s_{2^k-2})$). The result then follows by summing over all pairs of distinct edges $e_1$ and $e_2$.
\end{proof}

\begin{figure}[t]
\centering
\includegraphics[width=0.9\textwidth]{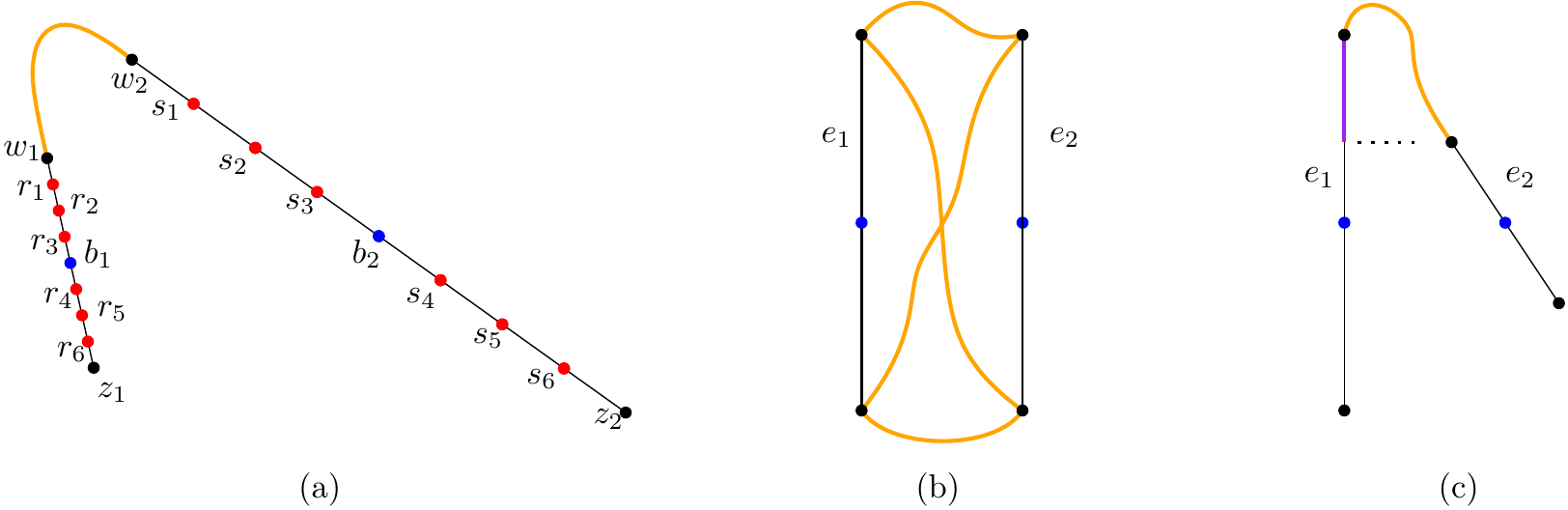}
\caption{(a) Subdivisions $e_1^3$ and $e_2^3$: the shortest path between any two vertices (black, red, or blue) located on, respectively, each subdivision, goes through the orange path. (b) The orange paths are shortest paths connecting the blue points (midpoints of $e_1$ and $e_2$). (c) The shortest path between any point on the sub-edge of $e_1$ in purple uses the orange path, i.e., the same endpoints of $e_1$ and $e_2$; this is the condition used in Lemma \ref{lem:distinct_edges} that increases the value of the sum of distances between vertices, respectively, on $e_1^k$ and $e_2^k$.}
\label{fig:subdivisions}
\end{figure}

As explained before, by Equation (\ref{eq:same_distinct_edges}), and Lemmas \ref{lem:same_edge} and \ref{lem:distinct_edges}, we obtain an upper bound on $\W(\mathcal{R}^k)$ for the $k$-th subdivision of any weighted graph (with at least two edges):
\begin{equation}\label{rojos}
\W(\mathcal{R}^k)\leq (2^k-2)^2\W(\mathcal{B})+|E|\left(\frac{2^{2k-1}}{3}-3\cdot 2^{k-2}+\frac{5}{6}\right).
\end{equation}

An upper bound on $\W(\mathcal{R}^k; V)$ is obtained by using similar arguments as in Lemma \ref{lem:distinct_edges}. Here we assume 
that for every edge $e=wz$ of $\N$, all shortest paths connecting any vertex on $e^k$ with any vertex in $V\setminus\{w, z\}$ go through the same endpoint $w$ of $e$. The only difference with the proof of Lemma \ref{lem:distinct_edges} is that we compute $d(r_i,v)$ for $v\in V\setminus\{w, z\}$ instead of $d(r_i,s_j)$, obtaining analogous expressions but distinguishing only the cases $1\leq i\leq 2^{k-1}-1$ and $2^{k-1}\leq i \leq 2^{k}-2$. The same type of expression is obtained for the endpoints of the edge $e$.
For example, $d(r_i, w)$ is given by:

$$d(r_i,w)=\left\{\begin{tabular}{lcl}
 $\displaystyle d(b,w)-(2^{k-1}-i)\frac{|e|}{2^k}$  &  if & $1\leq i \leq 2^{k-1}-1$\\ \noalign{\medskip}
  $\displaystyle d(b,w)+(i-2^{k-1}+1)\frac{|e|}{2^k}$  &  if & $2^{k-1}\leq i \leq 2^{k}-2$\\ \noalign{\medskip}
\end{tabular}\right.$$

\noindent where $b\in e^k \cap\mathcal{B}$. Thus, for a fixed $v\in V$ it follows that
$\sum_{r_i\in e^k \cap \mathcal{R}^k} d(r_i,v)=(2^{k}-2)d(b,v)$ (all the expressions depending on $\frac{|e|}{2^k}$ again cancel each other out). Therefore,
\begin{equation}\label{eq:red-black}
\W(\mathcal{R}^k; V)=\sum_{r_i\in \mathcal{R}^k, v\in V} d(r_i,v)\leq (2^{k}-2)\W(\mathcal{B}; V).
\end{equation}
and the bound is attained when the condition on the shortest paths stated above holds.

 By distinguishing again between vertices that are on the same edge or on distinct edges, and proceeding as above, we reach the following upper bound on $\W(\mathcal{R}^k; \mathcal{B})$.
  For the sake of brevity, we omit the details as the arguments are the same.
\begin{equation}\label{eq:rojos-azules}
\W(\mathcal{R}^k; \mathcal{B})\leq 2(2^k-2)\W(\mathcal{B})+|E|\left(2^{k-2}-\frac{1}{2}\right).
\end{equation}
The preceding bound is attained when,  for every edge $e$ of $\N$, all shortest paths connecting any vertex on $e^k$ with any vertex in $\mathcal{B}\setminus\{b\}$ (where $b\in e^k$) go through the same endpoint of $e$.

By Equation (\ref{eq:red}), and the bounds given in (\ref{rojos}), (\ref{eq:red-black}), and (\ref{eq:rojos-azules}), we can conclude that $\W(\N^k)\leq \Omega_k(\N, \N^1)$; the upper bound in Theorem \ref{th:upper_lower_bounds} then follows by Equation (\ref{discrete-wiener}). This bound is attained by all weighted graphs satisfying the conditions on shortest paths that lead to the equality in Equations (\ref{rojos}-- \ref{eq:rojos-azules}), in particular all trees.

\subsection{Proof of the lower bound in Theorem \ref{th:upper_lower_bounds}}

 The minimum value of $\mu_d(\N^k)$ would be obtained by a graph $\N$ satisfying that every pair of edges $e_1, e_2$ have the same length, and their midpoints
 are connected by shortest paths going through any pair of endpoints of $e_1$ and $e_2$ (see Figure~\ref{fig:subdivisions}b). Indeed, as we explained for the upper bound, the sum of distances between vertices on the subdivisions of the edges (black, red, or blue) decreases when the combinations of endpoints to enter and get out of the edges increase, so the minimum is given when all possible combinations of endpoints can be used. In addition, the graph should be uniform, as otherwise there would be a pair of edges in the situation described in Figure~\ref{fig:subdivisions}c, which would give a larger value for the sum of distances. 
 
 Clearly, there cannot exist a graph satisfying the previous condition on the shortest paths connecting the midpoints of any pair of edges (simply consider the midpoints of two incident edges), but, in order to make the computations necessary to obtain a lower bound on $\mu_d(\N^k)$, we shall  assume in Lemma \ref{lem_lower:distinct_edges} below that all edges of the graph $\N$ have the same length $\alpha$, and that any pair of its edges satisfies the condition on the midpoints.


By Equation (\ref{eq:red}), it suffices to compute lower bounds on $\W(\mathcal{R}^k), \W(\mathcal{R}^k; V)$, and $\W(\mathcal{R}^k; \mathcal{B})$ in order to lower-bound the Wiener index of $\N^k$. Again, we begin with $\W(\mathcal{R}^k)$.

\begin{lemma}\label{lem_lower:distinct_edges}
Let $G$ be an $\alpha$-uniform graph with $m\geq 2$ edges. If the midpoints of every pair $e_1, e_2$ of distinct edges of $\N$ are connected by shortest paths going through any pair of endpoints of $e_1$ and $e_2$, then:
 $$\sum_{\{e_1,e_2\}\in E, e_1\neq e_2}\W(e_1^k\cap \mathcal{R}^k, e_2^k\cap \mathcal{R}^k)=(2^k-2)^2\W(\mathcal{B})- 2\alpha {m\choose 2} \left(2^{k-2}-\frac{1}{2}\right).$$
\end{lemma}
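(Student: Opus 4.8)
The plan is to mimic the proof of Lemma~\ref{lem:distinct_edges}, fixing a single pair of distinct edges first and then summing over all pairs, but now exploiting the extra routing freedom that the hypothesis grants. First I would fix two distinct edges $e_1 = w_1z_1$ and $e_2 = w_2z_2$, both of length $\alpha$, with blue midpoints $b_1$ and $b_2$. The key consequence of the hypothesis is that, since the two midpoints are joined by shortest paths through \emph{every} pair of endpoints, the four ``gateway'' distances $d(w_1,w_2)$, $d(w_1,z_2)$, $d(z_1,w_2)$, $d(z_1,z_2)$ must all coincide; call this common value $D$. In particular $d(b_1,b_2) = \alpha + D$.

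Next I would compute $d(r_i,s_j)$ for a red vertex $r_i$ on $e_1^k$ and a red vertex $s_j$ on $e_2^k$. Any $r_is_j$-path must leave $e_1$ through one of its endpoints and enter $e_2$ through one of its endpoints, so $d(r_i,s_j) = \min_{a,c}\left[d(r_i,a) + d(a,c) + d(s_j,c)\right]$ over the endpoints $a\in\{w_1,z_1\}$, $c\in\{w_2,z_2\}$. Because all four values $d(a,c)$ equal $D$, this minimum factorizes into independent shortcuts to the nearest endpoint on each edge: $d(r_i,s_j) = D + (\tfrac{\alpha}{2}-d(r_i,b_1)) + (\tfrac{\alpha}{2}-d(s_j,b_2)) = d(b_1,b_2) - d(r_i,b_1) - d(s_j,b_2)$ for \emph{every} pair $i,j$. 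This is the essential point, and the sharp contrast with Lemma~\ref{lem:distinct_edges}: there a single fixed gateway produced offsets of mixed sign that cancelled in the double sum, whereas here the freedom to shortcut on both sides makes all offsets negative, so they no longer cancel but accumulate.

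With this formula in hand, I would carry out the double sum over the $(2^k-2)^2$ red--red pairs. The term $d(b_1,b_2)$ contributes $(2^k-2)^2 d(b_1,b_2)$, while the offsets contribute a multiple of $\sum_{i} d(r_i,b_1)$, a quantity already evaluated in the derivation of Equation~(\ref{eq:red-blue}) as $\alpha(2^{k-2}-\tfrac12)$ (the red vertices sit symmetrically about $b_1$ at the arithmetic-progression distances $\tfrac{\alpha}{2^k},\dots,(2^{k-1}-1)\tfrac{\alpha}{2^k}$ on each side). Collecting these contributions gives $\W(e_1^k\cap\mathcal R^k;\,e_2^k\cap\mathcal R^k)$ for a single edge pair, expressed purely in terms of $d(b_1,b_2)$ and $\alpha$.

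Finally I would sum over all unordered pairs of distinct edges. Since each blue vertex is the midpoint of exactly one edge, $\sum_{\{e_1,e_2\}} d(b_1,b_2) = \W(\mathcal B)$, turning the $d(b_1,b_2)$ part into $(2^k-2)^2\W(\mathcal B)$; the $\alpha$-part becomes a multiple of $\binom{m}{2}$, yielding the stated correction term. The main obstacle is the justification in the second paragraph: one must verify that the minimum over the four endpoint combinations really does separate into a nearest-endpoint shortcut on each edge independently, which hinges precisely on the equality of the four gateway distances forced by the hypothesis, and that this separation holds uniformly for all red--red pairs and not only for the midpoints to which the hypothesis literally refers. Care is also needed to insert auxiliary vertices so that the midpoint hypothesis can be transferred to interior red points, exactly as was done in the preceding lemmas.
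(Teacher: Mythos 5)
Your plan follows the paper's proof essentially step for step: fix a pair of distinct edges, show that every red--red distance decomposes as $d(r_i,s_j)=d(b_1,b_2)-d(r_i,b_1)-d(s_j,b_2)$, then sum over red pairs and over edge pairs, reusing Equation~(\ref{eq:red-blue}) for the offsets. Your derivation of that identity is correct and in fact cleaner than the paper's: the paper writes out a four-case table for $d(r_i,s_j)$ that encodes exactly the same statement, whereas you extract the structural reason for it (the four gateway distances coincide because the midpoints are joined by shortest paths through every endpoint pair, so the minimum over endpoint combinations separates into independent nearest-endpoint shortcuts on each edge). Your remarks about transferring the midpoint hypothesis to interior red points are also handled correctly by this argument.

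The one place where you stay deliberately vague --- ``the offsets contribute a multiple of $\sum_i d(r_i,b_1)$ \dots yielding the stated correction term'' --- is exactly where the claim stands or falls, and an honest completion of your plan does not land on the stated formula. In the double sum over the $(2^k-2)^2$ red--red pairs, each offset $d(r_i,b_1)$ is counted once for every choice of $s_j$, so the offsets contribute $-(2^k-2)\sum_i d(r_i,b_1)-(2^k-2)\sum_j d(s_j,b_2)=-2(2^k-2)\,\alpha\bigl(2^{k-2}-\tfrac12\bigr)$ per pair of edges, not $-2\alpha\bigl(2^{k-2}-\tfrac12\bigr)$; summing over the $\binom{m}{2}$ pairs then gives a correction term carrying an extra factor of $2^k-2$ relative to the right-hand side of the lemma. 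For what it is worth, the paper's own proof makes the same identification, replacing the doubly-summed offsets by $\W(e_1^k\cap\mathcal{R}^k;e_1^k\cap\mathcal{B})+\W(e_2^k\cap\mathcal{R}^k;e_2^k\cap\mathcal{B})$ without the multiplicity $(2^k-2)$, so the discrepancy originates there rather than in your approach; but you should make the coefficient explicit rather than asserting that it ``yields the stated correction term,'' because as written it does not.
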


\begin{proof}
Let $e_1=w_1z_1$ and $e_2=w_2z_2$, and let $b_1,b_2 \in \mathcal{B} $ be their corresponding midpoints.   We follow the same notation as in the proof of Lemma \ref{lem:distinct_edges} where $\{w_1, r_1, \ldots r_{2^{k-1}-1}, b_1,  r_{2^{k-1}}, \ldots \linebreak r_{2^{k}-2}, z_1\}$ and $\{w_2, s_1, \ldots s_{2^{k-1}-1}, b_2, s_{2^{k-1}}, \ldots s_{2^{k}-2}, z_2\}$ are the ordered sequence of vertices in, respectively, $e_1^k$ and $e_2^k$. Thus,
{\small $$d(r_i,s_j)=\left\{\begin{tabular}{lcl}
 $\displaystyle d(b_1,b_2)-(2^{k-1}-i)\frac{\alpha}{2^k}-(2^{k-1}-j)\frac{\alpha}{2^k}$  &  if & $1\leq i,j \leq 2^{k-1}-1$\\ \noalign{\medskip}
  $\displaystyle d(b_1,b_2)-(i-2^{k-1}+1)\frac{\alpha}{2^k}-(j-2^{k-1}+1)\frac{\alpha}{2^k}$  &  if & $2^{k-1}\leq i,j \leq 2^{k}-2$\\ \noalign{\medskip}
    $\displaystyle d(b_1,b_2)-(2^{k-1}-i)\frac{\alpha}{2^k}-(j-2^{k-1}+1)\frac{\alpha}{2^k}$  &  if &  $1\leq i \leq 2^{k-1}-1$, $2^{k-1}\leq j \leq 2^{k}-2$\\ \noalign{\medskip}
     $\displaystyle d(b_1,b_2)-(i-2^{k-1}+1)\frac{\alpha}{2^k}-(2^{k-1}-j)\frac{\alpha}{2^k}$  &  if &  $1\leq j \leq 2^{k-1}-1$, $2^{k-1}\leq i \leq 2^{k}-2$\\ 
\end{tabular}\right.$$}
Hence, we obtain: $$\W(e_1^k\cap \mathcal{R}^k, e_2^k\cap \mathcal{R}^k)=(2^k-2)^2d(b_1,b_2)-\left(\W(e_1^k\cap \mathcal{R}^k, e_1^k\cap \mathcal{B})+ \W(e_2^k\cap \mathcal{R}^k, e_2^k\cap \mathcal{B})\right),$$
which by Equation (\ref{eq:red-blue}) equals $(2^k-2)^2d(b_1,b_2)-2\alpha (2^{k-2}-1/2)$. When considering all pairs of distinct edges, the desired formula is obtained.
\end{proof}

As it was explained before, there is no graph satisfying the conditions of Lemma \ref{lem_lower:distinct_edges}, but it yields, together with Equation (\ref{eq:same_distinct_edges}) and Lemma \ref{lem:same_edge},  a lower bound on $\W(\mathcal{R}^k)$ by considering $\rho={\rm max}\{|e| \, : \, e\in E \}$.
\begin{equation}\label{lower_reds}
\W(\mathcal{R}^k)\geq (2^k-2)^2\W(\mathcal{B})-2\rho {m\choose 2} \left(2^{k-2}-\frac{1}{2}\right)+|E|\left(\frac{2^{2k-1}}{3}-3\cdot 2^{k-2}+\frac{5}{6}\right).
\end{equation}

We apply similar arguments to bound $\W(\mathcal{R}^k; V)$ and $\W(\mathcal{R}^k; \mathcal{B})$. In both cases we distinguish whether the vertices are on the same edge or on distinct edges. For all pairs of distinct edges $e_1$ and $e_2$ (all edges of the same length $\alpha$), we also assume that their midpoints  are connected by shortest paths going through any pair of endpoints of $e_1$ and $e_2$. For $\W(\mathcal{R}^k; V)$ we have:
$$\sum_{r_i\in e_1^k \cap \mathcal{R}^k} d(r_i,v)=(2^{k}-2)d(b_1,v)-2\frac{\alpha}{2^k}(1+2+3+\ldots+(2^{k-1}-1))=(2^{k}-2)d(b,v)-\alpha (2^{k-2}-1/2)$$
where $v\in e_2^k\cap V$ and $b_1\in e_1^k\cap \mathcal{B}$. Further, $\W(e_1^k\cap \mathcal{R}^k; e_1^k\cap V)=\alpha (2^k-2)$.  By considering again $\rho={\rm max}\{|e| \, : \, e\in E \}$ as done for Equation (\ref{lower_reds}), we obtain:
\begin{equation}\label{eq:red_black_lower}
\W(\mathcal{R}^k; V)\geq (2^k-2)(\W(\mathcal{B}; V)-m\rho)-\rho m(n-2)\left(2^{k-2}-\frac{1}{2}\right)+m\rho(2^k-2).
\end{equation}

An analogous process gives $\W(e_1^k\cap \mathcal{R}^k, e_2^k\cap \mathcal{B})=(2^k-2)d(b_1,b_2)-\alpha (2^{k-2}-1/2)$, and together with Equation (\ref{eq:red-blue}) leads to:
\begin{equation}\label{eq:red_blue_lower}
\W(\mathcal{R}^k; \mathcal{B})\geq 2(2^k-2)\W(\mathcal{B})+\left[ |E|- \rho {m\choose 2} \right]\left(2^{k-2}-\frac{1}{2}\right).
\end{equation}

Equation (\ref{eq:red}), and the bounds in (\ref{lower_reds}), (\ref{eq:red_black_lower}), and (\ref{eq:red_blue_lower}) imply:

$$\W(\N^k)\geq\Omega_k(\N, \N^1)-\rho \left(3{m\choose 2} +m(n-2)\right)\left(2^{k-2}-\frac{1}{2}\right).$$ 

The lower bound in Theorem \ref{th:upper_lower_bounds} again follows from Equation (\ref{discrete-wiener}).

\section{Conclusions and future work}



In this work we have presented the first thorough study of the continuous mean distance, a natural graph parameter that has received little attention until now.
From a computational perspective, we presented two different methods to compute the mean distance of a weighted graph in roughly quadratic time in the number of edges.
In addition, we obtained several structural results that provide a deeper understanding of this parameter, and can also be used to compute the mean distance faster for several graph classes. 
Finally, we studied the relation between the discrete mean distance and the continuous counterpart, in order to understand how the iterative subdivision of edges makes the discrete mean distance converge to the continuous one.

We are left with many intriguing questions for future research.
The computational complexity of the continuous mean distance is far from settled.
An important question is for what other graph classes the continuous mean distance can be computed in subquadratic time.
In the case of the discrete mean distance, this was recently shown to be possible for planar graphs~\cite{Cabello2019}, so it is worth studying if similar techniques could be applied to the continuous setting.
If that is not possible, one can still resort to approximation algorithms. For this, it can be useful to understand further the relation between the discrete and the continuous mean distance, since for instance, proving a constant factor relation between them would lead to subquadratic approximation algorithms for planar graphs.


\subsection*{Acknowledgements} 
We are grateful to Julian Pfeifle for proposing the topic of this work, and for many stimulating discussions about it. We also thank the anonymous reviewer for an extremely detailed and constructive review, which has helped to improve the presentation of this work considerably.

\bibliographystyle{plainurl}
\bibliography{refs}

\begin{thebibliography}{10}

\bibitem{AbboudWW16}
A.~Abboud, V.~V. Williams, and J.~R. Wang.
\newblock Approximation and fixed parameter subquadratic algorithms for radius
  and diameter in sparse graphs.
\newblock In R.~Krauthgamer, editor, {\em Proc. Twenty-Seventh Annual
  {ACM-SIAM} Symposium on Discrete Algorithms, {SODA} 2016, Arlington, VA, USA,
  January 10-12, 2016}, pages 377--391. {SIAM}, 2016.

\bibitem{B76}
F.~Buckley.
\newblock Mean distance in line graphs.
\newblock {\em Congr. Numer.}, 32:153--162, 1976.

\bibitem{BS81}
F.~Buckley and L.~Superville.
\newblock Distance distributions and mean distance problems.
\newblock In {\em Proc. 3rd Caribbean Conference on Combinatorics and
  Computing}, pages 67--76, 1981.

\bibitem{Cabello2019}
S.~Cabello.
\newblock Subquadratic algorithms for the diameter and the sum of pairwise
  distances in planar graphs.
\newblock {\em ACM Trans. Algorithms}, 15(2):21:1--21:38, 2018.

\bibitem{CabelloK09}
S.~Cabello and C.~Knauer.
\newblock Algorithms for graphs of bounded treewidth via orthogonal range
  searching.
\newblock {\em Comput. Geom.}, 42(9):815--824, 2009.

\bibitem{D12}
P.~Dankelmann.
\newblock Average distance in weighted graphs.
\newblock {\em Discrete Math.}, 312(1):12--20, 2012.

\bibitem{dgms-09}
P.~Dankelmann, I.~Gutman, S.~Mukwembi, and H.~C. Swart.
\newblock The edge-\uppercase{W}iener index of a graph.
\newblock {\em Discrete Math.}, 309:3452--3457, 2009.

\bibitem{DG77}
J.~K. Doyle and J.~E. Graver.
\newblock Mean distance in a graph.
\newblock {\em Discrete Math.}, 17:147--154, 1977.

\bibitem{DG82mds}
J.~K. Doyle and J.~E. Graver.
\newblock Mean distance for shapes.
\newblock {\em J. Graph Theory}, 6(4):453--471, 1982.

\bibitem{DG82srmd}
J.~K. Doyle and J.~E. Graver.
\newblock A summary of results on mean distance in shapes.
\newblock {\em Environment and Planning B: Planning and Design}, 9:177--179, 01
  1982.

\bibitem{EJS76}
R.~C. Entringer, D.~E. Jackson, and D.~A. Snyder.
\newblock Distance in graphs.
\newblock {\em Czech. Math. J.}, 26:283--296, 1976.

\bibitem{Favaron89}
O.~Favaron, M.~Kouider, and M.~Mahéo.
\newblock Edge-vulnerability and mean distance.
\newblock {\em Networks}, 19(5):493--504, 1989.

\bibitem{G-97}
I.~Gutman.
\newblock A property of the \uppercase{W}iener number and its modifications.
\newblock {\em Indian J. Chem.}, 36(A):128--132, 1997.

\bibitem{HENZINGER1997}
M.~R. Henzinger, P.~Klein, S.~Rao, and S.~Subramanian.
\newblock Faster shortest-path algorithms for planar graphs.
\newblock {\em J. Comput. Syst. Sci.}, 55(1):3--23, 1997.

\bibitem{Klamkin1968}
M.~S. Klamkin.
\newblock On the volume of a class of truncated prisms and some related
  centroid problems.
\newblock {\em Mathematics Magazine}, 41(4):175--181, 1968.

\bibitem{KnorST16}
M.~Knor, R.~Skrekovski, and A.~Tepeh.
\newblock Mathematical aspects of \uppercase{W}iener index.
\newblock {\em Ars Math. Contemp.}, 11(2):327--352, 2016.

\bibitem{Kouider97}
M.~Kouider and P.~Winkler.
\newblock Mean distance and minimum degree.
\newblock {\em J. Graph Theory}, 25(1):95--99, 1997.

\bibitem{FLLW-00}
Y.~F. Lan and Y.~L. Wang.
\newblock An optimal algorithm for solving the 1-median problem on weighted
  $4$-cactus graphs.
\newblock {\em Eur. J. Oper. Res.}, 122(3):602--610, 2000.

\bibitem{MS71}
L.~March and P.~Steadman.
\newblock {\em The geometry of environment}.
\newblock Royal Institute of British Architects, London, 1971.

\bibitem{Merris89}
R.~Merris.
\newblock An edge version of the matrix-tree theorem and the \uppercase{W}iener
  index.
\newblock {\em Linear Multilinear Algebra}, 25(4):291--296, 1989.

\bibitem{M91}
B.~Mohar.
\newblock Eigenvalues, diameter, and mean distance in graphs.
\newblock {\em Graphs Combin.}, 7:53--64, 1991.

\bibitem{ng1966finite}
C.~P. Ng and H.~H. Teh.
\newblock On finite graphs of diameter 2.
\newblock {\em Nanta Math.}, 1(72-75):67, 1966.

\bibitem{Nikolic95}
S.~Nikolić, N.~Trinajstić, and Z.~Mihalić.
\newblock The \uppercase{W}iener index: Development and applications.
\newblock {\em Croat. Chem. Acta}, 68:105--129, 1995.

\bibitem{okabe}
Atsuyuki Okabe, Barry Boots, and Kokichi Sugihara.
\newblock {\em Spatial Tessellations: Concepts and Applications of Voronoi
  Diagrams}.
\newblock Wiley Series in Probability and Mathematical Statistics. Wiley, 1992.

\bibitem{Otte02}
E.~Otte and R.~Rousseau.
\newblock Social network analysis: A powerful strategy, also for the
  information sciences.
\newblock {\em J. Inf. Sci.}, 28:441--453, 12 2002.

\bibitem{pr-sparwug-05}
S.~Pettie and V.~Ramachandran.
\newblock A shortest path algorithm for real-weighted undirected graphs.
\newblock {\em SIAM Journal on Computing}, 34(6):1398--1431, 2005.
\newblock URL: \url{https://doi.org/10.1137/S0097539702419650}, \href
  {http://arxiv.org/abs/https://doi.org/10.1137/S0097539702419650}
  {\path{arXiv:https://doi.org/10.1137/S0097539702419650}}, \href
  {http://dx.doi.org/10.1137/S0097539702419650}
  {\path{doi:10.1137/S0097539702419650}}.

\bibitem{P84}
J.~Plesn{\'i}k.
\newblock On the sum of all distances in a graph or a digraph.
\newblock {\em J. Graph Theory}, 8(1):1--21, 1984.

\bibitem{z-19}
P.~Z. Pletersek.
\newblock The edge-\uppercase{W}iener index and the
  edge-hyper-\uppercase{W}iener index of phenylenes.
\newblock {\em Discrete Appl. Math.}, 255:326--333, 2019.

\bibitem{RodittyW13}
L.~Roditty and V.~V. Williams.
\newblock Fast approximation algorithms for the diameter and radius of sparse
  graphs.
\newblock In D.~Boneh, T.~Roughgarden, and J.~Feigenbaum, editors, {\em
  Symposium on Theory of Computing Conference, STOC'13, Palo Alto, CA, USA,
  June 1-4, 2013}, pages 515--524. {ACM}, 2013.

\bibitem{Rodriguez99}
J.~A. Rodr{\'i}guez and J.~L.~A. Yebra.
\newblock Bounding the diameter and the mean distance of a graph from its
  eigenvalues: Laplacian versus adjacency matrix methods.
\newblock {\em Discrete Math.}, 196(1–3), 1999.

\bibitem{Singh2021}
Pradeep Singh and Vijay~Kumar Bhat.
\newblock Adjacency matrix and \uppercase{W}iener index of zero divisor graph
  {$\varGamma (Z_n)$}.
\newblock {\em Journal of Applied Mathematics and Computing}, 66(1):717--732,
  2021.
\newblock URL: \url{https://doi.org/10.1007/s12190-020-01460-2}, \href
  {http://dx.doi.org/10.1007/s12190-020-01460-2}
  {\path{doi:10.1007/s12190-020-01460-2}}.

\bibitem{T-99}
M.~Thorup.
\newblock Undirected single-source shortest paths with positive integer weights
  in linear time.
\newblock {\em J. ACM}, 46(3):362--394, 1999.

\bibitem{olts1991TransmissionIG}
Ľ. \v{S}olt{\'e}s.
\newblock Transmission in graphs: A bound and vertex removing.
\newblock {\em Math. Slovaca}, 41:11--16, 1991.

\bibitem{mean_dist_M}
E.~W. Weisstein.
\newblock Mean distance. {From MathWorld---A Wolfram Web Resource}.
\newblock Last visited on 16/11/2020.
\newblock URL: \url{https://mathworld.wolfram.com/MeanDistance.html}.

\bibitem{Wiener47}
H.~Wiener.
\newblock Structural determination of paraffin boiling points.
\newblock {\em J. Am. Chem. Soc.}, 69(1):17--20, 1947.

\bibitem{ZZ05}
B.~Zmazec and J.~Zerovnik.
\newblock Estimating the traffic of weighted cactus networks in linear time.
\newblock In {\em Proc. ninth International Conference on Information
  Visualisation}, pages 1--6. {SIAM}, 2005.

\end{thebibliography}

\section*{Statements and Declarations}
This work was supported by project PID2019-104129GB-I00/ AEI/ 10.13039/501100011033.
R.~S.\ was also supported by project Gen.\ Cat.\ 2017SGR1640.
A.M. was also supported by project BFU2016-74975-P.

\noindent The authors have no relevant financial or non-financial interests to disclose.

\noindent All authors contributed to the manuscript equally.

\paragraph{Data availability} 
 Data sharing not applicable to this article as no datasets were generated or analyzed during the current study.
 
 \paragraph{Conflicts of interest} 
No conflicts of interest reported for this work.

\end{document}